\def\CH{{\cal H}}
\theoremstyle{definition}
\newtheorem{definition}{Definition}[section]
\newtheorem{theorem}{Theorem}[section]
\newtheorem{lemma}{Lemma}[section]
\newtheorem{proposition}{Proposition}[section]
\newtheorem{remark}{Remark}[section]
\newtheorem{corollary}{Corollary}[section]
\numberwithin{equation}{section}
\newcommand{\CC}[1]{\textcolor{black}{(CC) #1}}
\DeclareMathOperator{\Tr}{Tr}
\newcommand{\RS}[1]{\textcolor{red}{(RS) #1}}
\begin{document}

\title{Quantum Lego and XP Stabilizer Codes}
\author{Ruohan Shen}
 \email{rhshen@mit.edu}
  \affiliation{Department of Physics, Tsinghua University, Beijing 100084, China}
 \affiliation{Institute for Quantum Information and Matter
  California Institute of Technology,
  1200 E California Blvd, Pasadena, CA 91125, USA.}

\author{Yixu Wang}
 \email{wangyixu@mail.tsinghua.edu.cn}
 \affiliation{Institute for Advanced Study, Tsinghua University, Beijing 100084, China}
\author{ChunJun Cao}
 \email{cjcao@vt.edu}
 \affiliation{Institute for Quantum Information and Matter
  California Institute of Technology,
  1200 E California Blvd, Pasadena, CA 91125, USA.}
 \affiliation{Department of Physics, Virginia Tech, Blacksburg, VA, 24061, USA}

\begin{abstract}
We apply the recent graphical framework of ``Quantum Lego'' to XP stabilizer codes where the stabilizer group is generally non-Abelian. We show that the idea of operator matching continues to hold for such codes and is sufficient for generating all their XP symmetries provided the resulting code is XP. We provide an efficient classical algorithm for tracking these symmetries under tensor contraction or conjoining. This constitutes a partial extension of the algorithm implied by the Gottesman-Knill theorem beyond Pauli stabilizer states and Clifford operations. Because conjoining transformations generate quantum operations that are universal, the XP symmetries obtained from these algorithms do not uniquely identify the resulting tensors in general. Using this extended framework, we provide examples of novel XP stabilizer codes with a higher distance than existing non-trivial XP regular codes and a $[[8,1,2]]$ Pauli stabilizer code with a fault-tolerant $T$ gate. For XP regular codes,  we also construct a tensor-network-based maximum likelihood decoder for any independently and identically distributed single qubit error channel using weight enumerators.
\end{abstract}

\maketitle

\section{Introduction}
The stabilizer formalism~\cite{Gottesman1997} or, more precisely, the Pauli stabilizer formalism (PSF), has made a profound impact in numerous areas of quantum computation and communication, such as quantum error correction, simulation, and quantum networks. While powerful in providing efficient classical simulations of stabilizer states and Clifford processes, they are limited in describing the full range of operations that are necessary for quantum advantage. For example, it is known that novel quantum error correcting codes (QECCs), beyond those described by the Pauli stabilizer formalism, can be the key to improving universal fault-tolerant quantum computation, in constructing stable self-correcting quantum memories, and in understanding quantum many-body systems with non-Abelian symmetries, e.g.~\cite{Kitaev:1997wr,wootton,wenlevin,QMreview}. Therefore, explorations beyond PSF are pivotal as we move toward fault-tolerance.

Several notable extensions beyond PSF build on its success by allowing the underlying stabilizer subgroup to be non-Abelian~\cite{Ni2015,Webster2022}. Among them is the XP formalism (XPF)~\cite{Webster2022}, which contains the PSF as a special case; and in general considers stabilizer states and codes defined by the non-Abelian subgroups of the XP group. In particular, pioneering work by Webster et al.~\cite{Webster2022} discusses the construction of XP stabilizer codes, their connection with weighted hypergraph states and their classical simulability. Furthermore, the formalism provides a more natural language to identify transversal non-Clifford gates~\cite{Webster2023} which can be useful for the design of novel magic state distillation protocols. However, much remains to be explored for the XP stabilizer codes compared to its Pauli stabilizer counterparts. This includes many key properties of XP stabilizer states as well as an efficient description of their transformation under various quantum processes. Before one can convincingly demonstrate any advantage of XP codes for fault-tolerance, we also need to lay down some necessary groundwork, such as a more comprehensive characterization of the code properties, an explicit construction of XP decoders, and a larger collection of non-trivial XP stabilizer codes that are distinct from Pauli stabilizer codes. Like all quantum codes, it is also necessary to improve its scalability and flexibility when engineering larger codes with more desirable error correction properties. 

In this work, we tackle the aforementioned challenges by applying the nascent Quantum Lego (QL) formalism~\cite{Cao2022} to XP codes. In doing so, we rephrase XP regular codes as Quantum Lego blocks and provide novel insights concerning XP codes. {\color{black}XP regular codes are XP stabilizer codes that are structurally similar to CSS codes \cite{Webster2022} and encode qubits in qubits, as opposed to non-regular codes which are similar to codeword stabilized quantum codes~\cite{Cross_2009} and can encode more general structures on qubit degrees of freedom.} We then discuss how these XP Legos can be combined to create larger codes and especially how they transform under tensor contractions. Writing these codes in a check matrix representation, we provide an efficient algorithm to track its XP symmetries under operator matching, or equivalently, how the state/code transforms under ``Lego fusion'' or conjoining. \textcolor{black}{XP symmetry is a subgroup of the XP group that leaves the encoding isometry invariant. By keeping track of XP symmetries, it is efficient to describe XP codes under various operations.} Because such conjoining operations over XP Legos can generate universal gate sets and measurements, they further induce an algorithm for describing how XP states transform under quantum processes allowed in the XPF, which include the Clifford operations and those from the higher Clifford hierarchy. However, this classically efficient description is not without limitations ---  classical algorithms cannot describe universal quantum processes in polynomial time unless $BQP\subset P$. Indeed, the contraction of XP Legos does not remain an XP stabilizer state in general --- such efficient representations over finite rings are only guaranteed to fix the output state uniquely if it is also XP. When the combined state is not XP, then operator matching identifies a set of states with those matching symmetries.

By leveraging the recent developments in quantum tensor weight enumerators~\cite{Cao2022_2,Cao2023}, we also provide a method to compute the enumerator for any XP code constructed from QL, which contains key properties of the code, such as its (biased) distance.  In particular, we provide an explicit maximum-likelihood (ML) decoder for XP regular codes under any independently and identically distributed (i.i.d.) single site error channel. Because any XP code can be built from QL with the appropriate choice of Legos, this constitutes a tensor-network-based recipe for constructing an ML decoder for the class of XP regular codes. 

Finally, note that the general problem of constructing interesting XP codes can be difficult, partly due to the prohibitive cost of the search. Using QL, we drastically reduce this cost by generating novel codes from fusing XP Lego blocks and providing some working examples under this framework. Note that QL is also powerful in designing codes with target transversal single qubit gates starting from suitable Lego blocks. As such, we produce a $[[7,1,1]]$ code with transversal logical $T$ that can be converted into a $[[8,1,2]]$ code with fault-tolerant logical $T$. The latter constitutes the smallest example with such a fault-tolerant gate to the best of our knowledge and may be used for magic state distillation. In addition, although QL can accommodate any quantum codes in principle, all known QL examples are constructions involving Pauli stabilizer codes. Hence our work also provides the first instance of creating non-(Pauli)-stabilizer codes from QL.

In Sec.~\ref{Section2}, we review the basics of the XPF and discuss some novel results that we will use for this work. In Sec.~\ref{Section3}, we discuss the conjoining of XP Legos and their properties. In Sec.~\ref{Section4} we construct optimal decoders for regular XP codes and discuss how enumerators may be used to analyze their properties. This includes, for example, an optimal decoder for the double semion model~\cite{Levin2005} under any i.i.d. single site error.  We then provide examples of XP codes that have not been previously discovered in Sec.~\ref{Section5}, including a novel $[[7,1,3]]$ XP code that is not local unitary equivalent to the Steane code and a Pauli stabilizer $[[8,1,2]]$ code with fault-tolerant logical $T$ gate. Finally, we summarize our findings in Sec.~\ref{Section6} and conclude with a few forward-looking remarks. The text is structured in such a way that readers can comprehend the general idea without extensive background on XPF or QL. In the Appendices, we provide novel results, a more in-depth discussion, and proofs for theorems in the main text. 

\section{The XP Stabilizer Formalism}\label{Section2}
The XP formalism (XPF) introduced by \cite{Webster2022} revolves around the XP group,
    $\mathcal{G}_{N}=\langle \omega I,X,P\rangle$, 
 which captures a set of unitary, but not necessarily Hermitian, operators acting on a qubit. Its generators are
\begin{align}
    \omega I:&=\begin{pmatrix}
        e^{i\pi/N} & 0 \\
        0 & e^{i\pi/N} 
    \end{pmatrix},\\\nonumber
    P:&=\begin{pmatrix}
    1 & 0\\
    0 & \omega^2
    \end{pmatrix},~
    X:=\begin{pmatrix}
        0 & 1 \\
        1 & 0
    \end{pmatrix}
\end{align}
where the integer parameter $N\geq 2$ is known as the \emph{precision} of the group\footnote{In this work, we do not consider cases where $N$ is not an integer, even though such an extension should be possible.}. Structurally, the XP group for a single qubit is related to the dihedral group $D_{N}$ of order $2N$ such that $\mathcal{G}_N/\mathcal{Z}(\mathcal{G}_N)\cong D_{N}$. Here $\mathcal{Z}(\mathcal{G}_N) =\langle \omega I\rangle$ represents the center of the group. It is clear that the phase operator $P$ generates the rotation and $X$ generates reflections. 

The XP group $\mathcal{G}_{N}^{n}=\mathcal{G}_{N}^{\otimes n}$ over $n$ qubits can be obtained by taking its $n$-fold tensor product. An Abelian subgroup generated by XP strings consisting of the tensor product of $\omega I$ and $P$ is sometimes known as the \emph{diagonal subgroup} of $\mathcal{G}_N$ and its elements the \emph{diagonal operators}. Notice that for $N=2$, the XP group reduces to the Pauli group $\mathcal{P}^n$, while for $N=4$ it reduces to the XS group \cite{Ni2015}. Like the Pauli group, the XP group is clearly non-Abelian; but instead of having group elements that commute up to a global phase like the Pauli group and its qudit extensions~\cite{Gheorghiu2014}, the XP operators commute up to a diagonal operator: $PX=\omega^2 XP^{-1}=(XP)(\omega^2 P^{-2})$.

Like the Pauli group, it is also convenient to represent the elements of the XP group as a vector $\mathbf{u}=(\mathbf{x}|\mathbf{z}|p)\in  \mathbb{Z}_2^n \times \mathbb{Z}_N^n \times \mathbb{Z}_{2N}$, then the group element corresponds to $\mathbf{u}$ is
\begin{equation}
    XP_N(\mathbf{u})=\omega^p \bigotimes_{0\leq i <n} X^{\mathbf{x}[i]} P^{\mathbf{z}[i]}.
\end{equation}
For example, the so-called \emph{anti-symmetric operator} introduced by \cite{Webster2022} is
\begin{equation}
    D_N(\mathbf{z})=XP_N(\mathbf{0}|-\mathbf{z}|\sum_i \mathbf{z}[i])
\end{equation}
so the multiplication of XP group elements can be written as 
\begin{equation}
    XP_N(\mathbf{u}_1) XP_N(\mathbf{u}_2) = XP_N(\mathbf{u}_1+\mathbf{u}_2)D_N(2\mathbf{x}_2\mathbf{z}_1).
    \label{eqn:rowop}
\end{equation}
Here both the addition and multiplication are entry-wise.

Any subgroup of the XP group $\mathcal{S}\subset \mathcal{G}_N^{n}$ can be expressed as a check matrix where the generators of the group correspond to  the rows of this matrix and group multiplication is mapped to the special row operation shown in (\ref{eqn:rowop}). Similar to PSF, the three sets of columns of the check matrix respectively describe the power of the $X$ operators, the power of $P$ operators, and finally the global phase of the stabilizer. However, note also the distinction from PSF as the $\mathbf{z}$ section of the matrix no longer lives in a binary field.

Such a check matrix can be rearranged into the \emph{canonical form} where the rows correspond to the \emph{canonical generators}. These generators are either the diagonal operators $\mathbf{S}_Z=\{S_{Z_i}\}$, for which $\mathbf{x}=\mathbf{0}$, or non-diagonal operators $\mathbf{S}_X=\{S_{X_j}\}$ where $\mathbf{x}\ne \mathbf{0}$. In this form, the submatrices formed by $\mathbf{S}_X,\mathbf{S}_Z$ have the following properties:
\begin{itemize}
    \item The X-part of $\mathbf{S}_X$ is in the Reduced Row Echelon Form (RREF).
    \item The P-part of $\mathbf{S}_Z$, with the phases, is in the Howell Form, which can be seen as a generalization of RREF to rings.
\end{itemize}

Any operator $g\in \mathcal{S}$ can be expressed in the form of $g=\Pi_i{S}_{X_i}^{a_i} \Pi_j {S}_{Z_j}^{b_j}$, for some $i,j$ and $a_i\in \mathbb{Z}_2,$ $b_j \in \mathbb{Z}_N$.

With the subgroup $\mathcal{S}$, one can define a Hilbert subspace $\mathcal{C}=\{|\psi\rangle: \forall g\in \mathcal{S}, g|\psi\rangle =|\psi\rangle\}$. If $\mathcal{C}$ is not empty, then it identifies the code subspace of an XP stabilizer code. Note that the stabilizer group $\mathcal{S}$ is generally non-Abelian.
As in the PSF, the projector onto the code subspace can be represented in a compact form in terms of the stabilizer elements.
\begin{proposition}\label{prop:2.1}
    Let $\mathcal{C}$ be an XP stabilizer code with canonical generators $\mathbf{S}_X, \mathbf{S}_Z$, then projector onto $\mathcal{C}$ can be written as
\begin{equation}
    \begin{aligned}
    \Pi_{\mathcal{C}}&=\frac{1}{2^{n_X}N^{n_Z}}\sum_{\substack{i=1,...,n_X\\m_i=\{0,1\}}}\sum_{\substack{j=1,...,n_Z\\l_j=\{0,..,N-1\}}}\prod_i S_{X_i}^{m_i}\prod_{j} S_{Z_j}^{l_j}\\&=\prod_{i}^{n_X}\left(\frac 12\sum_{m_i=0}^{1}S_{X_i}^{m_i}\right)\prod_{j}^{n_Z}\left(\frac 1 N\sum_{l_j=0}^{N-1}S_{Z_j}^{l_j}\right)
    \end{aligned}
    \label{eqn:codeproj}
\end{equation}
where $n_X=|\mathbf{S}_X|, n_Z=|\mathbf{S}_Z|$.
\end{proposition}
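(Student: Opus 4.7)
The plan is to prove \eqref{eqn:codeproj} in two stages: first establish the single-sum form on the first line, then factor it into the product-of-sums form on the second line. The starting point is the classical representation-theoretic identity that, for any finite group $\mathcal{S}$ of unitary operators, $\frac{1}{|\mathcal{S}|}\sum_{g\in\mathcal{S}} g$ is the orthogonal projector onto the common $+1$ eigenspace. Hermiticity follows from $\{g^\dagger:g\in\mathcal{S}\}=\{g^{-1}:g\in\mathcal{S}\}=\mathcal{S}$, and idempotency from the group rearrangement $h\sum_g g=\sum_g g$ valid for every $h\in\mathcal{S}$ --- neither step requires $\mathcal{S}$ to be Abelian. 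Since $\mathcal{C}$ is by definition the common $+1$ eigenspace of $\mathcal{S}$, this gives $\Pi_{\mathcal{C}}=\frac{1}{|\mathcal{S}|}\sum_{g\in\mathcal{S}}g$.

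The first equality in \eqref{eqn:codeproj} then reduces to showing that the canonical generators give a bijective parametrization of $\mathcal{S}$: every $g\in\mathcal{S}$ admits a unique ordered decomposition $g=\prod_i S_{X_i}^{m_i}\prod_j S_{Z_j}^{l_j}$ with $m_i\in\{0,1\}$ and $l_j\in\{0,\ldots,N-1\}$, so that $|\mathcal{S}|=2^{n_X}N^{n_Z}$ and the $2^{n_X}N^{n_Z}$ ordered products enumerate $\mathcal{S}$ exactly once. For surjectivity, since $\mathcal{S}=\langle \mathbf{S}_X\cup\mathbf{S}_Z\rangle$, the X-part of any $g$ is a $\mathbb{Z}_2$-linear combination of the X-parts of the $S_{X_i}$; fixing $\{m_i\}$ to match that X-part reduces $g$ to a diagonal element of $\langle\mathbf{S}_Z\rangle$. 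For injectivity --- the technical heart --- the RREF structure on the X-part of $\mathbf{S}_X$ forces distinct $\{m_i\}$ to yield distinct X-parts (using that the ``commutator carry'' $D_N(2\mathbf{x}_2\mathbf{z}_1)$ appearing in \eqref{eqn:rowop} is diagonal and therefore does not perturb the X-sector), while the Howell form on the $P$-part/phase of $\mathbf{S}_Z$ guarantees a unique $\mathbb{Z}_N$-expansion of each diagonal group element in terms of the $S_{Z_j}$. Replacing $\sum_{g\in\mathcal{S}}$ by the corresponding sum over $\{m_i\},\{l_j\}$ yields the first displayed line.

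The second equality is then a purely formal distribution: expanding $\prod_i\left(\frac{1}{2}\sum_{m_i}S_{X_i}^{m_i}\right)\prod_j\left(\frac{1}{N}\sum_{l_j}S_{Z_j}^{l_j}\right)$ produces each ordered monomial $\prod_i S_{X_i}^{m_i}\prod_j S_{Z_j}^{l_j}$ exactly once with prefactor $1/(2^{n_X}N^{n_Z})$, matching the first line term-by-term. Crucially, this step needs no commutativity between generators; it only requires that the order of factors inside each summand agrees with the ordering used in the first line. I expect the main obstacle to be the injectivity argument above, which demands careful bookkeeping of the diagonal ``carries'' produced by \eqref{eqn:rowop} against the RREF and Howell structural conditions built into the canonical form --- everything else is either a standard representation-theoretic fact or a one-line distributivity check.
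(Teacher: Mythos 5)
Your overall strategy coincides with the paper's: first show that the normalized group sum $\frac{1}{|\mathcal{S}|}\sum_{g\in\mathcal{S}}g$ is the orthogonal projector onto the common $+1$ eigenspace (the paper does this in Lemma~\ref{lem:projector} and Lemma~\ref{lem:codeprojector} of App.~\ref{app:xpproperty}), then rewrite that group average as the double sum over canonical-generator exponents (a step the paper outsources to Property~1 of Appendix~B of Webster et al.). The first stage of your argument is sound. The gap is in the second stage, precisely in the injectivity claim you single out as the ``technical heart.'' It is not true in general that the $2^{n_X}N^{n_Z}$ ordered products enumerate $\mathcal{S}$ exactly once, nor that $|\mathcal{S}|=2^{n_X}N^{n_Z}$: a canonical diagonal generator $S_{Z_j}$ may have order equal to a \emph{proper divisor} $m$ of $N$ (e.g.\ $S_{Z_j}=Z=P^2$ at precision $N=4$ has order $2$), in which case $l_j\mapsto S_{Z_j}^{l_j}$ is $N/m$-to-one. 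The Howell form canonicalizes the span of the $P$-part rows; it does not force each generator to have order $N$, so it does not deliver the ``unique $\mathbb{Z}_N$-expansion'' you invoke. The paper explicitly flags this very point: the degree of a given $S_{Z_j}$ might be a proper divisor of $N$, but this does not change the result of the sum.

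The repair is small but should be made explicit, because as written your derivation of the first displayed line rests on a false premise. You do not need bijectivity, only that every element of $\mathcal{S}$ is hit the same number of times by the parametrization. Since diagonal operators commute, $\mathbf{l}\mapsto\prod_j S_{Z_j}^{l_j}$ is a surjective group homomorphism from $\mathbb{Z}_N^{n_Z}$ onto the diagonal subgroup $\langle\mathbf{S}_Z\rangle$, so every fiber is a coset of its kernel and has cardinality $N^{n_Z}/|\langle\mathbf{S}_Z\rangle|$. Combined with your (correct) RREF argument that distinct $\{m_i\}$ yield distinct X-parts, and with the canonical-form fact that the diagonal elements of $\mathcal{S}$ are exactly $\langle\mathbf{S}_Z\rangle$, every $g\in\mathcal{S}$ acquires exactly $2^{n_X}N^{n_Z}/|\mathcal{S}|$ preimages, and the normalized sum over tuples equals the group average. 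With that substitution your argument goes through; the final distributivity step is unproblematic.
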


Recall that in PSF, the stabilizer group (with phases) uniquely defines the Pauli stabilizer code and vice versa. Interestingly, this is no longer true in XPF, where generally multiple stabilizer groups can share the same code subspace $\mathcal{C}$. Consider a simple example where $\mathcal{C}=\{|0\rangle\}$ fixed by $\mathcal{S}_1=\langle Z\rangle$. However, for any even $N>2$, we realize that the group $\mathcal{S}_{\rm LID}=\langle P\rangle>\mathcal{S}_1$ also stabilizes $|0\rangle$. In fact, any non-trivial subgroup of $\mathcal{S}_{\rm LID}$ does so. Intuitively, this is because the entire XP symmetry of a code is often unnecessary in identifying $\mathcal{C}$. To resolve this ambiguity, it is often convenient to track the unique subgroup $\mathcal{S}_{\rm LID}$ that captures all of the XP symmetries of the code. This group is known as the \emph{logical identity group (LID)}, which is a group that contains all stabilizer groups of $\mathcal{C}$ as subgroups at any fixed precision.

One may also recall that an $[[n,k]]$ Pauli stabilizer code has $n-k$ stabilizer generators and wonder whether the same holds for XP codes. Although this is not true in general, we find that a version of this relation can be preserved for a subclass of XP codes known as the \emph{regular XP codes} when the precision is a power of 2. Let $\mathbf{L}_X$ be the set of  non-diagonal logical generators  of an XP-regular code.

\begin{theorem}\label{counting theorem}
    An $n$-qubit XP-regular code $\mathcal{C}$ with precision $N=2^t$ for some positive integer $t$ must have $|\mathbf{S}_X|+|\mathbf{L}_X|+|\mathbf{S}_Z|=n$.
\end{theorem}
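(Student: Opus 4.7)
The plan is to compute the code dimension $d_{\mathcal{C}} = \mathrm{Tr}(\Pi_{\mathcal{C}})$ using the projector of Proposition~\ref{prop:2.1}, to show it equals $2^{n-|\mathbf{S}_X|-|\mathbf{S}_Z|}$, and then to invoke regularity (which asserts $d_{\mathcal{C}} = 2^{|\mathbf{L}_X|}$, since $\mathcal{C}$ encodes qubits into qubits) to conclude $|\mathbf{L}_X| = n - |\mathbf{S}_X| - |\mathbf{S}_Z|$.

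First I would prune the expanded sum in \eqref{eqn:codeproj}. A direct single-qubit computation gives $\mathrm{Tr}(XP^z) = 0$, so any XP operator with non-zero $X$-part is traceless. Hence the term $\prod_i S_{X_i}^{m_i}\prod_j S_{Z_j}^{l_j}$ contributes only when its total $X$-part $\sum_i m_i\,\mathbf{x}(S_{X_i}) \bmod 2$ vanishes; the canonical form puts the rows $\{\mathbf{x}(S_{X_i})\}$ in RREF, so linear independence over $\mathbb{Z}_2$ forces $m_i = 0$ for all $i$. Only the diagonal sum $\sum_{l}\mathrm{Tr}(\prod_j S_{Z_j}^{l_j})$ survives, and because diagonal operators commute it may be rewritten as
\[
\frac{1}{N^{n_Z}}\sum_{l}\prod_j S_{Z_j}^{l_j}
=\prod_j\left(\frac{1}{N}\sum_{l_j=0}^{N-1}S_{Z_j}^{l_j}\right)
=\Pi_{V_Z},
\]
the projector onto the joint $+1$ eigenspace $V_Z$ of $\mathbf{S}_Z$. (This uses $S_{Z_j}^N = I$, which follows from each phase exponent being even --- otherwise no $|x\rangle$ would satisfy $S_{Z_j}|x\rangle = |x\rangle$ and the code would be empty.) Therefore $d_{\mathcal{C}} = \dim V_Z / 2^{n_X}$, and the theorem reduces to proving $\dim V_Z = 2^{n-n_Z}$.

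The space $V_Z$ is spanned by computational basis vectors $|x\rangle$ with $x \in \{0,1\}^n$ satisfying the $n_Z$ equations $\langle \mathbf{z}(S_{Z_j}), x\rangle \equiv -s_j \pmod{N}$, where $2s_j$ is the phase exponent of $S_{Z_j}$. These are $\mathbb{Z}_N$-linear constraints on binary variables, and in general the solution count is not $2^{n-n_Z}$ --- e.g.\ $S_Z = P\otimes P$ with $N = 4$ yields only $(0,0)$ rather than the expected $2^{2-1} = 2$. Here both hypotheses enter crucially: $N=2^t$ makes the Howell-form pivots of the P-part powers of two dividing $N$, and regularity of $\mathcal{C}$ rules out Howell rows whose entries and phase conspire to cut $\{0,1\}^n$ by a factor other than $2$. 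I would argue by induction on the Howell rows: each row, after eliminating contributions from later pivots, should reduce to a $\mathbb{Z}_2$ equation in its pivot column that admits a solution, thereby halving the dimension.

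The main obstacle is precisely this structural claim --- that in a regular XP code with $N = 2^t$, each Howell-form row of the P-part of $\mathbf{S}_Z$ (together with its phase) cuts $\dim V_Z$ by exactly a factor of two. Establishing it requires unpacking the regularity hypothesis to forbid both ``fractional'' Howell rows (where the pivot-column constraint would be a proper $\mathbb{Z}_{2^{t-a}}$-condition rather than a $\mathbb{Z}_2$-condition) and phase-induced incompatibilities that would leave the eigenspace empty. Once this structural lemma is proved, $\dim V_Z = 2^{n-n_Z}$ follows, $d_{\mathcal{C}} = 2^{n-n_X-n_Z}$, and matching with $d_{\mathcal{C}} = 2^{|\mathbf{L}_X|}$ yields the desired equality $|\mathbf{S}_X| + |\mathbf{L}_X| + |\mathbf{S}_Z| = n$.
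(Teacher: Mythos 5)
Your setup is sound: killing the non-diagonal terms in the trace of the projector from Proposition~\ref{prop:2.1} and reducing the theorem to $\dim V_Z = 2^{n-n_Z}$ is the right first move, and your $P\otimes P$ example correctly diagnoses where the difficulty lies. But the proof has a genuine gap, and you flag it yourself: the ``structural claim'' that each Howell-form row of the P-part of $\mathbf{S}_Z$ (with its phase) cuts $\dim V_Z$ by exactly a factor of two is the entire content of the theorem, and ``I would argue by induction on the Howell rows'' is not an argument. Two things are missing. First, you nowhere use that $\mathbf{S}_Z$ is the \emph{maximal} diagonal subgroup of the logical identity group rather than an arbitrary diagonal stabilizing set --- your own counterexample $P\otimes P$ at $N=4$ stabilizes $|00\rangle$ with one generator cutting the space by a factor of four, and what rules this out is precisely that the LID of $|00\rangle$ is $\langle P_1,P_2\rangle$ with \emph{two} diagonal generators; any correct proof of the halving property must invoke this maximality. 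Second, regularity is a statement about the orbit structure of the Z-support ($|E_q|=1$, so $E$ is a single coset of a binary linear code), and translating that into control over the Howell pivots and phases of $\mathbf{S}_Z$ is exactly the nontrivial step; you gesture at it but do not carry it out.

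The paper closes the gap by a detour you may find instructive. It first proves $|\mathbf{S}_X|+|\mathbf{L}_X|+|\mathbf{R}_Z|=n$ for \emph{any} precision, where $\mathbf{R}_Z$ is the diagonal Pauli stabilizer group of the Z-support $E$: regularity makes $E$ an affine binary code, the group projector of $\langle\mathbf{S}_Z\rangle$ equals that of $\langle\mathbf{R}_Z\rangle$, and standard Pauli counting gives $\dim V_Z = 2^{n-|\mathbf{R}_Z|}$. The hypothesis $N=2^t$ enters only in the separate Theorem~\ref{SZ=RZ}, which shows $|\mathbf{S}_Z|=|\mathbf{R}_Z|$: the inequality $|\mathbf{R}_Z|\le|\mathbf{S}_Z|$ follows from expressing the $\mathbf{R}_Z$ generators over $\langle\mathbf{S}_Z\rangle$ and taking a Howell form, while $|\mathbf{S}_Z|\le|\mathbf{R}_Z|$ follows because every diagonal LID element $g$ then has order a power of two, so $g^{x/2}$ is a diagonal Pauli operator lying in $\langle\mathbf{R}_Z\rangle$; an excess generator of $\mathbf{S}_Z$ would produce such a Pauli operator supported off the pivot columns of $\mathbf{R}_Z$, a contradiction. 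To complete your direct route you would essentially have to reprove both halves of this argument inside your induction; the factorization through $\mathbf{R}_Z$ is what lets the two hypotheses (regularity and $N=2^t$) do their separate jobs cleanly.
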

 For such regular codes, the logical subalgebra encodes $k=|\mathbf{L}_X|$ tensor product of qubits, i.e.,  hence the number of generators of the LID group is precisely $n-k$. 
For proofs and extended discussions related to (regular) XP codes, see App.~\ref{app:xpproperty}.

\section{Combining XP Legos}\label{Section3}
\subsection{Quantum Lego}
Before combining ``Quantum Lego blocks'', let us briefly discuss what they are and review the idea behind QL. Readers familiar with this framework can skip to Sec.~\ref{subsec:xpfusion}. QL is a graphical framework for building QECCs from smaller codes that are often called atomic Legos, or seed codes in \cite{Cao2022,TNC,LTNC}. These atomic Legos are then joined together in a way that generalizes code concatenation. In particular, it is shown \cite{Cao2022} that with suitable choices of simple ``Lego blocks'', the operations in QL can generate all quantum codes, which would include the XP codes.
Another feature of QL that distinguishes it from the conventional tensor network (TN) approach or graphical calculus like \cite{ZX1,ZX2} is that it uses the symmetries of the code, which is often required for the construction of fault-tolerant gates, to track and design its properties more efficiently. Such a process is known as operator pushing or operator matching, which has demonstrated its power in controlling the check weights and recovering non-trivial constructions such as topological codes with simple and graphically intuitive reasonings. More recently, it has been discovered that the properties and decoders of codes built from QL can also be obtained up to exponentially more efficiently using a new technique based on tensor weight enumerator polynomials \cite{Cao2022_2,Cao2023}. As the rules for code generation are incredibly simple, it has also been used for automated code design with strategies based on reinforcement learning~\cite{Su2023,Mauron:2023wnl}. Most importantly for this work, QL is also valid beyond PSF. 

We can further summarize the QL operations in 3 aspects:
\begin{itemize}
    \item Identifying the atomic Legos
    \item The combination of such building blocks, and 
    \item The designation of logical vs physical degrees of freedom.
\end{itemize}

The identification of atomic Legos is a choice where we restrict ourselves to building codes from a small selection of seed codes. It is often easy to find codes over a small number of qubits with brute force search in obtaining its encoding isometry $V$. In QL, we represent $V$ graphically (Fig.~\ref{fig:qlego}a) by its coefficients, which can be written as a tensor $V_{i_1i_2\dots}$. Because the tensor is oblivious to whether it was obtained from a state or a map, we use these concepts interchangeably in QL. Graphically, the specific designation of a tensor as a map or state comes from whether we appoint a ``leg'' to be physical (outputs) or logical (inputs). Mathematically, this interchangeable designation between state and encoding map is given by the Choi-Jamiolkowski isomorphism~\cite{Choi1975,Jamiołkowski1972}. Moreover, QL is designed to track the symmetries, instead of the components, of the tensor (Fig.~\ref{fig:qlego}b). 
\begin{definition}
    A unitary operator $U$ is said to describe a symmetry of a tensor $V_{i_1, \dots, i_n}$ if for $|V\rangle = \sum_{i_1,\dots, i_n} V_{i_1,\dots, i_n}|i_1,\dots,i_n\rangle$
    $$U|V\rangle = |V\rangle.$$
\end{definition}
In other words, $U$ is a symmetry if it stabilizes the state whose coefficients are given by the tensor $\mathbf{V}$. From now on we use $\mathbf{V}$ to denote tensors with components $V_{i_1,\dots, i_n}$ where we will assume that a canonical choice of basis $\mathbf{e}_{i_1,\dots,i_n}$, such as the computational basis, is used. In this work, we do not distinguish upper and lower indices (or basis and dual basis).
For example, if the atomic Lego is a Pauli stabilizer code, then the relevant symmetries contain its stabilizer and normalizer elements. This is the case, for instance, if $U$ is a Pauli string or $\mathcal{O}_i$'s are Pauli operators in Fig.~\ref{fig:qlego}b. In that case, these symmetries also uniquely determine the tensor $\mathbf{V}$, which allows us to design codes without ever discussing the tensor components $V_{i_1,\dots, i_n}$.


\begin{figure*}
    \centering
    \includegraphics[width=0.76\linewidth]{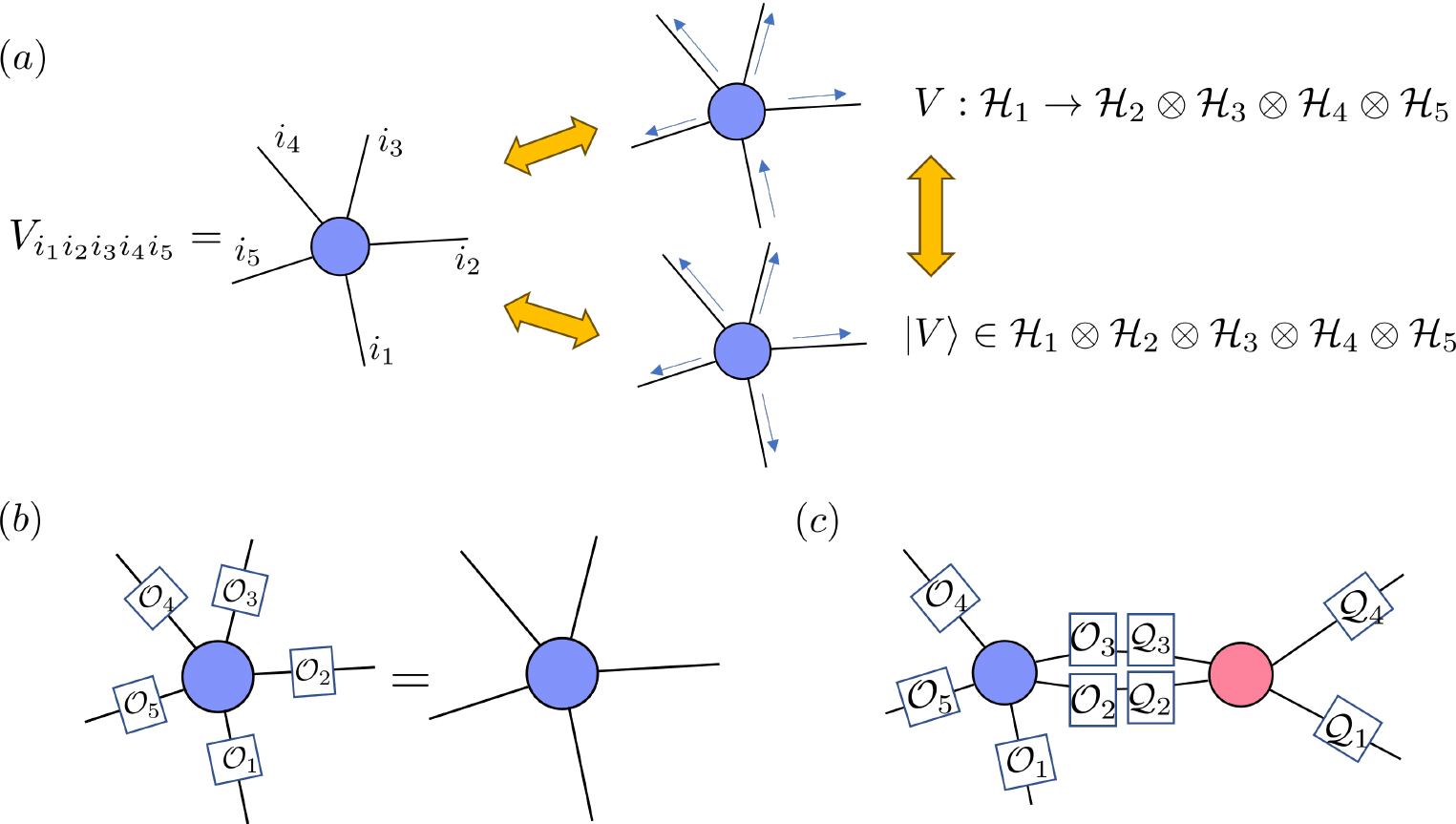}
    \caption{(a) A tensor $V_{i_1i_2i_3i_4i_5}$ can be expressed graphically where each index is assigned to a dangling leg. The same tensor can be used to construct states or maps but these objects are related to each other via the channel state duality (yellow arrow). The in-going blue arrow marks the input/logical degree of freedom while the out-going blue arrows mark output/physical degrees of freedom. (b) A tensor has a symmetry if it is invariant after contracting with operators. (c) If the local symmetries on the blue and red tensors, marked by $\mathcal{O}$'s and $\mathcal{Q}$'s respectively, are matching, then the operators acting on the connected edges satisfy $\mathcal{Q}_3=\mathcal{O}_3^*$ and $\mathcal{Q}_2=\mathcal{O}_2^*$. It is customary to drop the boxes when writing the symmetries to avoid clutter.}
    \label{fig:qlego}
\end{figure*}

These atomic blocks can then be joined together by graphically connecting their legs. This corresponds to tensor contractions in the tensor network, which is colloquially known as ``tracing''\footnote{We use $\wedge_{j,k}$ to denote tensor contraction of indices $j,k$. It should be clear from context that it is not the wedge product. }. 
\begin{definition}
    Consider a tensor $\mathbf{V}$ with components $V_{i_1,\dots i_n}$. We denote tracing (or tensor contraction) that connects wires $i_j,i_k$ as an operation $\wedge_{j,k}$ such that
    $$\wedge_{j,k}\mathbf{V} \leftrightarrow \sum_{i_j,i_k}V_{i_1,\dots, i_n}\delta_{i_j,i_k},$$
    where $\delta_{ab}$ is the Kronecker delta.
\end{definition}

Graphically, tracing simply joins the wires corresponding to the indices that are summed over. Note that $\mathbf{V}$ could be a tensor product of two or more uncontracted tensors, in which case tracing can join together disconnected components.

Physically, if each atomic Lego corresponds to a state, then (up to normalizations) connected edges correspond to Bell fusions, where the connected physical qubits are projected onto a Bell state, and then discarded. This is summarized as the following lemma.
\begin{lemma}\label{lemma:trace}
Let $|V\rangle, |W\rangle$ be the state defined by tensor $\mathbf{V}, \mathbf{W}$ respectively, then tracing corresponds to
    \begin{equation}
    \mathbf{W}=\wedge_{j,k} \mathbf{V} \leftrightarrow |W\rangle=\langle \Phi^+|V\rangle = \wedge_{j,k}|V\rangle
    \end{equation}
    where $|\Phi^+\rangle = |00\rangle+|11\rangle$ is supported on tensor factors $j, k$.
\end{lemma}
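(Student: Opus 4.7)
The plan is to verify the lemma by a straightforward direct computation in the computational basis, relying on the identification of the unnormalized Bell state $|\Phi^+\rangle = \sum_a |aa\rangle$ with the Kronecker delta that appears in the definition of tensor contraction.

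First I would expand the state associated with the tensor in the canonical basis, writing $|V\rangle = \sum_{i_1,\dots,i_n} V_{i_1,\dots,i_n} |i_1,\dots,i_n\rangle$, and similarly express $\langle \Phi^+|$ on factors $j,k$ as $\sum_a \langle a|_j \langle a|_k$. Computing the overlap then forces $i_j = i_k = a$ inside the sum, leaving the other $n-2$ indices as the free indices of a new tensor. That resulting coefficient is
$$
\sum_a V_{i_1,\dots,i_j=a,\dots,i_k=a,\dots,i_n},
$$
which is exactly what $\wedge_{j,k}\mathbf{V}$ produces via its definition $\sum_{i_j,i_k} V_{i_1,\dots,i_n}\delta_{i_j,i_k}$, since the Kronecker delta collapses the two summed indices to a common value. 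Matching coefficients index by index yields $|W\rangle = \langle \Phi^+|V\rangle = \wedge_{j,k}|V\rangle$, which establishes the claim.

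There is essentially no technical obstacle here: the content of the lemma is a translation between two equivalent notations for the ``cup'' tensor (the identity on a two-index subsystem, viewed either as a Kronecker delta or as an unnormalized maximally entangled state). The only mild subtlety worth flagging in the write-up is bookkeeping of which tensor factors remain after contraction and ensuring that the ordering of the surviving indices on both sides of the equation agrees; this is a relabeling convention and does not affect the statement. I would therefore present the proof as a one-line index manipulation preceded by the explicit expansions above.
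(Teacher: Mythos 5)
Your proof is correct and is essentially the standard argument: the paper itself defers the proof to Refs.~\cite{Cao2022,Cao2022_2}, where the lemma is established by exactly this identification of the unnormalized Bell state $\sum_a|aa\rangle$ with the Kronecker delta in the contraction. The direct basis expansion and index matching you give is the intended (and complete) justification.
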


The symmetries of these joined Lego blocks are then generated through operator pushing (see Sec 2.3 of~\cite{Cao2022}) or operator matching. 
\begin{theorem}[Operator Matching]\label{thm:opmatch}
  {\color{black}  Let $U\otimes O_{j}\otimes O_{k}\otimes U'$ be a unitary symmetry of $\mathbf{V}$ where $O_j, O_k$ act on the $j$th and $k$th leg/qubit respectively. Suppose $O_{j}=O_{k}^*$ where $*$ denotes complex conjugation, then $U\otimes U'$ supported on the Hilbert space except factors $j$ and $k$ is a symmetry of $\wedge_{j,k}\mathbf{V}$.}
\end{theorem}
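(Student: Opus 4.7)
The plan is to convert the tensor contraction $\wedge_{j,k}$ into an inner product with the unnormalized Bell state $|\Phi^+\rangle_{jk}=\sum_i|i\rangle_j|i\rangle_k$ via Lemma~\ref{lemma:trace}, and then propagate the symmetry through this projection. By that lemma it suffices to show $(U\otimes U')\langle\Phi^+|_{jk}|V\rangle=\langle\Phi^+|_{jk}|V\rangle$. Applying $\langle\Phi^+|_{jk}$ to both sides of the hypothesis $(U\otimes O_j\otimes O_k\otimes U')|V\rangle=|V\rangle$, and using that $U,U'$ act on factors disjoint from $j,k$ so that they commute past $\langle\Phi^+|_{jk}$, this reduces to verifying the single identity $\langle\Phi^+|_{jk}(O_j\otimes O_k)=\langle\Phi^+|_{jk}$.

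The key step is the elementary identity $\langle\Phi^+|(A\otimes B)=\langle\Phi^+|(I\otimes A^{T}B)$ for arbitrary operators $A,B$, which follows immediately from expanding $|\Phi^+\rangle$ in the computational basis. Setting $A=O_j$ and $B=O_k$, the hypothesis $O_j=O_k^{*}$ gives $A^{T}B=(O_k^{*})^{T}O_k=O_k^{\dagger}O_k$. Because the full operator $U\otimes O_j\otimes O_k\otimes U'$ is unitary, each tensor factor is unitary up to a global phase (a standard fact about tensor-product unitaries; any residual phase can be freely absorbed), so $O_k^{\dagger}O_k=I$ and the desired identity follows.

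Combining these ingredients,
\begin{align*}
\langle\Phi^+|_{jk}|V\rangle
&=\langle\Phi^+|_{jk}(U\otimes O_j\otimes O_k\otimes U')|V\rangle\\
&=(U\otimes U')\,\langle\Phi^+|_{jk}(O_j\otimes O_k)|V\rangle\\
&=(U\otimes U')\,\langle\Phi^+|_{jk}|V\rangle,
\end{align*}
which, by Lemma~\ref{lemma:trace}, is precisely the statement that $U\otimes U'$ is a symmetry of $\wedge_{j,k}\mathbf{V}$. I expect the main obstacle to be purely conventional bookkeeping: keeping straight the relations among complex conjugation, transpose, and Hermitian conjugate, and confirming that the unnormalized versus normalized Bell state introduces only an overall scalar that cancels on both sides of the displayed equality and therefore does not affect the symmetry statement.
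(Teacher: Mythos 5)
Your proof is correct and follows essentially the same route as the one the paper defers to in \cite{Cao2022,Cao2022_2}: convert the contraction into projection onto the unnormalized Bell state via Lemma~\ref{lemma:trace}, then use the transpose identity $\langle\Phi^+|(A\otimes B)=\langle\Phi^+|(I\otimes A^{T}B)$ so that the matching condition $O_j=O_k^{*}$ collapses $O_j^{T}O_k$ to $O_k^{\dagger}O_k=I$. The only small nit is that a tensor factor of a unitary product operator is unitary only up to a \emph{positive scalar} (not merely a phase), but since the theorem intends $O_j,O_k$ themselves to be unitary --- as is the case for the XP and Pauli operators used throughout --- this does not affect the argument.
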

See ~\cite{Cao2022,Cao2022_2} for more detailed discussions and proofs for Lemma~\ref{lemma:trace} and Theorem~\ref{thm:opmatch}. 
An example of this is shown in Fig.~\ref{fig:qlego}c where $\mathcal{O}_i$'s and $\mathcal{Q}_i$'s denote the symmetry of the blue and red tensors respectively. If $\mathcal{Q}_3=\mathcal{O}_3^*$ and $\mathcal{Q}_2=\mathcal{O}_2^*$, then the remaining operators acting on the dangling edges is a symmetry of the contracted tensor network. By repeating this matching process for all the local symmetries of each tensor, we obtain a set of symmetries for the traced tensor network. We use $O^*$ to denote the complex conjugate of $O$ and $O^{\dagger}$ for the conjugate transpose in this work.

In PSF, the symmetries generated by matching the stabilizers of each local tensor are sufficient to uniquely identify any code constructed by fusing the atomic blocks. These matching symmetries can also be tracked and generated efficiently over check matrices with an operation called conjoining. Suppose $H$ is a check matrix of the stabilizer state, or that of the dual stabilizer Choi state defined by a Pauli stabilizer encoding map, then for prime $q$, the conjoining operation is a map $\wedge_{j,k}: \mathbb{F}_q^{n\times 2n}\rightarrow \mathbb{F}_q^{(n-2)\times 2(n-2)}$ from the space of check matrices over $n$ qudits to the space of check matrices over $n-2$ qudits\footnote{For clarity, here we have dropped the phase column in the stabilizer check matrix.}.  See App. D of~\cite{Cao2022}. Because this operation can be performed with row and column operations over the check matrices, they are classically tractable even for large $n$.

Finally, recall that the legs themselves may be interpreted as either physical qubits (output legs) or logical qubits (input legs) (Fig.~\ref{fig:qlego}a right). {\color{black}If the unitary symmetry of the tensor admits a product structure, then the same symmetry may represent a stabilizer or a logical operator depending on how the legs are assigned.

Let us partition the indices of $V_{i_1,\dots, i_n}$ into two complementary sets $J$ and $J^c$, to which we can assign the Hilbert spaces $\mathcal{H}_{J}$ and $ \mathcal H_{J^c}$ respectively. We discussed above that a state $|V\rangle$ can be obtained from the tensor, but so can a linear map $V:\mathcal{H}_{J^c}\rightarrow \mathcal{H}_J$ such that

\begin{equation}
    V=\sum_{j_a\in J,k_b\in J^c} V_{j_1,j_2,\dots, k_1,k_2,\dots} |j_1,j_2\dots\rangle\langle k_1, k_2\dots|.
\end{equation}

The Schmidt coefficients of $|V\rangle$ can now be related to the (nonzero) singular value spectrum of $V$ up to normalization factors. For this work, we will focus on the maps that are isometric, thus representing encoding maps. Without loss of generality, let $|J^c|\leq |J|$. If $|V\rangle$ is maximally entangled between $J$ and $J^c$, then $V$ is an isometry.

 \begin{proposition}\label{prop:3.1}
     Suppose $V$ is an isometry and $U=O_J\otimes Q_{J^c}$ is a unitary symmetry of its Choi state $|V\rangle$, then the code admits a logical operator such that $\bar{Q}^t=O_J$.
 \end{proposition}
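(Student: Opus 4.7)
The plan is to apply the Choi--Jamiolkowski isomorphism to translate the symmetry of the Choi state into an operator identity on the encoding isometry $V$. First I would rewrite $|V\rangle = (V \otimes I)|\Phi^+\rangle$ with $|\Phi^+\rangle = \sum_k |k\rangle|k\rangle$ the unnormalized maximally entangled state on two copies of $\mathcal{H}_{J^c}$, where the first factor plays the role of the input slot of $V$. This is merely a repackaging of the component-wise formula $|V\rangle = \sum_{j,k} V_{jk}|j\rangle_J|k\rangle_{J^c}$ appearing immediately before the proposition.

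Next I would apply the symmetry $(O_J\otimes Q_{J^c})|V\rangle = |V\rangle$ and invoke the standard ricochet identity $(I\otimes Q_{J^c})|\Phi^+\rangle = (Q_{J^c}^t\otimes I)|\Phi^+\rangle$ to pull $Q_{J^c}$ across the entangled pair. The left-hand side becomes $(O_J V Q_{J^c}^t \otimes I)|\Phi^+\rangle$, and matching with $(V\otimes I)|\Phi^+\rangle$ on the right gives, via injectivity of the vectorization map $A\mapsto (A\otimes I)|\Phi^+\rangle$, the operator identity $O_J V Q_{J^c}^t = V$, or equivalently $O_J V = V(Q_{J^c}^t)^{-1}$ by unitarity of $Q_{J^c}$. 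This is exactly the statement that $O_J$ preserves the code subspace $V(\mathcal{H}_{J^c})$ and implements the logical action $(Q_{J^c}^t)^{-1}$ on the encoded data. Identifying $\bar Q$ with the physical-side representative of the logical operation associated with the reference-side $Q_{J^c}$, the transpose that the Choi isomorphism forces on the reference leg yields precisely $\bar Q^t = O_J$.

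The main obstacle I expect is purely notational rather than mathematical: making sure the transpose arising naturally from the ricochet identity lines up with the paper's overbar convention and with the asymmetric designation of $J^c$ as the logical register and $J$ as the physical register. Once the conventions are pinned down, the derivation is a one-line consequence of channel--state duality, with the isometry hypothesis on $V$ entering only to guarantee that the logical action $(Q_{J^c}^t)^{-1}$ is unambiguously determined by the physical action of $O_J$ on the image of $V$.
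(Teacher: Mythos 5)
Your derivation is the right one, and in fact the paper never supplies a proof of Proposition~3.1 at all --- it is asserted in the main text and then used in Section~V, with the appendices covering only Proposition~2.1, the counting theorem, the non-density lemma, Theorem~3.4 and the decoder lemmas. So there is no in-paper argument to compare against; the intended justification is exactly the channel--state-duality computation you give. Writing $|V\rangle=(V\otimes I)|\Phi^+\rangle$, applying the symmetry, and ricocheting $Q_{J^c}$ across the entangled pair to get $O_J V Q_{J^c}^t=V$ is correct, and the injectivity of vectorization and the invertibility of $Q_{J^c}$ then give $O_JV=V(Q_{J^c}^t)^{-1}$, so $O_J$ preserves $\mathrm{im}\,V$ and (since $V$ is an isometry) acts on the encoded data as the uniquely determined logical gate $V^\dagger O_J V=(Q_{J^c}^t)^{-1}$.

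The one place you should not wave your hands is the final ``identification of conventions.'' What you actually proved is that $O_J$ implements the logical gate $(Q^t)^{-1}=Q^{*}$, not $Q^{t}$, and these are genuinely different unitaries in general: take $V=I$ on one qubit, so $|V\rangle=|00\rangle+|11\rangle$, and the symmetry $S\otimes S^{*}$ with $S=\mathrm{diag}(1,i)$; here $O=S$ acts logically as $S$ itself, which is $Q^{*}=(S^{*})^{*}$, whereas $Q^{t}=S^{\dagger}\neq S$. So the literal reading ``$O_J$ implements logical $Q^t$'' is false, and your closing sentence claiming the computation ``yields precisely $\bar Q^t=O_J$'' asserts a match that does not hold at the level of operators. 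It does hold up to a global phase for the operators the paper actually feeds into this proposition --- for $Q=XP^{a}$ one has $Q^{*}=\omega^{-2a}Q^{t}$, which is why the worked example in Section~V quotes the logical action as $(XP^{3})^{t}$ --- but you should state the conclusion as $O_JV=VQ^{*}$ and then remark that for XP operators this agrees with the paper's $\bar Q^{t}=O_J$ up to phase, rather than leaving the reconciliation as an unexamined bookkeeping step.
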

This implies that given a $|V\rangle$ that is sufficiently entangled across certain bipartitions, one can obtain different encoding maps even from the same TN in a way similar to code shortening. 

For example, consider a $[[4,2,2]]$ code whose stabilizer group is generated by $XXXX$ and $ZZZZ$ and has logical operators $\bar{X}_1=XXII, \bar{Z}_1=ZIZI, \bar{X}_2=XIXI$, and $\bar{Z}_2=ZZII$. The tensor representation has six legs, two of which correspond to the two logical qubits while the other four the physical qubits. If we re-designate the first logical leg as a physical qubit instead, then it becomes a $[[5,1,2]]$ code with stabilizer group $S=\langle XXXXI, ZZZZI, XXIIX, ZIZIZ\rangle$ and logical operators $\bar{X}=XIXII, \bar{Z}=ZZIII$. On the other hand, if we re-assign one of the physical indices in the $[[4,2,2]]$ code as logical degrees of freedom, then the same tensor can be used as an encoding isometry for a trivial $[[3,3,1]]$ code\footnote{For more explicit examples involving the $[[4,2,2]]$ CSS code, see Appendix A of \cite{Su2023}.}. 

For more complex example that moves beyond code shortening, Ref.~\cite{Cao2022} shows that the surface code and the 2d Bacon-Shor code are interchangeable via this duality where $V$ also need not be an isometry. However, in this work, we only focus on the cases where the encoding map $V$ is an isometry. More often, we only focus our discussion on states that satisfy the above entanglement criteria, because they can be turned into a map in such a way.}

\subsection{Fusing XP Legos and operator matching}
\label{subsec:xpfusion}
Now we examine how the desirable traits of QL extend to XP codes. When tracing a state, \cite{Cao2022} showed that the symmetries generated from operator matching remain a symmetry of the joint state. While this symmetry from matching is sufficient when tracing Pauli stabilizer states, the same need not be true when tracing XP states. Here we provide the conditions for when they are enough. 

First we note a key difference from the XPF in that tracing XP legos no longer needs to remain XP. This can be shown by counterexample in App.~\ref{app:tracingXP} and in Ref.~\cite{Webster2022}, where we notice that XP measurements are equivalent to tracing single-qubit XP states. Intuitively, this is because for $N>2$, \textcolor{black}{XP states include magics states} in addition to Pauli stabilizer states. For example, the magic state $\ket{H}=\ket{0}+e^{i\pi/4}\ket{1}$ is an XP state, stabilized by $XP_8(1|3|2)$. Together with the encoding tensors of the repetition code (GHZ state), the Hadamard tensors and the rank 1 tensors from $|0\rangle$, they can be fused to form a universal gate set and measurement with post-selections. 
As a result, their fusions can produce states dense in the Hilbert space (Fig.~\ref{fig:XPprocess}) as shown by Theorem 2.3 of~\cite{Cao2022}. 
\begin{figure}
    \centering
    \includegraphics[width=0.4\textwidth]{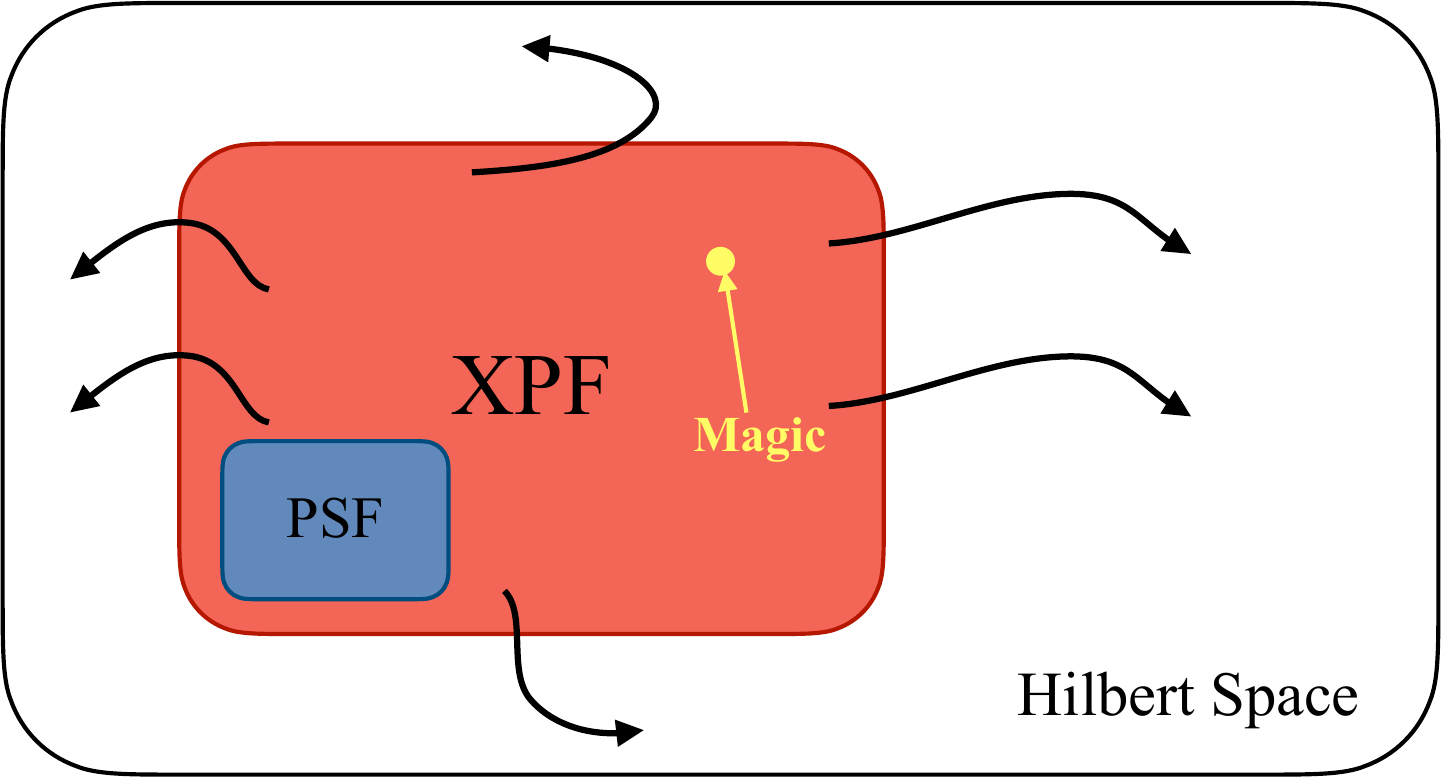}
    \caption{Pauli stabilizer states (blue box) only cover a fraction of the whole many-body Hilbert space. In contrast, XP states with arbitrary precision (red box) encompass not only Pauli stabilizer states but also the magic state. When combined with the Clifford operations found within PSF, they constitute a universal gate set. Therefore, an XP Lego set is universal.}
    \label{fig:XPprocess}
\end{figure}
Since the XP states by themselves are not dense in the Hilbert space (Lemma~\ref{lemma:notdense}), even though they are far more numerous than Pauli stabilizer states, there must exist fusions of XP states that are not XP. 

\begin{theorem}
    Tracing an XP stabilizer state with precision of $N>2$ does not always produce an XP stabilizer state for any precision.
\end{theorem}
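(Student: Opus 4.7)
The plan is to argue by contradiction using a density mismatch between the family of XP stabilizer states and the family of states reachable by tracing them. The first ingredient is Lemma~\ref{lemma:notdense} (proved separately in the appendix), which asserts that the union of XP stabilizer states over all precisions $N$ is not dense in the $n$-qubit Hilbert space for sufficiently large $n$. In particular, there exists a state $|\phi\rangle \in \mathcal{H}$ and an open neighborhood $U \ni |\phi\rangle$ that contains no XP stabilizer state at any precision.

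The second ingredient is universality of the atomic XP Legos under tracing, as emphasized just before the statement. The GHZ tensor (the repetition code encoding), the Hadamard tensor, the rank-1 tensor $|0\rangle$, and the magic state $|H\rangle = |0\rangle + e^{i\pi/4}|1\rangle$ are all XP stabilizer states; e.g.\ $|H\rangle$ is stabilized by $XP_8(1|3|2)$. By Theorem 2.3 of Ref.~\cite{Cao2022}, fusing this finite collection of XP tensors by tensor contraction (which by Lemma~\ref{lemma:trace} corresponds to Bell fusion with post-selection) implements a universal gate set with measurements, so the set of states obtainable by tracing XP stabilizer states is dense in $\mathcal{H}$. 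Consequently, this reachable set meets the open neighborhood $U$.

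Combining the two ingredients yields the contradiction: if every trace of every XP stabilizer state remained an XP stabilizer state at some precision, the reachable set would sit inside the closure of the XP family, which by Lemma~\ref{lemma:notdense} is disjoint from $U$. Since some trace does land in $U$, the output of that trace cannot be an XP stabilizer state at any precision, proving the theorem. The main obstacle is confirming that the cited universal gate set can be realized using only contractions of XP-stabilizer inputs, with no auxiliary non-XP operations smuggled in; this is where I would lean on the explicit atomic-Lego construction of Ref.~\cite{Cao2022} and the observation that $|H\rangle$ itself is XP. A complementary, more constructive route is to exhibit one small explicit contraction---such as the fusion of a GHZ tensor with several copies of $|H\rangle$ at one leg each---and to verify by an exhaustive Howell-form search over candidate check matrices that no XP subgroup at any precision stabilizes the output. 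This is the approach sketched in App.~\ref{app:tracingXP}, and it provides an elementary witness complementing the density argument.
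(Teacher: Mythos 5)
Your proposal is correct and follows essentially the same route as the paper: the paper likewise combines the non-density of XP states (Lemma~\ref{lemma:notdense}) with the fact that fusions of the XP atomic Legos (GHZ, Hadamard, $|0\rangle$, and the magic state $|H\rangle$) produce states dense in the Hilbert space, concluding that some fusion must leave the XP family, and it also supplies the explicit 6-qubit counterexample in App.~\ref{app:tracingXP} whose self-trace yields coefficients of unequal norm. Both your density argument and your suggested constructive witness match the paper's treatment.
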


\begin{corollary}
    Tracing XP stabilizer states at a fixed precision $N=2^t>2$ does not always produce an XP stabilizer state of the same precision.
\end{corollary}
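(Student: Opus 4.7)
The plan is to derive this corollary as a direct specialization of the preceding theorem. Fix any $t \geq 2$ and set $N = 2^t$, so that $N > 2$. Applying the theorem at this precision produces an XP stabilizer state $|\psi\rangle$ at precision $N$ together with a tensor contraction $\wedge_{j,k}$ such that $\wedge_{j,k}|\psi\rangle$ is not an XP stabilizer state at any precision. In particular, $\wedge_{j,k}|\psi\rangle$ fails to be XP at the specific precision $N = 2^t$, which is exactly the statement of the corollary.

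If the theorem's proof happens to produce a counterexample only at some particular precision $N_0$ rather than uniformly in $N$, I would close the remaining gap using the standard embedding of XP groups: any $XP_{N_0}(\mathbf{x}|\mathbf{z}|p)$ can be rewritten as $XP_{kN_0}(\mathbf{x}|k\mathbf{z}|kp)$, so every XP stabilizer state at precision $N_0$ is automatically also an XP stabilizer state at precision $kN_0$, and tensor contraction commutes with this reparametrization. Consequently, a witness at a precision $N_0$ that is itself a power of two lifts to a witness at every precision $2^t$ that is a multiple of $N_0$. The magic state $\ket{H}$ stabilized by $XP_8(1|3|2)$, already highlighted in the excerpt, serves as a natural seed at $N_0 = 8$ and thus handles all $t \geq 3$; the case $t = 2$ (i.e.\ $N = 4$, the XS formalism) would be verified separately, e.g.\ by exhibiting a small explicit fusion whose output lies outside the precision-$4$ XP stabilizer manifold.

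The main obstacle, and essentially the only non-trivial content, is inherited from the theorem itself: one must produce an explicit XP fusion whose output fails to be XP, most naturally via the universality argument sketched around Fig.~\ref{fig:XPprocess} combined with the non-denseness of XP stabilizer states. Once the theorem is in hand, the corollary requires nothing beyond restricting the input precision and observing that "not XP at any precision" implies "not XP at precision $N$".
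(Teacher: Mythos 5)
Your proposal is correct and follows essentially the same route as the paper: the corollary is treated as an immediate specialization of the preceding theorem, whose proof (App.~B) exhibits an explicit precision-$8$ fusion whose output state has coefficients of unequal norm and therefore cannot be XP at \emph{any} precision, hence in particular not at the original precision $N=2^t$. Your additional discussion of lifting a single-precision witness via the embedding $XP_{N_0}(\mathbf{x}|\mathbf{z}|p)\mapsto XP_{kN_0}(\mathbf{x}|k\mathbf{z}|kp)$ is more careful about the quantifier over $t$ than the paper itself, which only provides the $N=8$ witness and does not separately address $N=4$; as you note, that case would still require its own explicit example, a gap shared by the paper's treatment.
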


Conversely, it is also possible to produce an XP state by tracing non-XP states --- consider an XP state that is produced by tracing two XP states. We now rotate one wire of the Lego by $U$ which takes it out of the XP state space. We then rotate the other Lego by $U^*$.

Therefore, we do not expect operator matching to be able to uniquely identify the state in general. From the complexity theoretic perspective, these statements are unsurprising because, as we will see later, operator matching (or check matrix conjoining) can be performed classically with an efficient algorithm. Since tracing enables universal quantum computation, such operator matching should not be able to specify all outcomes of a quantum computation efficiently.

For the present work, we would like to focus on the cases where trace does return an XP state of the same precision instead of a general quantum state. The reason for this restriction is two-fold: 1) symmetries present in an XP state (but lacking in a general state) allow us to identify fault-tolerant gates. 2) The symmetries of the XP states permit a classically tractable check matrix description over binary fields and finite rings, similar to~\cite{GottesmanKnill,AaronsonGottesman}. Note that Theorem~\ref{counting theorem} provides a necessary condition for when the state is XP. 

Like the Pauli case, we can track the symmetries, or more precisely, the LID group of the post-trace state using the conjoining operation over XP check matrices. We give an algorithm for this procedure below: 


\begin{algorithm}[t]
    \begin{algorithmic}
        \STATE{$\mathbf{M}_Z=[],LID=[]$}
        \STATE CanonicalForm($\mathbf{S}$)
        \FOR{$r,r'\in S$}
        \IF{the restriction of $r, r'$ to the first two qubits has non-trivial $X$ support on only the 1st or the 2nd qubit respectively}
        \STATE $r\leftarrow rr'$, remove $r'$ from $\mathbf{S}$
        \ENDIF
        \ENDFOR
        \FOR{$r\in \mathbf{S}_Z$}
        \IF{$r$ acts on the first two qubits}
        \STATE{$\mathbf{M}_Z \gets Append(\mathbf{M}_Z ,r)$}
        \ENDIF
        \ENDFOR

        \FOR{$r \in \mathbf{S}$}
        \IF{$Match(r,\mathbf{M}_Z )$ exists}
        \STATE{$LID \gets Append(LID,Match(r,\mathbf{M}_Z))$}
        \ENDIF
        \ENDFOR

        \STATE CanonicalForm($LID$)
    \end{algorithmic}
        \caption{Operator matching on check matrix}
    \label{algo:1}
\end{algorithm}

\textcolor{black}{This algorithm updates the LID check matrix of a state after a self-trace on two qubits by constructing new generators that satisfy a specific matching condition. Without loss of generality, we take the self-trace to be on the first two qubits. The procedure begins with a pre-processing step on the initial check matrix, $\mathbf{S}$, to consolidate X-supports. It iterates through pairs of generators, and if one has an $X$ on the first qubit  and the other has an $X$ on the second , they are multiplied into a single generator. This ensures that any X-support on the traced qubits appears as $X_1X_2$ or $I_1I_2$ within a single row. Following this, the algorithm enforces the full matching condition for an operator $X^{a_1}P^{b_1} \otimes X^{a_2}P^{b_2}$ acting on the traced qubits, which requires $a_1=a_2$ and $b_1+b_2=0 \pmod N$. To satisfy the condition on the $P$-parts, we identify the diagonal generators $\mathbf{M}_Z$ acting on the traced qubits. For each generator $r$ in the pre-processed set, we find a combination from $\langle \mathbf{M}_Z \rangle$ that modifies $r$ to meet the matching criterion. Finally, all resulting non-trivial matched generators form the new LID after removing the columns for the traced qubits and converting the matrix to its canonical form.}

Like in the Pauli case, this algorithm is polynomial in $n$ for any fixed $N$. Repeated applications of Algorithm \ref{algo:1} as a subroutine allow us to produce the XP symmetries of the resulting tensor network. Note that the dominant cost of the subroutine is limited by the conversion into the canonical form~\cite{howell_ring}.

Consider the set of quantum gates or processes dual to XP states, which we call \emph{XP processes}. For general $N=2^t$, this also includes any Clifford gate and stabilizer measurements. However, as we noted earlier, the inclusion of higher precision XP states also includes non-trivial phase gates and elements from higher levels of the Clifford hierarchy. For the subset of these XP processes that map XP states to XP states, conjoining also provides a classically efficient method for simulating these dynamical processes by representing them as operations over finite fields and rings. This is similar to Ref.~\cite{GottesmanKnill}, but also accommodates non-(Pauli)-stabilizer states. Thus it can be considered a partial extension of the traditional stabilizer simulation. It is conceivable that a full extension of such methods can lead to a more comprehensive simulation algorithm beyond PSF and those based on Abelian groups~\cite{AaronsonGottesman,Bermejovega2013}. We leave this to future work.

Finally, none of the above would have been effective if conjoining or operator matching did not fix the post-trace XP state. 
\begin{theorem}\label{Operator matching is enough}
Suppose $|W\rangle=\wedge_{j,k}|V\rangle$ such that both $|W\rangle, |V\rangle$ are XP states, then operator matching on the logical identity group of $|V\rangle$ produce the entire logical identity group of $|W\rangle$.
\end{theorem}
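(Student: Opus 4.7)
The plan is to establish both inclusions in the equality $G_M = \CS_{\rm LID}(|W\rangle)$, where $G_M$ denotes the group generated by applying Algorithm~\ref{algo:1} to the canonical generators of $\CS_{\rm LID}(|V\rangle)$. First I would dispatch the forward inclusion $G_M \subseteq \CS_{\rm LID}(|W\rangle)$, which is essentially a corollary of Theorem~\ref{thm:opmatch}: by construction, every generator produced by Algorithm~\ref{algo:1} arises as the restriction $s|_{\rm rest}$ of an XP operator $s \in \CS_{\rm LID}(|V\rangle)$ whose action on qubits $j,k$ is of the matching form $O \otimes O^*$, and Theorem~\ref{thm:opmatch} then places $s|_{\rm rest}$ in the maximal XP symmetry group $\CS_{\rm LID}(|W\rangle)$.

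The reverse inclusion carries the real content. My plan is to exploit the projector identity
\begin{equation}
|W\rangle\langle W| \;\propto\; \langle \Phi^+|_{jk}\,|V\rangle\langle V|\,|\Phi^+\rangle_{jk}
\end{equation}
together with the explicit expression for $|V\rangle\langle V|$ in terms of canonical generators of $\CS_{\rm LID}(|V\rangle)$ supplied by Proposition~\ref{prop:2.1}. Expanding the right-hand side produces a sum of XP operators on the remaining $n-2$ qubits, each weighted by a Bell matrix element $\langle \Phi^+|s|_{jk}|\Phi^+\rangle$. A direct calculation in the $XP_N(\mathbf{x}|\mathbf{z}|p)$ representation shows that this matrix element vanishes unless the X-part of $s$ on qubits $j,k$ satisfies $x_j=x_k$, while any surviving P-part discrepancy $z_j+z_k$ contributes only a diagonal phase on the remaining qubits. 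Absorbing this phase into combinations of generators from $\mathbf{M}_Z$ is precisely the ``find matching combination'' step of Algorithm~\ref{algo:1}, so $|W\rangle\langle W|$ admits an explicit expansion purely in terms of matched operators --- enough to identify $|W\rangle$ uniquely and hence saturate $\CS_{\rm LID}(|W\rangle)$.

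To close the argument I would appeal to Theorem~\ref{counting theorem} as a consistency check: both $|V\rangle$ and $|W\rangle$ are XP-regular codes with $k=0$ at precision $N=2^t$, so their LIDs admit $n$ and $n-2$ canonical generators respectively. Tracking the pre-processing and matching loops through the canonical form shows that Algorithm~\ref{algo:1} outputs exactly $n-2$ generators, and combined with $G_M \subseteq \CS_{\rm LID}(|W\rangle)$ this forces equality.

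The hard part will be showing that every residual P-discrepancy produced by a Bell matrix element can in fact be cancelled by some element of $\langle\mathbf{M}_Z\rangle$. This is precisely where the hypothesis that $|W\rangle$ is itself XP at the same precision becomes indispensable: any uncancellable discrepancy would inject phases outside $\mathbb{Z}_{2N}$ into the post-trace state, forcing it out of XP form and contradicting the assumption. Turning this contrapositive into a clean argument requires careful tracking of the Howell form of the P-part of $\mathbf{S}_Z$ through the matching step --- which is also why Algorithm~\ref{algo:1} must re-canonicalize its output at the end.
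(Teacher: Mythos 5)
Your forward inclusion is fine and coincides with the easy half of the paper's argument. The reverse inclusion, however, has a genuine gap at its central step. When you expand $\langle \Phi^+|_{jk}\,|V\rangle\langle V|\,|\Phi^+\rangle_{jk}$ over the elements of $\mathcal{S}_{\rm LID}(|V\rangle)$ (via Proposition~\ref{prop:2.1}), the Bell matrix element of an element with $x_j=x_k$ but $z_j+z_k=c\not\equiv 0 \pmod N$ is the \emph{scalar} $1+\omega^{2c}$, which is nonzero in general (it vanishes only when $c=N/2$). It is not ``a diagonal phase on the remaining qubits,'' and it cannot be absorbed by multiplying with elements of $\langle\mathbf{M}_Z\rangle$: that step of Algorithm~\ref{algo:1} replaces a group element $r$ by $rm$ \emph{before} tracing, whereas here the group average already ranges over the whole LID, so such a replacement merely permutes the sum and does not remove the unmatched terms. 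Consequently $|W\rangle\langle W|$ does \emph{not} expand purely over matched operators; the unmatched contributions survive individually and can at best cancel collectively (this is exactly the point of the Remark in App.~\ref{app:tracingXP}, which states that the unmatched coefficients ``are complex numbers in general'' and must mutually interfere when the post-trace state is XP). Proving that cancellation, and then passing from ``the matched restrictions sum to a multiple of $\Pi_W$'' to ``the matched restrictions generate all of $\mathcal{S}_{\rm LID}(|W\rangle)$,'' are both missing: XP operators are not Hilbert--Schmidt orthogonal, and distinct groups can share the same group projector --- that ambiguity is precisely why the LID is introduced in the first place.

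Your closing appeal to Theorem~\ref{counting theorem} does not rescue the argument either: equality of the number of canonical generators does not force equality of groups (e.g.\ $\langle Z\rangle\subsetneq\langle P\rangle$ for even $N>2$, both with one generator), so $G_M\subseteq\mathcal{S}_{\rm LID}(|W\rangle)$ plus a generator count cannot ``force equality.'' The paper closes the reverse inclusion in the opposite direction: it takes an arbitrary element $XP_N(\mathbf{x}|\mathbf{z}'|p')$ of $\mathcal{S}_{\rm LID}(|W\rangle)$ and \emph{lifts} it to a matched element $XP_N(x_1,x_2,\mathbf{x}\,|\,a,N-a,\mathbf{z}'\,|\,p')$ of $\mathcal{S}_{\rm LID}(|V\rangle)$, using the restricted Z-support structure ($e_j=e_k$ on every support vector after the case-1 reduction of App.~\ref{app:tracingXP}) to verify the lift is a logical identity of $|V\rangle$. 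That lifting step is the idea your proposal is missing; without it, or a completed cancellation-plus-uniqueness argument for the projector expansion, the hard inclusion remains open.
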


This result is key for QL as it implies that the language of operator pushing continues to serve as a useful graphically intuitive and computationally efficient guide for building non-Pauli stabilizer codes. 

An example of the conjoining operation is shown below {\color{black} where we glue together two physical legs of the tensor to make a $5$-qubit code}.
Consider the stabilizer group generators for a $[[7,2,2]]$ code defined with precision $N=8$:

\begin{equation} \label{eqn:check7qbit}
\scalebox{0.8}{$
\left( \begin{array}{ccccccc|ccccccc|c}
1 & 1 & 1 & 0 & 0 & 0 & 0 & 0 & 0 & 7 & 0 & 0 & 0 & 0 &  9\\
0 & 0 & 0 & 1 & 1 & 1 & 1 & 0 & 0 & 0 & 1 & 2 & 3 & 4 & 14\\
0 & 0 & 0 & 0 & 0 & 0 & 0 & 1 & 0 & 7 & 0 & 0 & 0 & 0 &  0\\
0 & 0 & 0 & 0 & 0 & 0 & 0 & 0 & 1 & 7 & 0 & 0 & 0 & 0 &  0\\
0 & 0 & 0 & 0 & 0 & 0 & 0 & 0 & 0 & 0 & 4 & 4 & 4 & 4 &  8
\end{array}\right).$}
\end{equation}

\textcolor{black}{We will perform a self-trace on the first two qubits, and derive the XP stabilizer for the post-tracing $5$-qubit code by following Algorithm~\ref{algo:1}. A self-trace here is a check matrix operation that one induces by connecting the two legs (or contracting two indices) on the same tensor. It is defined in detail for Pauli stabilizer codes in App. D of \cite{Cao2022}.}

\textcolor{black}{First, since the $X$ part is already matched, we search for diagonal operators that act non-trivially on the first two qubits. As stated in the text, rows 3 and 4 of the original check matrix satisfy this condition, so there are two elements in $\mathbf{M}_Z$.}

\textcolor{black}{Then, we try to match each row within $\mathbf{M}_Z$. Recall that the matching condition of the check matrix comes from operator matching in the tensor network. Since we work with XP operators $X^{a_1}P^{b_1}$ and $X^{a_2}P^{b_2}$, two such operators match if $a_1=a_2$ and $b_1+b_2=0~\mathrm{mod}~N$.
Therefore, the first and second rows satisfy the matching condition.
For the third and fourth rows, they need to be matched with each other to cancel the $P$ operator on the traced leg. One can do so by multiplying $-1$ on the fourth row and add, but that will lead to a trivial vector, i.e., an identity operator. The fifth row satisfies the matching condition trivially.}

\textcolor{black}{The collection of non-trivial operators resulting from this matching process thus forms the new LID group. Since self-trace will annihilate two physical qubits by projecting them onto the Bell state and discarding them, we remove the 1st, 2nd, 8th, and 9th columns of the check matrix (\ref{eqn:check7qbit}) that correspond to these two qubits. Finally, we obtain the check matrix for the post-trace code:} 
\begin{equation}
\left( \begin{array}{ccccc|ccccc|c}
 1&  0&  0&  0&  0&  7&  0&  0&  0&  0&  9\\
 0&  1&  1&  1&  1&  0&  1&  2&  3&  4& 14\\
 0&  0&  0&  0&  0&  0&  4&  4&  4&  4&  8
\end{array}\right).
\end{equation}

\subsection{Atomic XP Legos}

Recall that tracing the atomic Legos $S,H,|0\rangle$ and the 2 qubit repetition codes cover the entire set of Pauli stabilizer codes. We may ask whether a similar set can be identified for the XP states (or XP regular codes). While we note that the addition of any tensor derived from any non-Clifford gate would be sufficient to approximate any state to an arbitrary degree of accuracy, it still makes sense to ask whether we can identify the minimal element needed to reproduce the XP states exactly. 

\begin{theorem}
    Any XP state of precision $N$ can be produced by atomic Legos $P,H,|0\rangle$ and the 2 qubit repetition codes where $P=diag(1, \exp(i\pi/N))$. 
\end{theorem}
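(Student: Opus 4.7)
The plan is to reduce the claim to a circuit-synthesis statement: every $n$-qubit XP state of precision $N$ is the output of a unitary circuit over the gate set $\{H, \mathrm{CNOT}, P\}$ acting on $|0\rangle^{\otimes n}$ with $P=\mathrm{diag}(1,e^{i\pi/N})$, and each circuit element is then realized by tensor contractions of the four atomic Legos.

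First, I would invoke the weighted hypergraph state characterization of XP stabilizer states from~\cite{Webster2022}: any XP state of precision $N$ is local-Clifford equivalent to a weighted hypergraph state of the form $\prod_{e\in E} U_e\,|+\rangle^{\otimes n}$, where each $U_e$ is a multi-qubit controlled diagonal phase whose angle is an integer multiple of $\pi/N$. Applying a standard phase-gadget / phase-polynomial synthesis, each $U_e$ can be rewritten exactly as a CNOT ladder sandwiching single-qubit $Z$-rotations whose angles remain multiples of $\pi/N$, hence powers of $P$. The residual local Clifford corrections decompose into $H$ and CNOT, so the entire preparation lies in the circuit model above.

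Second, I would translate the circuit into Lego fusions. The $|0\rangle$ Lego supplies initial states and the $H$ Lego supplies Hadamards, so $|+\rangle=H|0\rangle$ is immediate. The single-qubit $P$ Lego supplies each required diagonal rotation. CNOT is realized by fusing the three-legged tensor of the $2$-qubit repetition code (a $Z$-spider in the ZX picture) with two $H$ Legos, exactly as in the Pauli QL construction of~\cite{Cao2022}. Composing these realizations gate by gate yields a tensor network whose underlying state is the target XP state.

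The main obstacle is verifying that the phase-gadget step is both exact and closed under the $\pi/N$ angle lattice supplied by $P$. The crux is to expand each $C^{m-1}Z(\theta)$ with $\theta\in \tfrac{\pi}{N}\mathbb{Z}$ via inclusion-exclusion into a product of Pauli-$Z$ exponentials $e^{i\theta_S Z^S}$, then conjugate each $Z^S$-exponential by a tree of CNOTs down to a single-qubit rotation that is a power of $P$. One must check that every induced angle still lies in $\tfrac{\pi}{N}\mathbb{Z}$ and that the CNOT byproducts generated en route remain in $\{H,\mathrm{CNOT}\}$. Once this bookkeeping is settled, the Lego-wise translation is automatic, since every primitive gate in the synthesized circuit has already been identified with a specific fusion of atomic Legos.
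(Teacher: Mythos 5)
Your proposal follows essentially the same route as the paper's proof: reduce to the weighted hypergraph state representation of \cite{Webster2022}, obtain $|+\rangle$ and CZ/CNOT from $|0\rangle$, $H$ and the repetition code, and synthesize each generalized controlled phase by alternating CNOTs with single-qubit phase gates (the paper cites the prescription of \cite{SchuchSiewert}, which is the same phase-polynomial decomposition you describe). The divisibility ``crux'' you leave open is resolved by the structure of the hypergraph weights in Webster et al.'s characterization --- a precision-$N$ XP state only carries hyperedges of size $m$ with phases that are multiples of $2^{m-1}\pi/N$, so the inclusion--exclusion expansion lands every single-qubit angle back in $\tfrac{\pi}{N}\mathbb{Z}$ --- a point the paper's own proof also treats only in passing.
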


\begin{proof}
    We use the fact that any XP state of precision $N$ can be mapped to weighted hypergraph states with generalized control phase operators $\{CP(p/2N,\mathbf{v})\}$, where the phase factor is $\exp(2i\pi p/2N)$ when acting on $|\mathbf{e}\rangle$ and $\mathbf{ev}=\mathbf{v}$ and $1$ otherwise. Note that with $|0\rangle, H$ and repetition codes, one can prepare $|+\rangle$ and CZ gates. It remains to show that we can prepare such generalized control phase gates needed for the weighted hypergraph states. 

    This is possible with a prescription by~\cite{SchuchSiewert} where one alternates control-nots (which can be built from CZ and H) and phase gates $P=diag(\exp(\mp i\phi/2), \exp(\pm i\phi/2))$, (or $P=diag(1, \exp(\pm i\phi))$ if shifted by a global phase that does not affect the outcome of the computation. The resulting generalized CP can be chosen to have a non-trivial phase factor that is any integer multiple of $\phi/2$ by inserting phase gates at the right places. In our case, it is sufficient to have $\phi= \exp(i\pi/N)$.
\end{proof}

Note that for precision $N=2$, $P=Z$ these are exactly the Pauli stabilizer atomic Legos. However, it is clear from the previous theorems that tracing these components can take us out of the set of XP codes with precision $N$ as long as $N>2$. 

Given the set of atomic Legos, another pertinent question is how many such components are needed to prepare a tensor network that describes a state or code of interest. While an efficient method is still under exploration, one can provide rough estimates based on gate synthesis through tensor tracing. As mentioned above and in~\cite{Cao2022} one can construct unitary gates by contracting tensors. This seems to indicate that in order to prepare states with exponential circuit complexity, exponentially many atomic tensors may be required. This, however, need not be true, as the tensor network is a classical description that includes post-selections. Therefore, a state with exponential circuit complexity may be prepared with polynomial time with post-selection. Such a measurement-based state preparation process can then be converted into a tensor network. Although this TN description of the state can be exponentially hard to contract, the number of atomic tensors remains polynomial in the number of qubits.

\section{Optimal Decoders}\label{Section4}
Error correction in the Pauli stabilizer codes proceeds by measuring the commuting check operators or stabilizer generators. Depending on the error patterns, one obtains a list of syndromes $s$. Based on the syndrome information, a decoder is then run to determine the correction operator to be applied to the code. The error is corrected if the overall error and correction process implements a logical identity in the code. However, because the XP stabilizer group is non-Abelian, both the syndrome extraction and the traditional stabilizer decoding process may need to be reworked.

Fortunately, this challenge is not unique to the XP stabilizer codes. For example, it is known that multiple rounds of decoding can be implemented in topological codes with non-Abelian symmetries~\cite{nat} by decomposing and decoding a sequence of Abelian stabilizer codes. In our case, notice that the dihedral group admits a cyclic normal subgroup generated by the diagonal generator $P$ and hence a short exact sequence $1\rightarrow \mathbb{Z}_N\rightarrow D_N\rightarrow \mathbb{Z}_2\rightarrow 1$ in the language of~\cite{nat}. Independently, a similar two-step routine is suggested by~\cite{Webster2022} and~\cite{Ni2015}. In these decoding schemes, one first decodes the code $\mathcal{C}_Z$ stabilized by the Abelian diagonal subgroup of $\mathcal{S}$, then performs any recovery operation needed to restore the state back to $\mathcal{C}_Z$. In the second round, the non-diagonal checks, which commute within $\mathcal{C}_Z$ are measured and a decoding scheme similar to the Abelian code follows. Although the outline of the syndrome extraction process is known, no explicit decoder has been proposed to the best of our knowledge. For the error correction process, we first follow the syndrome extraction outlined by~\cite{Webster2022}, we then construct a maximum likelihood decoder for XP-regular codes under any i.i.d. single qubit error channels by building upon the recent work by~\cite{Cao2023}. We show that the general enumerator method can be fully extended to decode XP codes in a way similar to Pauli stabilizer codes. As such, the efficiency is comparable and would be largely limited by the hardness of (approximate) tensor network contractions. 

Assuming that we are given the QL construction of the XP code, i.e., we have a tensor network definition of its encoding map along with its check matrix in the canonical form by performing the conjoining operations such that generators $\mathbf{S}_X, \mathbf{S}_Z$ are known. From $\mathbf{S}_Z$, \cite{Webster2022} shows that one can identify a set of diagonal Pauli generators $\mathbf{R}_Z$ such that $$\Pi_Z=\frac{1}{|\langle\mathbf{R}_Z\rangle|}\sum_{s\in\langle \mathbf{R}_Z\rangle}s = \frac{1}{|\langle\mathbf{S}_Z\rangle|} \sum_{s\in\langle \mathbf{S}_Z\rangle}s,$$i.e., the two Abelian stabilizer codes stabilized by the two diagonal groups coincide.

Let us consider the physical error model that acts identically across all qubits such that the error channel on each qubit $\rho_j$ is given by the Kraus operators:

\begin{equation}
    \rho_j\rightarrow \sum_{i=1}^4 K_i \rho_j K_i^{\dagger}.
\end{equation}
Let us write the error channel $\mathcal{E}$  that acts on the entire system $\rho$ as 
\begin{equation}
   \mathcal{E}(\rho)= \sum_{\mathbf{i}} \mathcal{K}_{\mathbf{i}} \rho \mathcal{K}_{\mathbf{i}}^{\dagger}
\end{equation}
where 
$$\mathcal{K}_{\mathbf{i}}=K_{i_1}\otimes K_{i_2}\otimes \dots\otimes K_{i_n},$$ and $\mathbf{i}$ is over all quaternary strings of length $n$.

For the sake of conceptual clarity, let us perform the following two-step syndrome extraction (as shown in Fig.~\ref{fig:decoder}) 
\begin{figure}
    \centering
    \includegraphics[width=0.46\textwidth]{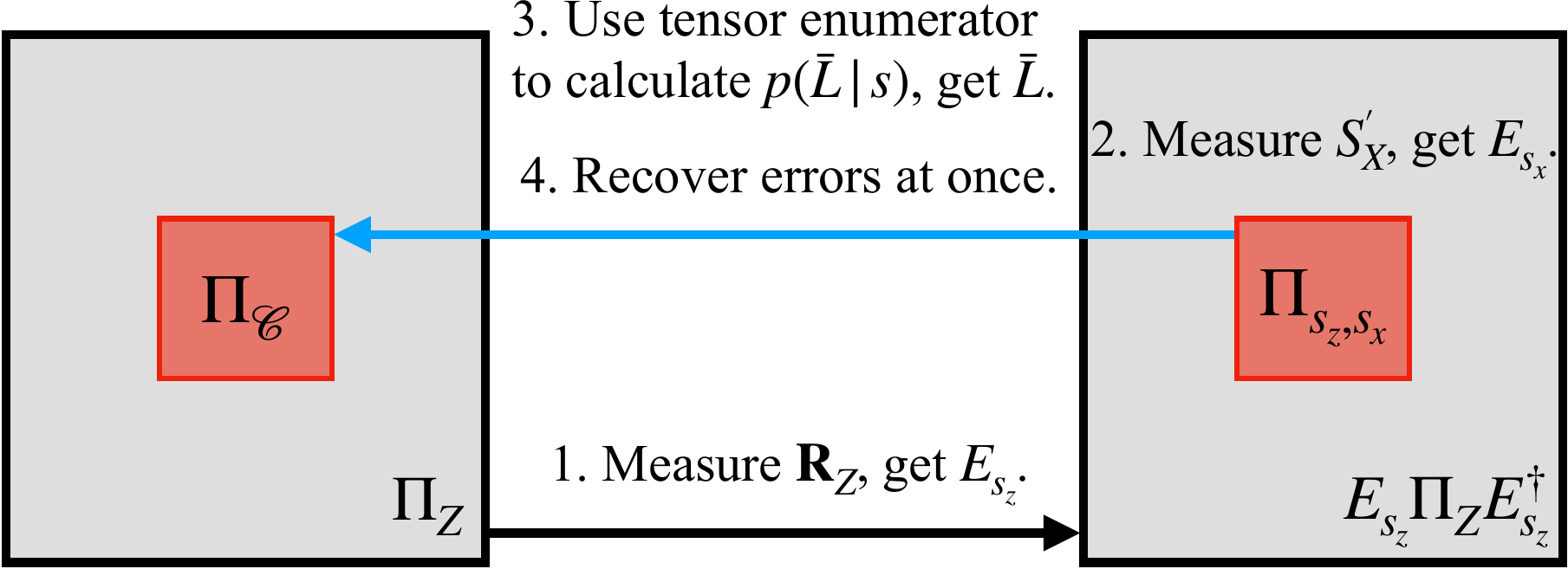}
    \caption{Illustration of the two-round decoder.}
    \label{fig:decoder}
\end{figure}
where for now we overlook its practical implementation, which we explain near the end. We start by measuring the Pauli checks $\mathbf{R}_Z$, which define a Pauli stabilizer code. From the syndromes $s_z$, one can identify any $E_{s_z}$ in the coset. We then correct the error $E_{s_z}$ and reset the code into the subspace supported on $\Pi_Z$. Next, we measure the generators $\mathbf{S}_X$, obtaining its syndromes $s_x$. It is easy to show that one can always determine a diagonal error operator $E_{s_x}$ with such a syndrome. This measurement projects the state onto the subspace supported by $E_{s_x}\Pi_{\mathcal{C}} E^{\dagger}_{s_x}$, where $\Pi_{\mathcal{C}}$ is the projection onto the XP code subspace defined in the code projector (\ref{eqn:codeproj}). 
We then restore the state to the code subspace by applying $E_{s_x}^{\dagger}$.

For $|\tilde{\psi}\rangle\in \mathcal{C}$, the probability that the above process incurs a logical error $\bar{L}$ is then
\begin{align}
    &p(\bar{L}\cap s_{x,z},E_{s_{x,z}}) = \\\nonumber
    &\int d\mu(|\tilde{\psi}\rangle)\sum_{\mathbf{i}}\Tr[\bar{L}|\tilde{\psi}\rangle\langle\tilde{\psi}|\bar{L}^{\dagger}E_{s_x}^{\dagger}\Pi_{s_x}E_{s_z}^{\dagger}\Pi_{s_z}\mathcal{K}_{\mathbf{i}} |\tilde{\psi}\rangle\langle\tilde{\psi}|\\\nonumber
    &\times\mathcal{K}_{\mathbf{i}}^{\dagger}\Pi_{s_z}E_{s_z}\Pi_{s_x}E_{s_x}],
\end{align}
where the projection onto non-trivial syndrome subspace is related to that onto the trivial syndrome subspace as $\Pi_{s_z}=E_{s_z}\Pi_Z E^{\dagger}_{s_z}$ and $\Pi_{s_x}=E_{s_x}\Pi_{\mathcal{C}} E^{\dagger}_{s_x}$. The integral is evaluated over the normalized uniform measure. Plugging in these expressions and simplify by noting that $E_{s_x}$ is diagonal and hence commutes with $\Pi_Z$,  we obtain

\begin{align}
    &p(\bar{L}\cap s_{x,z},E_{s_{x,z}}) 
    =\frac{1}{K(K+1)}\Big(\sum_{\mathbf{i}}\Tr[\mathcal{K}_{\mathbf{i}} \Pi_{\mathcal{C}} \mathcal{K}_{\mathbf{i}}^{\dagger} \Pi_s] \\\nonumber
    &+\sum_{\mathbf{i}}\Tr[\mathcal{K}_{\mathbf{i}}\Pi_{\mathcal{C}}\tilde{E}_s^{\dagger} ] \Tr[\mathcal{K}_{\mathbf{i}}^{\dagger}\tilde{E}_s\Pi_{\mathcal{C}}]\Big)
\end{align}
where $\tilde{E}_s=E_{s_z}E_{s_x}\bar{L}$, and $\Pi_s= \tilde{E}_s\Pi_{\mathcal{C}} \tilde{E}_s^{\dagger}$. We recognize the above expression is nothing but
\begin{align}
    &p(\bar{L}\cap s_{x,z},E_{s_{x,z}})\\\nonumber
    = &\frac{ B(\mathbf{k}; \Pi_{\mathcal{C}}, \Pi_s) + A(\mathbf{k}; \Pi_{\mathcal{C}} \tilde{E}_s^{\dagger},  \tilde{E}_s\Pi_{\mathcal{C}}) }{K(K+1)},
\end{align}
where $A(\mathbf{u};M_1,M_2),B(\mathbf{u};M_1,M_2)$ are the generalized coset enumerators introduced by~\cite{Cao2023} and $\mathbf{k}=\{k_{P{P'}}\}$ are the coefficients by expanding the Kraus operators of the single qubit error channel in the Pauli basis,
$$\sum_i K_i\cdot K_i^{\dagger}=\sum_{P{P'}} k_{P{P'}} P\cdot {P'}.$$
For detailed definitions, see Sections 2 and 3 of \cite{Cao2023}.

As long as $\tilde{E}_s$ is a tensor product of single qubit operators, any enumerators of the above form can be evaluated using the tensor enumerator technique given a QL construction of the code. Since $p(\bar{L}|s_{x,z},E_{s_{x,z}}) = p(\bar{L}\cap s_{x,z},E_{s_{x,z}})/p(s_{x,z},E_{s_{x,z}})$ and $p(s_{x,z},E_{s_{x,z}})$ is a constant for a fixed syndrome, we then complete the maximum likelihood error correction process by undoing $\bar{L}E_{s_x}$  for which $p(\bar{L}\cap s_{x,z},E_{s_{x,z}})$ is maximized. 

In practice, it is equivalent (in the Heisenberg picture) to measure the checks $\mathbf{R}_Z$, immediately followed by $\mathbf{S}_X'=E_{s_z}\mathbf{S}_X E_{s_z}^{\dagger}$ instead of performing an active recovery operator $E_{s_z}$ in between. Then the error can be undone all at once at the very end by applying $\tilde{E}^{\dagger}_s$. Note that although $\mathbf{S}_X, \mathbf{S}'_X$ are generally non-Hermitian, they only have eigenvalues $\pm 1$ in the subspace supported on $\Pi_Z$, acting like Pauli operators. Hence one can use the conventional syndrome extraction circuit for the second round of syndrome measurements. Further details of the decoder are found in App.~\ref{app:xpdecoder}.

The efficiency of the decoder depends on the architecture of the tensor network that builds up the XP code. Since the exact optimal decoding for stabilizer codes is $\#$P complete, the complexity of ML decoding generic codes is necessarily exponential in the system size. This is consistent with the observation that the exact contraction of arbitrary tensor networks is hard. Therefore, the method can provide an efficient decoder for certain systems where the network contraction is efficient. For others, it still provides subexponential or polynomial speed up compared to brute force evaluation. See Sec. 4 of~\cite{Cao2023}. For example, the tensor network construction of the twisted quantum double is known~\cite{Guetal, Vidal}. Therefore, the above construction yields the optimal decoder for the double semion model where the exact ML decoder has complexity $O(\exp(\sqrt{n}))$, which is subexponential in the system size for any i.i.d. single site error. In practice, the exact contraction of the tensor network needed for such a theoretical ML decoder is seldom necessary; by permitting approximate contractions when computing the conditional error probabilities, additional speedup may produce efficient decoders similar to~\cite{MLTNS,Chubb}. {\color{black} For more general XP codes that are not built from quantum lego blocks, as long as the check operators are given, then the code can also be decomposed into smaller tensors using the construction in Sec. IV of \cite{Cao2023} which works for non-Abelian codes. Constructing and contracting their enumerator tensor network will then provide the optimal decoder for these XP codes also.}

\section{Novel XP codes}\label{Section5}
Although twisted quantum double models like the double semion model are known to produce non-trivial XP codes, there are very few known examples of XP codes with $N>4$ that are not also known to be Pauli stabilizer codes, e.g. triorthogonal codes. Furthermore, it is difficult to verify whether a given XP code is not simply related to a Pauli stabilizer code by local unitary (LU) deformations. 
For example, the only explicit example of a non-trivial XP regular code that is not known to violate the above criteria is a $[[7,2,2]]$ error detection code discovered by~\cite{Webster2022}.

As a proof of principle, here we repeat the simple moves in~\cite{Cao2022} to construct another example with a higher distance. We first identify a smaller atomic $[[4,2,2]]$ XP code and then a $[[7,1,3]]$ XP code by contracting the $[[4,2,2]]$ Legos. 
Recall that each regular XP code has a correspondence with a CSS code. To construct the 4-qubit XP code, we start from the codewords of the conventional $[[4,2,2]]$ CSS code by introducing additional phases. 
Through exhaustive search\footnote{We search through $16^7$ phase sets, ultimately identifying a subset of $4194304$ phase sets that are consistent with a $[[6,0,3]]$ XP state.}, we craft the codewords for various XP $[[4,2,2]]$ tensors of precision $N=8$. Alternatively, one can deform the 4-qubit CSS codes with local unitaries and use them as atomic Legos.

To create a larger code, we connect two such atomic Legos. Note that although these $[[4,2,2]]$ XP codes may be LU equivalent to the $[[4,2,2]]$ CSS code, as their enumerators indeed coincide, the contraction of LU deformed Legos need not be LU equivalent to the contraction of undeformed legos. A simple example is the $[[5,1,3]]$ code obtained from contracting two LU-deformed $[[4,2,2]]$ codes, whereas the undeformed contraction can never attain the perfect code~\cite{Cao2022_2,Su2023}. A simple reason behind this observation follows from operator pushing --- unless the LUs inserted in the contracted legs can be represented as transversal gates on the external legs, they are generally pushed to quantum gates that act non-locally on the physical qubits, thereby altering the code properties relative to the undeformed contractions. As a result, we expect this strategy to be able to generate non-trivial XP codes despite applying only local deformations on the CSS Legos.

Taking a cue from a similar process demonstrated in Ref.~\cite{Cao2022}, where two $[[4,2,2]]$ tensors were joined to create the $[[7,1,3]]$ Steane code, we follow a similar tensor network contraction by duplicating the $[[4,2,2]]$ tensors and then linking their ``logical legs'' \textcolor{black}{(the legs pointing up and down)}. An encoding map for the 7 qubit code is defined by designating a leg as the logical qubit while the rest as physical qubits (Fig.~\ref{fig:steane}).
\begin{figure}
    \centering
    \includegraphics[width=0.7\linewidth]{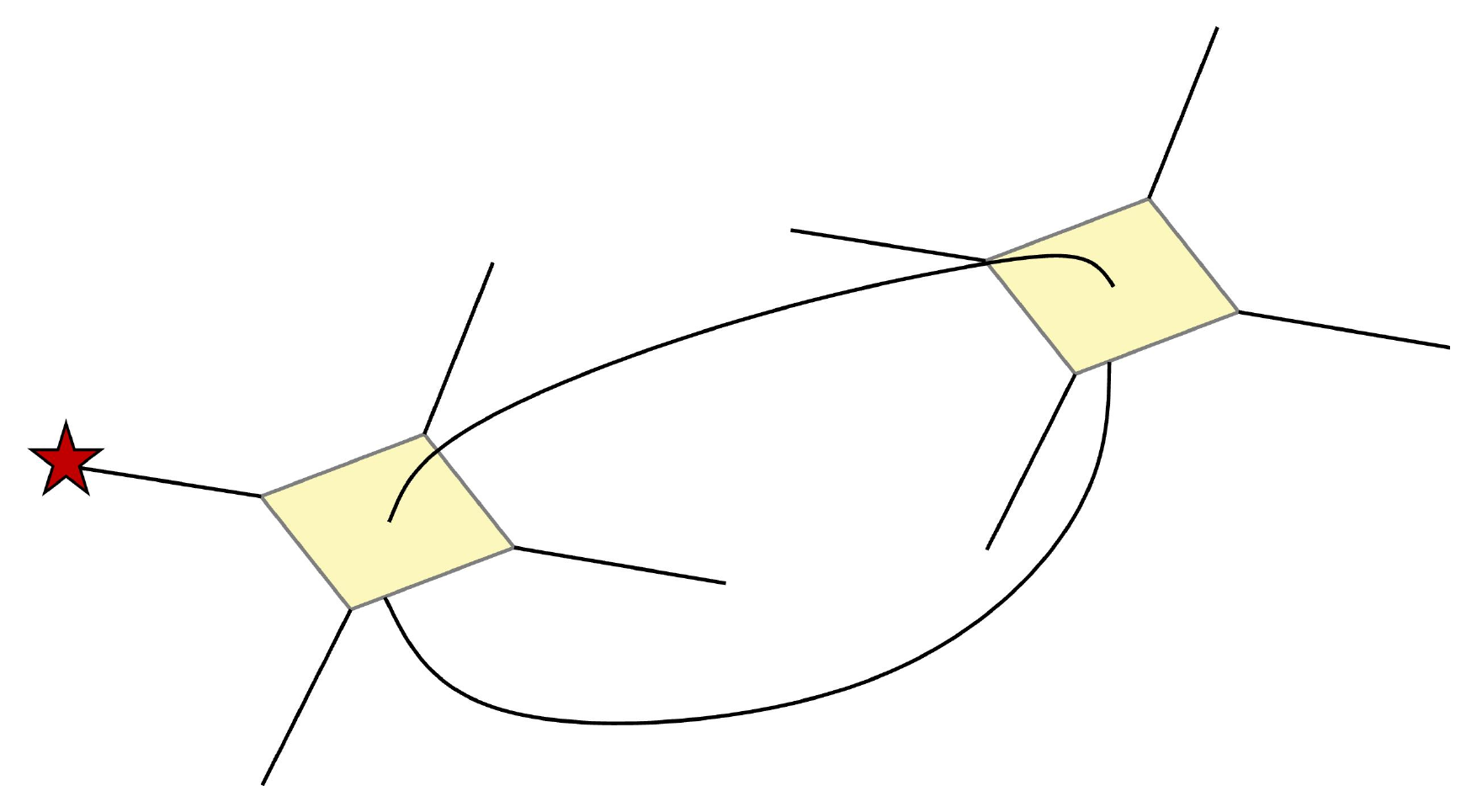}
    \caption{Two $[[4,2,2]]$ XP codes are contracted along the same legs to produce a rank-8 tensor. The logical leg is marked by a star. }
    \label{fig:steane} 
\end{figure}

Here we give two types of $[[7,1,3]]$ XP code with precision $N=8$ that can not be represented in PSF.
The first example is similar to the Steane code.
The check matrix for the 6-leg state from the $[[4,2,2]]$ tensor is 
\begin{equation}\left(\begin{array}{cccccc|cccccc|c}\label{eq:checkmatrixV1}
1& 0& 0& 1& 1& 1& 0& 0& 0& 0& 0& 0& 0\\
0& 1& 0& 1& 0& 1& 0& 0& 0& 0& 0& 0& 0\\
0& 0& 1& 1& 1& 0& 0& 0& 3& 4& 0& 4& 1\\
0& 0& 0& 0& 0& 0& 4& 0& 0& 4& 4& 4& 0\\
0& 0& 0& 0& 0& 0& 0& 4& 0& 4& 4& 0& 0\\
0& 0& 0& 0& 0& 0& 0& 0& 4& 4& 0& 4& 0
\end{array}\right).\end{equation}
\textcolor{black}{The construction starts with two copies of the 6-qubit state whose check matrix is given in Eq.~\eqref{eq:checkmatrixV1}. The first two qubits of each state correspond to the logical legs of the building-block tensor. As illustrated in Fig.~\ref{fig:steane}, we contract the corresponding logical legs of these two states.} \textcolor{black}{To be more explicit, we begin with a 12-qubit system containing the two copies. We then apply Algorithm~\ref{algo:1} twice to perform the tensor contraction, once for each pair of logical legs. This process results in an 8-qubit state.} \textcolor{black}{Finally, we interpret this 8-qubit state as an encoding map for a 7-qubit code. We designate one of the eight legs as the logical qubit and the remaining seven as the physical qubits. The stabilizer group for this 7-qubit code is given by the check matrix:}
\begin{equation}
\scalebox{0.8}{$
\left(
\begin{array}{ccccccc|ccccccc|c}
1& 0& 1& 0& 1& 0& 1& 0& 0& 0& 0& 0& 0& 0& 0\\
0& 1& 1& 0& 0& 1& 1& 0& 3& 4& 0& 0& 3& 4& 2\\
0& 0& 0& 1& 1& 1& 1& 0& 0& 0& 0& 0& 7& 0& 1\\
0& 0& 0& 0& 0& 0& 0& 4& 0& 4& 0& 4& 0& 4& 0\\
0& 0& 0& 0& 0& 0& 0& 0& 4& 4& 0& 0& 4& 4& 0\\
0& 0& 0& 0& 0& 0& 0& 0& 0& 0& 4& 4& 4& 4& 0
\end{array}
\right).$}
\end{equation}

{\color{black}We can verify the distance of the code by computing its weight enumerator polynomials \cite{Shor1997}. For any quantum code over qubits whose projection onto the code subspace is $\Pi$, they are given by a pair of polynomials

\begin{align}
    A(z) &= \sum_{d=0}^nA_dz^d = \sum_{E\in \mathcal P^n }\Tr[E\Pi]\Tr[E\Pi]z^{wt(E)}\\
    B(z) &= \sum_{d=0}^nB_dz^d = \sum_{E\in \mathcal P^n }\Tr[E\Pi E\Pi]z^{wt(E)}
\end{align}
where $\mathcal P^n$ is the Pauli group over $n$ qubits and $wt(E)$ computes the weight of the Pauli string $E$.  The two polynomials are related to each other by the MacWilliams identity and encodes key information of a quantum code or state. See \cite{Cao2023} and references therein for details. For Pauli stabilizer codes, the coefficients $A_d, B_d$ also correspond to the number stabilizers or normalizers of weight $d$. 

Here we make use of two properties of enumerators which hold for any quantum code. Firstly, the enumerators are invariant under local unitary transformations $\Pi \rightarrow U\Pi U^{\dagger}$ where $U$ acts as a tensor product over all the physical qubits. Therefore, two codes must not be LU equivalent if they have different enumerator polynomials. However, the converse is not true --- there can be codes or states with the same enumerator but are not LU-equivalent. For example, the enumerators of a $GHZ_4$ state are identical to those of two Bell states.  Secondly, Ref.~\cite{Ashikhmin1999} shows that for any quantum code, its distance is the smallest $d$ for which $B_d-A_d>0$.
}

The weight enumerators for the resulting code are
\begin{align}
    A(z)=1+&21z^4+42z^6\\
    B(z)=1+21z^3+&21z^4+126z^5+42z^6+45z^7,
\end{align}
which are the same as those of the Steane code.
Although its codewords $|\bar{0}\rangle,|\bar{1}\rangle$ are not Pauli stabilizer states, \textcolor{black}{this code is related to the Steane code by a local unitary (LU) transformation such that $U\mathcal{S}_{XP} U^{\dagger} = \mathcal{S}_{\rm Steane}$
where $U=P^{7/2}_2Z_4 P_6^{7/2}$, $P_i^{1/2}$ is defined as a unitary acting on the $i$th qubit $\begin{pmatrix}
    1&0\\
    0&\omega
\end{pmatrix}$, and $\mathcal{S}_{XP}, \mathcal{S}_{\rm Steane}$ are the stabilizer groups of the 7-qubit XP code and Steane code respectively. }

The second example is constructed out of 6-qubit tensors whose state description has a check matrix
\begin{equation}
\scalebox{0.8}{$\left(\begin{array}{cccccc|cccccc|c}
1& 0& 0& 1& 1& 1& 2& 0& 0& 2& 6& 6& 0\\
0& 1& 0& 1& 0& 1& 0& 0& 0& 2& 0& 6& 0\\
0& 0& 1& 1& 1& 0& 0& 0& 3& 6& 6& 4& 1\\
0& 0& 0& 0& 0& 0& 4& 0& 0& 4& 4& 4& 0\\
0& 0& 0& 0& 0& 0& 0& 4& 0& 4& 4& 0& 0\\
0& 0& 0& 0& 0& 0& 0& 0& 4& 4& 0& 4& 0
\end{array}\right).$}\end{equation}
\textcolor{black}{Applying the same contraction method as in the previous example yields a 7-qubit code with the stabilizer generators:}
\begin{equation}
\scalebox{0.8}{$
\left(\begin{array}{ccccccc|ccccccc|c}
 1&  0&  1&  0&  1&  0&  1&  0&  0&  2&  0&  0&  4&  6&  0\\
 0&  1&  1&  0&  0&  1&  1&  0&  3&  6&  0&  4&  3&  2&  2\\
 0&  0&  0&  1&  1&  1&  1&  0&  0&  0&  2&  0&  7&  2& 13\\
 0&  0&  0&  0&  0&  0&  0&  4&  0&  4&  0&  4&  0&  4&  0\\
 0&  0&  0&  0&  0&  0&  0&  0&  4&  4&  0&  0&  4&  4&  0\\
 0&  0&  0&  0&  0&  0&  0&  0&  0&  0&  4&  4&  4&  4&  0
\end{array}\right).$}\label{eq:checkmatrixV8}\end{equation}
In this case, the weight enumerators
\begin{align}
    A(z)=1+ &13z^4+24z^5+18z^6+8z^7\\
    B(z)=1+13z^3+ &53z^4+78z^5+74z^6+37z^7
\end{align}
are different from those of the Steane code, meaning that they are not LU-equivalent. 
Also note that the number of non-trivial logical operators in this code has fewer representations, meaning that it is less likely to suffer an undetectable logical error. 

In general, XP codes from random contractions would admit transversal non-Clifford gates. 
\textcolor{black}{We can generate a code with a transversal non-Clifford gate by re-designating a physical qubit as logical by Prop.~\ref{prop:3.1}. Here we transform the second example (Eq.~\eqref{eq:checkmatrixV8}) into a $[[6,2,2]]$ code by making the second physical qubit a new logical qubit. From the check matrix, it can be seen that the encoding map is indeed an isometry.}
\textcolor{black}{
First, we move the columns of the second physical qubit to the front and bring the matrix to its canonical form. Next, we remove all rows that have non-trivial support on this qubit. Finally, we delete the columns corresponding to this qubit. The resulting check matrix has four rows, and the code features one non-diagonal logical operator for each logical qubit. According to Thm.~\eqref{counting theorem}, these properties confirm that this is a valid XP code.}

\textcolor{black}{Under this change, a stabilizer from the original code that acts on the second qubit becomes a logical operator for the new code. Consider the stabilizer given by the second row of the check matrix in Eq. (V.8):
\begin{equation}
    S = \omega^2(X_2 P_2^3) (X_3 P_3^6) P_5^4 (X_6 P_6^3) (X_7 P_7^2)
\end{equation}
This operator $S$ is now interpreted as a logical operator for the $[[6,2,2]]$ code. By Prop.~\ref{prop:3.1}, its action is separated into a logical part and a physical part; it acts as the identity on the first logical qubit ($L_1$), as $(X P^3)^t$ on the new logical qubit ($L_2$), and as a transversal gate on the remaining six physical qubits.}

\textcolor{black}{The logical component is therefore $\overline{I_{L_1} \otimes (P^3X)_{L_2}}=(X_3 P_3^6) P_5^4 (X_6 P_6^3) (X_7 P_7^2)$. This is a non-Clifford logical operation, since for $N=8$ the $P^3$ gate is a non-Clifford phase gate. The physical part of the operator is transversal by construction. Note that the ``code shortening'' in Prop~\ref{prop:3.1} that converts non-Pauli stabilizer into transversal non-Clifford logical gates is very general and applies to other types of unitary gates regardless of the level of the Clifford hierarchy.}



On the other hand, identifying XP codes that are (LU-equivalents of) Pauli stabilizer states can also be useful, especially in the context of magic state distillation. For general stabilizer codes, we lack an effective method to identify its transversal non-Clifford gates, which may exist as a symmetry of the code, but is not apparent in the PSF. As QL can identify such codes more effectively, it can also be applied to tabulate ``hidden symmetries'' of potentially well-known Pauli stabilizer codes.

In addition to random contraction, an obvious method to generate other novel XP codes for which the code properties can be efficiently computed is to use such small XP codes as seed codes, and fuse them into a tensor network that is efficiently contractible. Some obvious choices are tree tensor networks, which correspond to code concatenation, or 2D hyperbolic tensor networks, which generate non-stabilizer versions of holographic codes. For instance, by replacing the seed tensors in~\cite{holosteane} with our XP codes above, one can build up non-Pauli stabilizer versions of holographic Steane codes (Fig.~\ref{fig:holoxp}). Although it was shown by~\cite{Cao:2023mzo} that Pauli stabilizer codes cannot support non-trivial area operators, XP extensions of holographic codes may circumvent the no-go theorem and produce helpful toy models that contain the necessary ingredients for emergent gravitational backreactions.
\begin{figure}
    \centering
\includegraphics[width=0.3\textwidth]{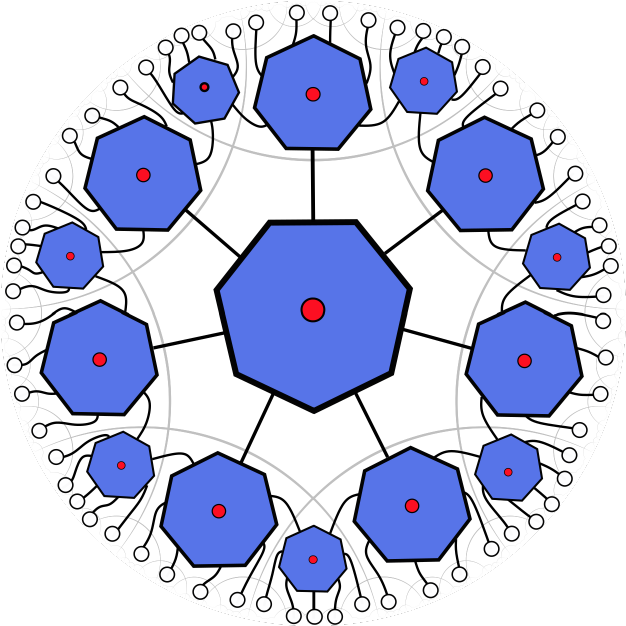}
    \caption{A holographic XP code where each heptagon represents the $[[7,1,3]]$ XP Lego that is not LU-equivalent to the Steane code. Red dots mark the logical legs.}
    \label{fig:holoxp}
\end{figure}

Using QL, it is easy to read off single-qubit transversal non-Clifford gates supported on the larger code through operator matching. Some examples were discussed in~\cite{Cao2022} by gluing different atomic codes with transversal $T$ gates. Here we do something slightly different --- we generate even smaller atomic codes that support fault-tolerant non-Clifford gates by tracing legs on the same code. 
Our starting point is the $[[15,1,3]]$ Reed-Muller code, a CSS code that can be equivalently seen as an $N=2$ XP code. In other words, this code space can be exclusively stabilized by Pauli stabilizers and its self-trace is a CSS code. However, the presence of logical $T$ in this language is hidden. On the other hand, this code can also be interpreted as an $N=8$ XP code, allowing us to better discern the presence of the transversal $T$ gate during operator matching. 
As $T$ is matched to $T^{\dagger}$, one method of self-trace that guarantees transversal $T$ from operator matching is to trace two physical legs with an additional insertion of $X$ operator (Fig.~\ref{fig:RMtrace}) by noting that $XT=\gamma T^{\dagger}X$ where $\gamma$ is a global phase. 
{\color{black} 
Note that this self-tracing with $X$ insertion is also sufficient to preserve transversality of codes that support bitwise transversal phase gates $$P(\varphi) = \begin{pmatrix}
   1 & 0\\
   0 & e^{i\varphi}
\end{pmatrix}$$
with any $\varphi$, where the matching condition $XP(\varphi)X= e^{i\varphi}P(\varphi)^{\dagger}$ is always satisfied. Hence such procedures can be applied to codes with transversal other phase gates in the Clifford hierarchy\footnote{We thank the anonymous referee for pointing this out.}.
}

Although the space of small triorthogonal codes is already heavily constrained~\cite{CalderbankT,Kay,Nezami}, smaller codes with fault-tolerant $T$ gates are possible~\cite{vasmer2022morphing} by breaking transversality weakly. Indeed, these self-traces typically yield codes with $Z$-distance $d_Z=1$ even though $d_X\geq 2$. 
For $n=13,11,9,7$, we can easily find codes with random self-contractions where the offending weight one logical $\bar{Z}$ is unique. Therefore, one can raise the Z distance by concatenating the support of weight-1 logical operator with another code $\mathcal{C}_Z$ that has $d_Z>1$. The resulting code has size at least $n+1$ and a fault-tolerant T gate $\bar{T}=T^{\otimes n-1}\otimes K$, similar to the $n=10$ code by~\cite{vasmer2022morphing}. The form of $K$ depends on the choice of $\mathcal{C}_Z$, which may be obtained through operator pushing. 

\begin{figure}
    \centering
    \includegraphics[width=0.45\linewidth]{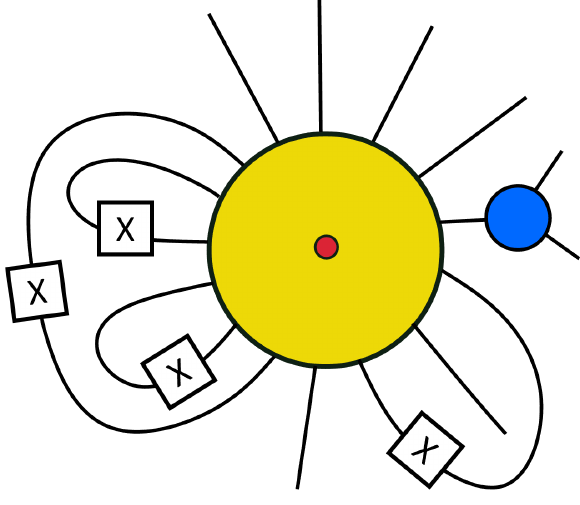}
    \caption{A $[[7,1,1]]$ code is produced from self-tracing the $[[15,1,3]]$ Reed-Muller code (yellow). It is then contracted with a repetition code, or GHZ tensor/X-spider (blue) to produce a $[[8,1,2]]$ code. Note that additional $X$s are inserted to ensure transversal $T$. Red dot denotes the logical input. }
    \label{fig:RMtrace}
\end{figure}

For example, four self-contractions produce a $[[7,1,d_Z=1/d_X=3]]$ code with check matrix

\begin{equation}
    \scalebox{0.8}{$\left(\begin{array}{ccccccc|ccccccc|c}
1& 1& 0& 0& 0& 1& 1& 0& 0& 0& 0& 0& 0& 0& 0\\
0& 0& 0& 1& 1& 1& 1& 0& 0& 0& 0& 0& 0& 0& 0\\
0& 0& 0& 0& 0& 0& 0& 1& 0& 1& 0& 1& 0& 1& 2\\
0& 0& 0& 0& 0& 0& 0& 0& 1& 1& 0& 1& 0& 1& 0\\
0& 0& 0& 0& 0& 0& 0& 0& 0& 0& 1& 1& 0& 0& 2\\
0& 0& 0& 0& 0& 0& 0& 0& 0& 0& 0& 0& 1& 1& 2
    \end{array}\right).$}
\end{equation}
Here we have set $N=2$ and the last column records the signs of the stabilizer generators. 
\textcolor{black}{A representation of the logical operators are $X_L = X_3X_6X_7$ and $Z_L = Z_3$.}
The code has weight enumerator polynomials:
\begin{align}
    A(z) &= 1+3z^2+23z^4+37z^6\\
    B(z) &= 1+z+3z^2+23z^3+23z^4\\
    &+111z^5+37z^6+57z^7.\nonumber
\end{align}
Hence other than the single weight-1 $\bar{Z}$, the rest of the logical operators all have weight $\geq 3$. {\color{black} Although this code only corrects a bit-flip error, its tensor is still interesting as it may be used as an non-isometric code to reduce stabilizer check weights while preserving the transversality of $T$ gates \cite{Cao:2025oep}. 

}

{\color{black}It is also straightforward to make it a error detection code through concatenation.}
If we choose $\mathcal{C}_Z$ to be a 2-qubit repetition code with stabilizer group $\langle XX\rangle$, then
\begin{align}
    K&\propto\cos(\pi/8)I\otimes I-i\sin(\pi/8) Z\otimes Z\\\nonumber
    &\propto diag(1, e^{i\pi/4}, e^{i\pi/4}, 1),
\end{align}
where we have dropped the global factors for simplicity. Then we arrive at an 8-qubit code, and its check matrix is
\setlength{\arraycolsep}{3.5pt}
\begin{equation}
    \left(\begin{array}{cccccccc|cccccccc|c}
1& 1& 0& 0& 1& 1& 0& 0& 0& 0& 0& 0& 0& 0& 0& 0& 0\\
0& 0& 1& 1& 1& 1& 0& 0& 0& 0& 0& 0& 0& 0& 0& 0& 0\\
0& 0& 0& 0& 0& 0& 1& 1& 0& 0& 0& 0& 0& 0& 0& 0& 0\\
0& 0& 0& 0& 0& 0& 0& 0& 1& 0& 0& 1& 0& 1& 1& 1& 2\\
0& 0& 0& 0& 0& 0& 0& 0& 0& 1& 0& 1& 0& 1& 1& 1& 0\\
0& 0& 0& 0& 0& 0& 0& 0& 0& 0& 1& 1& 0& 0& 0& 0& 2\\
0& 0& 0& 0& 0& 0& 0& 0& 0& 0& 0& 0& 1& 1& 0& 0& 2
    \end{array}\right)
\end{equation}
\setlength{\arraycolsep}{5pt}

with weight enumerators
\begin{equation}
\begin{aligned}
    A(z) &= 1+4z^2+18z^4+16z^5+28z^6+48z^7+13z^8\\
    B(z) &= 1+6z^2+20z^3+36z^4+120z^5+130z^6\\
    &\quad+116z^7+83z^8
\end{aligned}
\end{equation}
thus yielding an $[[8,1,2]]$ code with $\bar{T}=T^{\otimes 6}\otimes K$.

{\color{black}The logical $\bar{T}$ gate is fault-tolerant in the sense that it {propagates detectable errors to detectable errors}, which is the same notion of fault tolerance studied in \cite{vasmer2022morphing}. To see that the gate is FT, the only point of concern is the support on which $K$ acts as single qubit $T$ gate does not spread errors. Any $Z$ error will be caught by the inner code syndrome measurement. 

However, one can show that $X$ error can propagate to a logical phase gate times a detectable $X$ error on the outer code, which is detectable with a round of syndrome extraction. 
More explicitly, 
\textcolor{black}{an initial $X$ error on data qubits $j=7$ or $j=8$ transforms under the gate $K$ according to the identity:
$K X_j K^\dagger = \frac{1}{\sqrt{2}} (X_j \otimes I)(I \otimes I + iZ \otimes Z)$.
The key observation is that the propagated error contains an $X_i$ Pauli operator. Since this $X_j$ component anticommutes with any $Z$-type stabilizer acting on qubit $j$, the error is guaranteed to flip the corresponding syndrome bit. Therefore, any single qubit error remains detectable.}

As such, it is conceivable that the 8-qubit code can be used for magic state distillation where one always post-selects on the trivial syndrome. 
Even though this is similar to the $[[10,1,2]]$ protocol based on code morphing \cite{vasmer2022morphing}, there is no existing method for distilling the resource state for implementing the $K$ gate. Below we provide a schematic for FT synthesis of $|K\rangle$ using $|T\rangle$ and $|CS\rangle$ as a resource, but we leave its efficient synthesis to future work. 
Let 
\begin{align}
    &\ket{T} = T\ket{+}\\
    &\ket{CS} = CS\ket{++},
\end{align} allowing for the detection and correction of potential errors in these states. This construction is analogous to the approach detailed in Ref.~\cite{vasmer2022morphing}. Moreover, these resource states can be distilled following the procedures outlined in Ref.~\cite{Haah2018}, which predominantly yield only single-qubit $Z$ errors. 
The explicit circuit for magic state injection and the corresponding construction of the $K$ gate are illustrated in Fig.~\ref{fig:T-gate},\ref{fig:CS-gate}, and \ref{fig:K-gate}.

\begin{figure}
    \centering
    \scalebox{1.5}{$
\Qcircuit @C=1em @R=1em {
  & & \ctrl{1} & \qw    & \gate{S} & \qw \\
  \ket{T}& & \targ   & \qw & \meter \cwx 
}$}
    \caption{Implementation of $T$ gate through magic state injection.}
    \label{fig:T-gate}
\end{figure}
\begin{figure}
    \centering
    \scalebox{1.2}{
    $\Qcircuit @C=1em @R=1em {
    & \ctrl{2} & \qw      & \qw        & \qw      & \gate{Z} & \qw & \qw\\
    & \qw      & \ctrl{2} & \qw       & \qw      & \qw \cwx      & \gate{Z} & \qw \\
    \lstick{} & \targ    & \qw      & \gate{H}  & \qw & \meter \cwx   & \cwx \\
    \lstick{} & \qw      & \targ    & \gate{H}         &\qw &   \qw & \meter   \cwx 
      \inputgroupv{3}{4}{0.8em}{0.8em}{\ket{CS}}\\
}$}
    \caption{Implementation of $CS$ gate through magic state injection.}
    \label{fig:CS-gate}
\end{figure}

\begin{figure}
    \centering
    \scalebox{1.5}{$
\Qcircuit @C=1em @R=1em {
   & \qw & \gate{T}  &\qw  & \ctrl{1} &\qw & \qw \\
   & \qw   & \gate{T} &\gate{X} & \gate{S} & \gate{X} & \qw
}
$}
    \caption{A circuit construction of the $K$ gate.}
    \label{fig:K-gate}
\end{figure}

}

{\color{black}
Finally, we emphasize that the key difference from the 10-qubit protocol is that in order to achieve error suppression on the distillation output, the probability of physical $Z$ error on the input resource $|K\rangle$ need to be suppressed by the its weight. That is, weight-1 $Z$ errors occur with probability $O(p)$ while $ZZ$ error occur with probability $O(p^2)$. Whereas the $[[10,1,2]]$ protocol allows the $|CCZ\rangle$ resource to have Z error of any weight to have $O(p)$ physical error rate while also achieving overall logical error suppression on the distilled $|T\rangle$.

If we assume that requisite $|K\rangle$ state can be prepared where physical $Z$ of weight 1 occurs independently with probability $O(p)$, then it can be used for magic state distillation.
}

\section{Conclusion}\label{Section6}
Although QL accommodates all quantum codes in principle, its practical implementation beyond PSF has not been discussed. In this work, we provide the first systematic study of QL as applied to XP stabilizer codes. Using these XP codes as Lego blocks, we study how their symmetries transform under fusion or conjoining operations. For general XP regular codes/states, we provided a novel condition by which they can be identified. When using XP codes as Quantum Lego blocks, we find that XP codes are not closed under tracing (or Bell fusion) unlike their Pauli stabilizer cousins. However, if the fused tensor is XP, then operator matching remains sufficient in identifying its full set of XP symmetries. We provide atomic Legos that generate XP states and extend the check matrix conjoining algorithm to XPF, thus providing an efficient method to track the XP symmetries, similar to their PS counterpart in the previous work. Because the sequence of conjoining operations includes the application of quantum gates and measurements that are dual to XP states, these operations also provide an analog of~\cite{AaronsonGottesman} towards a more general classical simulation of regular XP processes.
In addition to the progress in formalism on both XP and QL, we also leverage the recent enumerator technique to construct optimal decoders, which addresses all i.i.d. single qubit error channels. Finally, we discover new instances of XP codes with better rates and distances from Lego fusion and discuss how they may be used to produce better toy models for AdS/CFT. We also identified a class of codes with fault-tolerant $T$ gates, including an $[[8,1,2]]$ code, which is the smallest such instance to the best of our knowledge.

As we have commented in the previous sections, various extensions and applications of the formalism and algorithms developed here have potential impact in a number of areas. 

\textit{New XP codes:} Although we made the first steps in generating XP codes in a more scalable way, we still lack a systematic search strategy to create novel XP codes with properties most relevant to fault tolerance. Future work in this area may also aim to generate candidates of self-correcting quantum memories as the code admits non-Abelian symmetries. 

\textit{Efficient classical simulations:} By improving and restricting the conjoining operation to act on certain XP processes that preserve the set of XP states, it is conceivable that a more complete extension of~\cite{AaronsonGottesman} can then be used to simulate non-(Pauli)-stabilizer states as well as certain unitary actions of higher Clifford hierarchies. Such an extension would also generalize~\cite{Bermejovega2013} as the states in question involve non-abelian symmetries.

\textit{Hidden symmetries:} Since XPF can be more effective in identifying transversal non-Clifford gates and hidden symmetries of Pauli stabilizer codes, techniques coupled with QL can also be used to identify codes that may be useful for efficient magic state distillations or protocols for code switching. 

\textit{Fault tolerance:} Finally, it remains unclear whether there are additional benefits by going beyond PSF. For example, one might hope that XP stabilizer codes are more advantageous compared to Pauli stabilizer codes in the practical context of fault-tolerant quantum computation. This has yet to be demonstrated, but would be a smoking gun evidence needed to justify a more extensive study of XP codes and non-(Pauli)-stabilizer codes at large. With our construction of explicit decoders, we can explore this angle in future work.

\section*{Acknowledgement}
We thank Xie Chen, Arpit Dua, Chris Pattison, John Preskill, Xiaoliang Qi, Thomas Scruby, Shuo Yang, and Haimeng Zhao for helpful discussions and comments.  We thank Benson Way for helpful comments and initial participation in the project. C.C. acknowledges the support by the Commonwealth Cyber Initiative at Virginia Tech, the Air Force Office of Scientific Research (FA9550-19-1-0360), and the National Science Foundation (PHY-1733907). The Institute for Quantum Information and Matter is an NSF Physics Frontiers Center.  Y.W. acknowledges the support of Shuimu Tsinghua Scholar Program of Tsinghua University.
R.S. acknowledges the support of Tsinghua University and Caltech through the Summer Undergraduate Research Fellowships (SURF) program.

\appendix
\section{Properties of XP code}
\label{app:xpproperty}
\subsection{Proof of Proposition~\ref{prop:2.1}}
In this section, we set up the necessary lemmas and produce the proof for proposition and theorems in Sec.~\ref{Section2}.

\begin{definition}[Group projector]\label{def:groupprojector}
    Let $G$ be a finite group of order $N_G$ that carries a unitary representation in a Hilbert space $\CH$. Then $\Pi_{G}\equiv \frac{1}{N_G}\sum_{g_i\in G}g_i$ is a projection operator in $\CH$.  We call $\Pi_G$ the group projector. Here the sum is in the sense of operators in a given Hilbert space.
\end{definition}

\begin{definition}
    Let $G$ be the same finite group as defined in Def \ref{def:groupprojector}. $\Pi_{\CH_G}$ is defined to be the projector to the maximal common eigenvalue $1$ eigenspace for all $g_i\in G$. This maximal common eigenspace is denoted $\CH_G$. With abuse of notation, $\Pi_{\CH_G}$ is called group code projector.
\end{definition}

\begin{lemma}[Group projector equals to group code projector]\label{lem:projector}
    Let $G$ be a finite group carrying a unitary representation in a Hilbert space $\CH$. Then $\Pi_G=\Pi_{\CH_G}$.
\end{lemma}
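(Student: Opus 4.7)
The plan is to first verify that $\Pi_G$ really is an orthogonal projector, and then to show that its image coincides with $\mathcal{H}_G$; since two orthogonal projectors with the same image must be equal, this is enough.

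For the first step, I would exploit the group structure directly. Idempotence follows from the rearrangement lemma: in
\begin{equation}
\Pi_G^2 = \frac{1}{N_G^2}\sum_{g\in G}\sum_{h\in G} g h,
\end{equation}
for every fixed $g$ the map $h\mapsto gh$ is a bijection on $G$, so the inner sum equals $\sum_{g'\in G} g' = N_G\,\Pi_G$, and the two factors of $N_G$ cancel. Hermiticity uses that the representation is unitary: $g^\dagger = g^{-1}$, and $g\mapsto g^{-1}$ is also a bijection on $G$, so $\Pi_G^\dagger = \Pi_G$. Hence $\Pi_G$ is an orthogonal projector.

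Next, I would identify its image with $\mathcal{H}_G$ by checking both inclusions. For any $g\in G$, the rearrangement argument gives $g\,\Pi_G = \Pi_G$, so every vector of the form $\Pi_G|\psi\rangle$ is fixed by every group element, i.e.\ lies in $\mathcal{H}_G$. Conversely, if $|\psi\rangle\in\mathcal{H}_G$, then $g|\psi\rangle=|\psi\rangle$ for all $g$, and averaging over the group gives $\Pi_G|\psi\rangle=|\psi\rangle$, so $|\psi\rangle$ lies in the image of $\Pi_G$. Therefore $\mathrm{Im}(\Pi_G)=\mathcal{H}_G$, and since $\Pi_{\mathcal{H}_G}$ is by definition the orthogonal projector onto this subspace, $\Pi_G=\Pi_{\mathcal{H}_G}$.

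I do not expect any genuine obstacle: the only subtlety worth flagging is the need to keep track that the sum defining $\Pi_G$ is inside the operator algebra of the representation (so identifications like $g^\dagger=g^{-1}$ are applied to the image of $g$ under the representation, not to abstract group elements), and that the finiteness of $G$ is essential both for the sum to make sense and for the normalization $1/N_G$ to be well-defined. Once these conventions are fixed, the proof reduces to the two short bijection arguments above.
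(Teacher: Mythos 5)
Your proof is correct, and it takes a genuinely different route from the paper's. You establish from scratch that $\Pi_G$ is an orthogonal projector (idempotence via the rearrangement lemma, Hermiticity via unitarity and the bijection $g\mapsto g^{-1}$) and then identify its range with the fixed subspace $\CH_G$ by two direct inclusions, concluding from the uniqueness of the orthogonal projector onto a given subspace. The paper instead takes the projector property of $\Pi_G$ as given (it is asserted, not proved, in its Definition of the group projector) and argues by contradiction on the operator ordering: it first notes $\Pi_{\CH_G}\leq\Pi_G$, then supposes the inequality is strict, extracts a unit vector $\ket{\theta}\notin\CH_G$ with $\bra{\theta}\Pi_G\ket{\theta}=1$, and uses the unitarity bound $|\bra{\theta}g\ket{\theta}|\leq 1$ together with the saturation of that bound to force $g\ket{\theta}=\ket{\theta}$ for every $g$, a contradiction. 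Your version is the standard group-averaging (Reynolds operator) argument: it is self-contained, since it does not rely on the unproven assertion that $\Pi_G$ is a projection, and it avoids the partial-order-on-projectors machinery; the paper's version is marginally shorter once the projector property is granted and highlights the geometric fact that a unitary can fix the diagonal matrix element $\bra{\theta}g\ket{\theta}=1$ only by fixing the vector itself. Both arguments are sound; yours arguably fills a small logical gap left by the paper's definition.
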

\begin{proof}
    On the one hand, by definition,  $\forall\ket{\psi}\in\CH_G$, $\Pi_G\ket{\psi}=\ket{\psi}$, so $\Pi_G \Pi_{\CH_G}=\Pi_{\CH_G}\Pi_G=\Pi_{\CH_G}$. This implies $\Pi_{\CH_G}\leq\Pi_G$. On the other hand, suppose $\Pi_{\CH_G}<\Pi_G$, then there must exist at least one state $\ket{\theta}\notin\CH_G$, such that $\Pi_{\ket{\theta}}\equiv\ket{\theta}\bra{\theta}$ satisfies $\Pi_{\ket{\theta}}<\Pi_G$. So $\Pi_{\ket{\theta}}\Pi_G\Pi_{\ket{\theta}}=\Pi_{\ket{\theta}}$. This is $\ket{\theta}\bra{\theta}\Pi_G\ket{\theta}\bra{\theta}=\ket{\theta}\bra{\theta}$, which requires $\bra{\theta}\Pi_G\ket{\theta}=1$. Because $g_i$ is a unitary, we have $\forall g_i\in G, \bra{\theta}g_i\ket{\theta}\leq 1$. To satisfy $\bra{\theta}\Pi_G\ket{\theta}=1$, it is necessary that $\forall g_i\in G, \bra{\theta}g_i\ket{\theta}= 1$. This implies $\forall g_i\in G, g_i\ket{\theta}=\ket{\theta}$. This contradicts with $\ket{\theta}\notin\CH_G$. So $\Pi_G=\Pi_{\CH_G}$.
\end{proof}

\begin{lemma}[Projector to the code subspace]\label{lem:codeprojector} For an XP code specified by its stabilizer $\mathcal{S}$ and code subspace $\mathcal{C}$, the projector to the code subspace $\CH_{\mathcal{C}}$, denoted as $\Pi_{\mathcal{C}}$ is defined to be the group code projector of its stabilizer group. $\Pi_{\mathcal{C}}\equiv\Pi_{\CH_S}$. Denote the logical identity group of the code as $\mathcal{S}_{LID}$. Then $\Pi_{\mathcal{C}}=\Pi_S=\Pi_{\mathcal{S}_{LID}}$.
\end{lemma}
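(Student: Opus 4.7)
The plan is to decompose the claim into two equalities and reduce each of them to Lemma~\ref{lem:projector}, which has just been established. Writing the goal as $\Pi_{\mathcal{C}} = \Pi_S = \Pi_{\mathcal{S}_{LID}}$, the first equality is essentially immediate from the previous lemma, while the second equality requires a short subspace identification that uses the defining property of the logical identity group.

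First I would dispatch $\Pi_{\mathcal{C}} = \Pi_S$. Since $\Pi_{\mathcal{C}}$ has been defined in the statement as $\Pi_{\mathcal{H}_{\mathcal{S}}}$, i.e. the group code projector for the group $G = \mathcal{S}$, applying Lemma~\ref{lem:projector} with this choice of $G$ immediately yields $\Pi_{\mathcal{C}} = \Pi_{\mathcal{H}_{\mathcal{S}}} = \Pi_{\mathcal{S}}$. The only thing worth noting is that Lemma~\ref{lem:projector} requires $\mathcal{S}$ to be a finite group carrying a unitary representation on $\mathcal{H}$, which holds here: at fixed precision $N$ the XP group $\mathcal{G}_N^n$ is finite and its elements are unitaries by construction, so any stabilizer subgroup $\mathcal{S} \subseteq \mathcal{G}_N^n$ inherits both properties.

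Next I would attack $\Pi_{\mathcal{C}} = \Pi_{\mathcal{S}_{LID}}$ by applying Lemma~\ref{lem:projector} again, this time with $G = \mathcal{S}_{LID}$, giving $\Pi_{\mathcal{S}_{LID}} = \Pi_{\mathcal{H}_{\mathcal{S}_{LID}}}$. The remaining task is therefore to show the subspace identity $\mathcal{H}_{\mathcal{S}_{LID}} = \mathcal{C}$. I would prove this by two inclusions. For $\mathcal{H}_{\mathcal{S}_{LID}} \subseteq \mathcal{C}$, use that by definition $\mathcal{S} \leq \mathcal{S}_{LID}$, so any common $+1$ eigenvector of all elements of $\mathcal{S}_{LID}$ is in particular a common $+1$ eigenvector of $\mathcal{S}$, hence lies in $\mathcal{C}$. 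For the reverse inclusion $\mathcal{C} \subseteq \mathcal{H}_{\mathcal{S}_{LID}}$, invoke the defining property of the LID recalled in the main text: $\mathcal{S}_{LID}$ is the group of all XP operators that act as the logical identity on $\mathcal{C}$, i.e.\ $g |\psi\rangle = |\psi\rangle$ for every $g \in \mathcal{S}_{LID}$ and every $|\psi\rangle \in \mathcal{C}$. This says exactly that $\mathcal{C}$ lies inside the common $+1$ eigenspace $\mathcal{H}_{\mathcal{S}_{LID}}$, completing the equality.

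I expect the main subtlety to be not the algebraic manipulations but ensuring that the LID is invoked with the correct characterization. The text introduces $\mathcal{S}_{LID}$ as ``the unique subgroup that contains all stabilizer groups of $\mathcal{C}$ as subgroups at any fixed precision,'' and one should verify that this is equivalent to the stronger statement ``every element of $\mathcal{S}_{LID}$ fixes $\mathcal{C}$ pointwise.'' This equivalence follows because the union of all XP stabilizer groups of $\mathcal{C}$ at precision $N$ generates a subgroup that is itself a stabilizer group of $\mathcal{C}$ (the product of two operators fixing $\mathcal{C}$ still fixes $\mathcal{C}$), so $\mathcal{S}_{LID}$ coincides with the maximal XP subgroup pointwise fixing $\mathcal{C}$. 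Once this point is made explicit, both inclusions drop out, Lemma~\ref{lem:projector} supplies the projector identities, and the chain $\Pi_{\mathcal{C}} = \Pi_{\mathcal{S}} = \Pi_{\mathcal{S}_{LID}}$ follows.
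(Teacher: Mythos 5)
Your proposal is correct and follows essentially the same route as the paper: both arguments reduce to Lemma~\ref{lem:projector} and combine the two inclusions $\mathcal{S}\subseteq\mathcal{S}_{LID}$ (giving $\mathcal{H}_{\mathcal{S}_{LID}}\subseteq\mathcal{H}_{\mathcal{S}}$) and $\mathcal{C}\subseteq\mathcal{H}_{\mathcal{S}_{LID}}$ (from the defining property of the LID), the only cosmetic difference being that the paper phrases the sandwich as projector inequalities while you prove the subspace equality directly. Your added remark on reconciling the two characterizations of $\mathcal{S}_{LID}$ is a sensible clarification but not a departure from the paper's argument.
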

\begin{proof}
    The first equality is by Lemma \ref{lem:projector}. To see the second one, we first note that in general $S\subseteq\mathcal{S}_{LID}$, so $\Pi_{\CH_S}\geq\Pi_{\CH_{\mathcal{S}_{LID}}}$. This is because to find $\CH_{\mathcal{S}_{LID}}$ one needs to impose (potentially) more constraints on top of $\CH_S$. On the other hand, by the definition of $\mathcal{S}_{LID}$, $\mathcal{C}\subseteq\CH_{\mathcal{S}_{LID}}$, this means $\Pi_{\mathcal{C}}\leq\Pi_{\CH_{\mathcal{S}_{LID}}}$. By Lemma \ref{lem:projector}, all these projectors are equivalent. 
\end{proof}

\textbf{Proposition \ref{prop:2.1}} immediately follows from Lemma \ref{lem:codeprojector} --- we can write code projector in terms of the stabilizer generators of $\mathbf{S}_X$ and $\mathbf{S}_Z$. 
Because $\Pi_\mathcal{C}=\Pi_{S}$ by Lemma    \ref{lem:codeprojector}, one can use Property 1 in the Appendix B of~\cite{Webster2022} to write $\Pi_S$ explicitly as stated in Proposition~\ref{prop:2.1}. The degree (the minimal integer $m$ such that $S_{Z_j}^m=\mathbb{I}$) of a given $S_{Z_j}$ might be a proper divisor of $N$, but this does not change the result of the sum.

\subsection{Proof of Theorem~\ref{counting theorem}}
We first review some notations and definitions introduced by~\cite{Webster2022}. The code subspace of an XP code is the intersection of the $+1$ eigenspaces of diagonal operators and non-diagonal operators.
The diagonal operators determine the Z-support of the codewords.
We define the Z-support of a state as 
\begin{equation}
    \mathrm{ZSupp}(\ket{\psi})=\{\mathbf{e}\in \mathbb{Z}_2^n : \braket{\mathbf{e}|\psi}\neq0\}.
\end{equation}
Denote the codewords as $\ket{\kappa_i}$.
It has been proved that each codeword can be written as
\begin{equation}
    \ket{\kappa_i} = \mathbf{O}_{\mathbf{S}_X} \ket{\mathbf{m}_i}
\end{equation}
where 
$\mathbf{O}_{\mathbf{S}_X} = \sum_{S\in\braket{\mathbf{S}_X}} S$
and $\mathbf{m}_i \in \mathbb{Z}_2^n$. 
The set of orbit representatives $\{\mathbf{m}_i\}$ is denoted as $E_m$.
The Z-support of the codeword $\ket{\kappa_i}$ has a coset decomposition 

\begin{align}
    \mathrm{ZSupp}(\ket{\kappa_i}) &= \mathbf{m}_i + \langle S_X \rangle \\\nonumber
    &= \{(\mathbf{m}_i + \mathbf{u}S_X)\mod 2: \mathbf{u}\in\mathbb{Z}_2^r\}.
\end{align}
Here $S_X$ refers to the x-part of $\mathbf{S}_X$, a set of binary strings.
The Z-support for the entire codespace can then be represented as the union of the Z-supports of all the codewords
\begin{equation}
    E = \bigcup_i \mathrm{ZSupp}(\ket{\kappa_i}) = E_m + \langle S_X\rangle.
\end{equation}
$E_m$ can be further decomposed into the cosets of $L_X$, where $L_X$ is the x-part of the non-diagonal logical operators
\begin{equation}
    E_m = E_q + \braket{L_X}.
\end{equation}
We refer to $E_q$ as the \textit{core} of the code, with cardinality $|E_q|$.
XP codes can be categorized based on the size of $|E_q|$.
If $|E_q|=1$, then this code is called an \textit{XP-regular code}; otherwise, it is non-regular.

We now present a few novel results on XP regular states, and by extension regular XP codes through the Choi-Jamiolkowski isomorphism. Recall that the number of stabilizer generators $m$ is equal to the number of qubits $n$ for Pauli stabilizer states. We extend this equality to XP states and establish a connection between the number of canonical generators and $n$. This property helps us identify potential XP states.

An XP-regular code can be mapped to a CSS code through a unitary transformation given by:
\begin{equation}
U = \sum_{\mathbf{e}_{ij}\in E} \omega^{p_{ij}} |\mathbf{e}_{ij}\rangle\langle \mathbf{e}_{ij}| + \sum_{\mathbf{e}\in \mathbb{Z}_2^n \backslash E} |\mathbf{e}\rangle\langle \mathbf{e}|,
\end{equation}
where $E$ is the Z-support of the codespace. 
This unitary transformation reveals that the difference between an XP-regular code and its corresponding CSS code lies only in the phase factors in front of the Z-supports, while they both share the same Z-support. 
Consequently, there exists a set of diagonal Pauli operators $\mathbf{R}_Z$ that uniquely stabilize these Z-supports.

\begin{lemma}
For an XP-regular code, the group projector of its diagonal logical identity (LID) subgroup $\langle \mathbf{S}_Z \rangle$ is equivalent to the group projector of the diagonal LID subgroup $\langle \mathbf{R}_Z \rangle$ of its corresponding CSS code.
\end{lemma}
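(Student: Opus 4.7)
The plan is to invoke Lemma~\ref{lem:projector} to reduce the claim to the equality of the maximal common $+1$ eigenspaces, $\mathcal{H}_{\langle \mathbf{S}_Z\rangle} = \mathcal{H}_{\langle \mathbf{R}_Z\rangle}$. Because both groups consist of diagonal operators in the computational basis, each eigenspace is spanned by a set of computational basis vectors $\{|\mathbf{e}\rangle\}$. By construction, $\mathbf{R}_Z$ is chosen so that $\mathcal{H}_{\langle \mathbf{R}_Z\rangle} = \mathrm{span}\{|\mathbf{e}\rangle : \mathbf{e}\in E\}$, so the whole argument reduces to showing that the fixed basis set of $\langle \mathbf{S}_Z\rangle$ is also exactly $E$.

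For the inclusion $\mathcal{H}_{\langle \mathbf{R}_Z\rangle} \subseteq \mathcal{H}_{\langle \mathbf{S}_Z\rangle}$, I would fix $\mathbf{e}\in E$ and pick a codeword $|\kappa\rangle$ whose Z-support contains $\mathbf{e}$, which exists because $E$ is the union of codeword Z-supports. For any $S_Z \in \langle \mathbf{S}_Z\rangle$ we have $S_Z|\mathbf{f}\rangle = \lambda_{\mathbf{f}}|\mathbf{f}\rangle$ by diagonality, and $S_Z|\kappa\rangle = |\kappa\rangle$ combined with expanding $|\kappa\rangle$ in the computational basis forces $\lambda_{\mathbf{f}}=1$ for every $\mathbf{f}$ in the Z-support of $|\kappa\rangle$, and in particular $\lambda_{\mathbf{e}}=1$. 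For the reverse inclusion, I would observe that each element of $\mathbf{R}_Z$ acts as identity on every $|\mathbf{e}\rangle$ with $\mathbf{e}\in E$ and is therefore a diagonal stabilizer of every XP-code codeword (whose Z-support lies in $E$). By maximality of the diagonal LID $\langle \mathbf{S}_Z\rangle$, this forces $\langle \mathbf{R}_Z\rangle \subseteq \langle \mathbf{S}_Z\rangle$, which in turn gives $\mathcal{H}_{\langle \mathbf{S}_Z\rangle} \subseteq \mathcal{H}_{\langle \mathbf{R}_Z\rangle}$. Combining both inclusions and applying Lemma~\ref{lem:projector} to each side then yields $\Pi_{\langle \mathbf{S}_Z\rangle} = \Pi_{\mathcal{H}_{\langle \mathbf{S}_Z\rangle}} = \Pi_{\mathcal{H}_{\langle \mathbf{R}_Z\rangle}} = \Pi_{\langle \mathbf{R}_Z\rangle}$.

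The step I expect to be the most delicate is the containment $\langle \mathbf{R}_Z\rangle \subseteq \langle \mathbf{S}_Z\rangle$: one must justify that each Pauli generator in $\mathbf{R}_Z$ is literally an element of the diagonal XP LID, not merely an operator whose action on $\mathcal{C}$ agrees with it. This requires identifying Pauli $Z$ with $P^{N/2}$ so that $\mathbf{R}_Z$ is genuinely a member of the XP group at precision $N$ (which is where $N = 2^t$ implicitly enters), together with appealing to the fact that $\langle \mathbf{S}_Z\rangle$ is the \emph{full} diagonal part of the LID rather than some smaller stabilizer choice. The XP-regularity hypothesis enters only through the existence of the diagonal unitary $U$ and the resulting well-defined $\mathbf{R}_Z$; once these are in hand, the rest of the argument is purely a statement about common fixed subspaces of diagonal groups.
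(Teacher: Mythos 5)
Your proposal is correct and follows essentially the same route as the paper: the paper's proof simply asserts that both group projectors project onto the span of the shared Z-support $E$, and your two-inclusion argument (via Lemma~\ref{lem:projector}, diagonality, and the maximality of the diagonal LID) is a careful justification of exactly that assertion, with XP-regularity entering, as you note, only through the existence of $\mathbf{R}_Z$ cutting out $\mathrm{span}\{\ket{\mathbf{e}}:\mathbf{e}\in E\}$.
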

\begin{proof}
The group projector $\Pi_{\langle \mathbf{S}_Z \rangle}$ projects onto the Z-support of the XP code, while $\Pi_{\langle \mathbf{R}_Z \rangle}$ projects onto the Z-support of the corresponding CSS code. Since they share the same Z-support, the two projectors must be equal:
\begin{equation}\label{Z-support Projector}
\Pi_{Z} = \frac{1}{|\braket{\mathbf{S}_Z}|}\sum_{s\in \langle \mathbf{S}_Z \rangle} s = \frac{1}{2^{|\mathbf{R}_Z|}} \sum_{r \in \langle \mathbf{R}_Z \rangle} r.
\end{equation}
\end{proof}

Based on the previous lemma, we can now establish the following counting theorem, which establishes a relationship between the number of non-diagonal LID generators $\mathbf{S}_X$, the number of non-diagonal logical operator generator $\mathbf{L}_X$ of an XP-regular code, and the number of diagonal LID generators for its corresponding CSS code.
\begin{theorem}
    An $n$-qubit XP-regular code $\mathcal{C}$ must have $|\mathbf{S}_X|+|\mathbf{L}_X|+|\mathbf{R}_Z|=n$ or $|\mathbf{S}_X|+|\mathbf{R}_Z|=n$ for an XP regular state.
\end{theorem}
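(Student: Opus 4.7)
The plan is to compare the size of the Z-support $E$ of the XP-regular code in two different ways: once through the XP canonical data $(\mathbf{S}_X,\mathbf{L}_X)$, and once through the corresponding CSS Z-support generators $\mathbf{R}_Z$. The counting equation falls out by equating the two.

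First, I would exploit regularity to pin down the structure of $E$. Because the code is XP-regular, $|E_q|=1$, so we may take $E_q=\{\mathbf{0}\}$ (up to an overall translation). By the coset decompositions recalled from~\cite{Webster2022} in the preceding subsection,
\begin{equation}
E \;=\; E_q + \langle L_X\rangle + \langle S_X\rangle \;=\; \langle L_X\rangle + \langle S_X\rangle,
\end{equation}
where $L_X, S_X\subset\mathbb{F}_2^n$ denote the X-parts of the canonical non-diagonal logical generators and non-diagonal stabilizer generators. I next need to show $|E|=2^{|\mathbf{S}_X|+|\mathbf{L}_X|}$. For this, the key step is that the X-parts of $\mathbf{S}_X\cup\mathbf{L}_X$ are $\mathbb{F}_2$-linearly independent: the canonical form puts $S_X$'s X-part into RREF, and logical X generators are chosen to be independent of the stabilizer X-support (otherwise they would reduce to diagonal operators modulo $\langle\mathbf{S}_X\rangle$, contradicting their non-diagonal character). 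Consequently $\langle L_X\rangle+\langle S_X\rangle$ is a direct sum of $\mathbb{F}_2$-vector spaces of dimension $|\mathbf{S}_X|+|\mathbf{L}_X|$.

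Next I would count $|E|$ from the CSS side. By the lemma just proved, the Z-support projector can be written via the diagonal Pauli group $\langle\mathbf{R}_Z\rangle$, and for a Pauli CSS stabilizer defined solely by $|\mathbf{R}_Z|$ independent commuting Z-type checks, the stabilized $+1$-eigenspace in the computational basis is spanned by exactly
\begin{equation}
|E|\;=\;2^{n-|\mathbf{R}_Z|}
\end{equation}
basis vectors. Equating the two expressions gives $2^{|\mathbf{S}_X|+|\mathbf{L}_X|}=2^{n-|\mathbf{R}_Z|}$, hence $|\mathbf{S}_X|+|\mathbf{L}_X|+|\mathbf{R}_Z|=n$. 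The statement for an XP-regular \emph{state} is the special case where the code has no logical degrees of freedom, i.e., $|\mathbf{L}_X|=0$, reducing the identity to $|\mathbf{S}_X|+|\mathbf{R}_Z|=n$.

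The main obstacle I anticipate is the independence argument used to get $|E|=2^{|\mathbf{S}_X|+|\mathbf{L}_X|}$ from the sum $\langle L_X\rangle+\langle S_X\rangle$. Everything else is bookkeeping once the canonical form properties are invoked. One must verify carefully that an X-part of a non-diagonal logical generator cannot be expressed as an $\mathbb{F}_2$-combination of X-parts of $\mathbf{S}_X$: if it could, then multiplying the logical generator by the corresponding stabilizers would yield a purely diagonal operator that acts non-trivially on $\mathcal{C}$, contradicting that non-diagonal logicals are chosen independently of the diagonal sector in the canonical form. With that step nailed down, the counting follows immediately.
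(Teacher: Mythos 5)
Your argument is correct, but it takes a different route from the paper's primary proof. The paper computes $\Tr(\Pi_{\mathcal{C}})$ in two ways: on one hand it equals $\dim\mathcal{C}=2^{|\mathbf{L}_X|}$; on the other hand, writing $\Pi_{\mathcal{C}}$ as the group average of the LID, every element with a non-trivial X-part is traceless, so only the diagonal subgroup survives, and (using the lemma identifying $\Pi_{\langle\mathbf{S}_Z\rangle}$ with $\Pi_{\langle\mathbf{R}_Z\rangle}$) only the identity in $\langle\mathbf{R}_Z\rangle$ contributes $2^n$; the identity follows by taking logarithms. You instead count the cardinality of the Z-support $E$ twice: as the affine space $\mathbf{q}+\langle L_X\rangle+\langle S_X\rangle$ of size $2^{|\mathbf{S}_X|+|\mathbf{L}_X|}$ (using regularity, $|E_q|=1$), and as the computational-basis support of the $+1$-eigenspace of $|\mathbf{R}_Z|$ independent Z-type checks, of size $2^{n-|\mathbf{R}_Z|}$. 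This is essentially the ``alternative proof'' the paper mentions in one sentence (rank--nullity for $E$ viewed as a matrix), which you have fleshed out. The trade-off: the paper's trace argument sidesteps any discussion of linear independence of X-parts, since tracelessness of non-diagonal elements does that work automatically; your route requires the independence of the X-parts of $\mathbf{S}_X\cup\mathbf{L}_X$, which you correctly flag as the crux. Your justification for that step could be tightened --- the cleanest way to see it is that $E_m=E_q+\langle L_X\rangle$ and $E=E_m+\langle S_X\rangle$ are genuine coset decompositions in the Webster et al.\ formalism (the orbit representatives $\mathbf{m}_i$ are distinct modulo $\langle S_X\rangle$), which directly forces $\langle L_X\rangle\cap\langle S_X\rangle=\{0\}$, rather than the slightly informal appeal to the ``non-diagonal character'' of logical generators. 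On the plus side, your version makes the geometric content of regularity ($E$ is a single affine subspace) explicit, which the trace computation hides. Both handle the state case identically, via $|\mathbf{L}_X|=0$.
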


\begin{proof}
    Taking trace on the projector onto the codespace, we get
    \begin{align}
        \Tr(\Pi_{\mathcal{C}}) &= \dim \mathcal{C} = 2^{|\mathbf{L}_X|}\\
                 &= \Tr\left(\frac{1}{|\mathcal{S}|} \sum_{s\in \mathcal{S}} s\right)\\
                 &= \Tr\left(\frac{1}{2^{|\mathbf{S}_X|}} \frac{1}{|\braket{\mathbf{S}_Z}|} \sum_{s \in \braket{\mathbf{S}_Z}} s\right) \\
&= \Tr\left(\frac{1}{2^{|\mathbf{S}_X|}} \frac{1}{2^{|\mathbf{R}_Z|}} \sum_{s \in \braket{\mathbf{R}_Z}} s\right) \\
                 &= \frac{1}{2^{|\mathbf{S}_X|}} \frac{1}{2^{|\mathbf{R}_Z|}} 2^n
    \end{align} 
    The equal sign in the third row arises from the fact that for any group element $s \in \mathcal{S}$, if it contains a non-diagonal part, then its contribution becomes zero after tracing. 
    Consequently, we only need to consider the diagonal subgroup in the summation.
    The equal sign in the fourth row arises from the observation that within the diagonal Pauli subgroup $\langle \mathbf{R}_Z \rangle$, only the identity element has a non-zero trace.  We complete the proof by taking logarithm on both sides of the equation.
    A logical operator on a one-dimensional codespace must either be a stabilizer, or give a phase to the state.
    Consequently, for XP states, we have $|\mathbf{L}_X|=0$.

    An alternative proof attains the same result by applying the rank-nullity theorem for $E$ as a matrix, where its support maps to the generators of $\mathbf{S}_X$ while its kernel to the generators of $\mathbf{R}_Z$.
\end{proof}

The above theorem applies to all XP-regular codes. 
However, it is inconvenient to consider $\mathbf{S}_X$, which belongs to the LID, and $\mathbf{R}_Z$, which belongs to the stabilizers of the corresponding CSS code, together. 
A more natural approach is to count both $\mathbf{S}_X$ and $\mathbf{S}_Z$ simultaneously, which is equivalent to counting the number of generators of LID. 
In the following, we will prove that this more convenient counting theorem holds for $N=2^t$, which includes the states we are interested in (those capable of expressing the $S$ gate and $T$ gate naturally).

\begin{theorem}\label{SZ=RZ}
    If $N=2^t$, then $|\mathbf{S}_Z|=|\mathbf{R}_Z|$.
\end{theorem}
\begin{proof}
    Let $s=|\mathbf{S}_Z|$ and $r=|\mathbf{R}_Z|$.
    First, we prove that $s<r$ is not possible.
    Since $\braket{\mathbf{R}_Z} \subset \braket{\mathbf{S}_Z}$, any generator of the Pauli subgroup can be represented by the product of generators of the diagonal LID group
    \begin{equation}
        R_{Zi} = \prod_{j=1}^{s} S_{Zj}^{m_{ij}}
    \end{equation}
    where $m_{ij}\in \mathbb{Z}_N$ are entries of $M\in \mathbb{Z}_N^{r\times s}$.
    The group multiplication of $R_{Zi},R_{Zj}$ can be represented by the addition of corresponding rows in $M$
    \begin{align}
        R_{Zi}R_{Zj} 
        &= \left(\prod_{k=1}^{s} S_{Zk}^{m_{ik}}\right)
        \left(\prod_{k=1}^{s} S_{Zk}^{m_{jk}}\right)\\\notag
        &= \prod_{k=1}^{s} S_{Zk}^{m_{ik}+m_{jk}}.
    \end{align}
    Therefore, $\braket{\mathbf{R}_Z}$ can be represented by $\mathrm{Span}_{\mathbb{Z}_N}(M)$, which can be written as $\mathrm{Span}_{\mathbb{Z}_N}\left(\mathrm{How}_{\mathbb{Z}_N}(M)\right)$.
    Since the matrix $M$ has dimensions $r \times s$, and $s<r$, it follows that the Howell form $\mathrm{How}_{\mathbb{Z}_N}(M)$ has at most $s$ non-zero rows. This implies that $\braket{\mathbf{R}_Z}$ can be generated by only $s$ Pauli generators, which leads to a contradiction.

    Conversely, we can demonstrate that $s>r$ is also not possible by constructing a Pauli operator in the LID that does not belong to the Pauli subgroup.
    To begin, we change the order of the qubits so that $\mathbf{R}_Z$ can be transformed into canonical form, with the first $r$ columns serving as pivoting columns. This arrangement implies that any non-identity element in $\braket{\mathbf{R}_Z}$ must act on the first $r$ qubits. Similarly, we transform $\mathbf{S}_Z$ into canonical form, resulting in $s$ rows.
    Consequently, there must exist a row in $\mathbf{S}_Z$ that consists entirely of zeros on the first $r$ qubits. Let's denote the operator corresponding to this row as $g$. Since $N=2^t$, the order  of any element $g$ must be a power of $2$, which we denote as $x$. Now, $g^{x/2}$ must be a diagonal Pauli operator and hence belongs to $\braket{\mathbf{R}_Z}$.
    However, this operator $g^{x/2}$ has no support on the first $r$ qubits, leading to a contradiction. It implies that $g^{x/2}$ is an element of $\braket{\mathbf{R}_Z}$ that acts only on the remaining qubits beyond the first $r$ qubits. This is not possible since all non-identity elements in $\braket{\mathbf{R}_Z}$ must act on the first $r$ qubits.
    As a result, we have shown that $s>r$ is not a valid scenario, completing the proof.

    Therefore, $|\mathbf{S}_Z|=|\mathbf{R}_Z|$.
\end{proof}

\subsection{XP states are not dense}
\begin{lemma}\label{lemma:notdense}
   Denote the set of $n$-qubit XP states for a fixed precision $N$ as $\mathcal{C}_{\text{XP}_N}(n)$. Define 
   \[\mathcal{C}_{\text{XP}}(n)=\bigcup_{N\geq 2,N\in\mathbb{N}}\mathcal{C}_{\text{XP}_N}(n).\] $\mathcal{C}_{\text{XP}}(n)$ is not dense in the $n$ qubit Hilbert space $\mathcal{H}(n)$.
\end{lemma}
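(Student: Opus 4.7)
The plan is to exhibit an explicit state $|\phi\rangle \in \CH(n)$ together with an open neighborhood of it that is disjoint from $\mathcal{C}_{\text{XP}}(n)$. The structural fact I exploit is that every XP operator is a \emph{monomial unitary}: a tensor product of $X$ and $P$ powers sends each computational basis vector to another basis vector multiplied by a $2N$-th root of unity. Thus for any $|\psi\rangle \in \mathcal{C}_{\text{XP}_N}(n)$ and any stabilizer $S$, writing $S|\mathbf{e}\rangle = \alpha_S(\mathbf{e})\,|\sigma_S(\mathbf{e})\rangle$ with $|\alpha_S(\mathbf{e})|=1$ and using $S|\psi\rangle=|\psi\rangle$ gives $|\langle \mathbf{e}|\psi\rangle|=|\langle \sigma_S(\mathbf{e})|\psi\rangle|$. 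Hence the modulus function $\mathbf{e}\mapsto|\langle \mathbf{e}|\psi\rangle|$ is constant on every orbit of the $X$-action of $\langle \mathbf{S}_X\rangle$ on $\mathbb{Z}_2^n$. Because that action is free translation by the subgroup $V=\langle x\text{-parts}(\mathbf{S}_X)\rangle \leq \mathbb{Z}_2^n$, all orbits have the same cardinality $k=|V|$.

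With this uniform-modulus lemma in hand, I would take the witness
\begin{equation}
|\phi\rangle = \tfrac{1}{2}\,|0\rangle^{\otimes n} + \tfrac{\sqrt{3}}{2}\,|0\rangle^{\otimes n-1}\otimes|1\rangle
\end{equation}
and fix $\epsilon<(\sqrt{3}-1)/4$. Suppose, for contradiction, that some $|\psi\rangle \in \mathcal{C}_{\text{XP}_N}(n)$ satisfies $\||\phi\rangle-|\psi\rangle\|<\epsilon$. The triangle inequality applied coordinate-wise forces $|\langle 0^n|\psi\rangle|$ and $|\langle 0^{n-1}1|\psi\rangle|$ to lie in disjoint positive intervals centered at $1/2$ and $\sqrt{3}/2$ respectively, so both basis vectors belong to the Z-support of $|\psi\rangle$ and have strictly unequal amplitude moduli.

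I would then close the argument by a three-case split. If $|\mathbf{S}_X|=0$, the stabilizer is entirely diagonal and a one-dimensional such code is a single computational basis vector, whose amplitudes lie in $\{0,1\}$ and cannot match both target values. If $|\mathbf{S}_X|\geq 1$ then $k\geq 2$. If $0^n$ and $0^{n-1}1$ sit in a common orbit, the uniform-modulus lemma forces their moduli to coincide, contradicting $|1/2-\sqrt{3}/2|>2\epsilon$. If they sit in distinct orbits, the two orbits share the common size $k\geq 2$ and carry uniform moduli $m_1\approx 1/2$, $m_2\approx \sqrt{3}/2$; normalization then yields $k(m_1^2+m_2^2)\leq \||\psi\rangle\|^2=1$, i.e.\ roughly $k\leq 1$, again contradicting $k\geq 2$. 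Every branch terminates in contradiction, so the open ball of radius $(\sqrt{3}-1)/4$ around $|\phi\rangle$ is disjoint from $\mathcal{C}_{\text{XP}}(n)$, which is exactly non-density.

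The main obstacle, and essentially the only step requiring care, is verifying the orbit picture for general (possibly non-regular) XP states: one must confirm that $|\mathbf{S}_X|\geq 1$ makes every $V$-orbit on $\mathbb{Z}_2^n$ have the common size $|V|\geq 2$, and that the monomial-unitary stabilizer equation really does propagate the amplitude modulus along an entire orbit regardless of the diagonal phases contributed by $\mathbf{S}_Z$. Both facts follow directly from the canonical-form description recalled in Sec.~\ref{Section2} and the elementary observation that $V$ acts freely by XOR, so no further machinery is needed and the bulk of the proof is the short case analysis above.
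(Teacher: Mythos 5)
Your proof is correct and follows essentially the same route as the paper's: both rest on the fact that an XP state has uniform amplitude moduli over its Z-support, and both use the witness $\tfrac12\ket{e_1}+\tfrac{\sqrt3}{2}\ket{e_2}$, whose two unequal nonzero moduli keep it a fixed distance from every such state. The only difference is that you derive the uniform-modulus property from the monomial-unitary structure of XP stabilizers via an orbit argument, whereas the paper imports it directly from the known form of XP states (Eq.~35 of Webster et al.), so your version is more self-contained but establishes the same key fact.
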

\begin{proof}

If $\mathcal{C}_{\text{XP}}(n)$ were dense in $\mathcal{H}(n)$, this means for any state $\ket{\psi}\in\mathcal{H}(n)$, and any $\epsilon>0$, there exists a state $\ket{\xi}\in \mathcal{C}_{\text{XP}}(n)$, such that $\|\ket{\psi}-\ket{\xi}\|<\epsilon$.

However, a generic XP state in $\mathcal{C}_{\text{XP}}(n)$ is of the form 
    \begin{equation}
        \ket{\psi_i}=\frac{1}{2^{|S_X|/2}}\sum_{1\leq j\leq 2^{|S_X|}} \omega^{p_{ij}}\ket{e_{ij}},
    \end{equation}
where $|S_X|$ is the number of the generator of the non-diagonal canonical stabilizer generators, $p_{ij}\in \mathbb{Z}_{2N}$ and $\omega^{p_{ij}}$ is a phase. $e_{ij}\in\mathbb{Z}_{2}^n$ represents a basis of computational bases.(Eq. 35 of~\cite{Webster2022})

Because of the constrained form of the XP states, $\mathcal{C}_{\text{XP}}(n)$ is not dense in $\mathcal{H}(n)$. To give a counterexample, let the state be $\ket{\psi}=\frac{1}{2}\ket{e_1}+\frac{\sqrt{3}}{2}\ket{e_2}$, then one can show that $\forall \xi\in \mathcal{C}_{\text{XP}}$, $\|\ket{\psi}-\ket{\xi}\|^2\geq 2-(\sqrt{6}+\sqrt{2})/2$. So $\ket{\psi}$ is not in the $\epsilon$-neighbourhood of any $XP$ states if we choose $\epsilon < 2-(\sqrt{6}+\sqrt{2})/2$.

\end{proof}

\section{Tracing XP codes}\label{app:tracingXP}
\subsection{Characterization of the Post-trace State}
\emph{Restricting Z-supports---}
Performing self-tracing on the first two qubits of an XP state can be understood as selecting the Z-supports that have $00$ or $11$ on the first two qubits and retaining their phase, while discarding those Z-supports with $01$ or $10$ on the first two qubits.
We can classify XP states into two cases according to this observation:
\begin{enumerate}
    \item The state exclusively contains Z-supports with $00$ and $11$ on the first two qubits.
    By employing $E_m = \mathbf{m} + \braket{S_X}$, it is evident that the only non-diagonal LID generator that acts on the first two qubits is $S_{X1}=XP_N(1,1,\mathbf{x}'|\mathbf{z}|p)$.
    \item The state contains Z-supports with all $00,01,10,11$ on the first two qubits.
    Then there are two non-diagonal LID generators acting on the first two qubits $S_{X1}=XP_N(1,0,\mathbf{x}'|\mathbf{z}|p)$, $S_{X2}=XP_N(0,1,\mathbf{x}^{''}|\mathbf{z}'|p')$.
\end{enumerate}
We favor the first case over the second case because the number of Z-supports remains unchanged before and after tracing
\begin{equation}
    \ket{\psi_1}\rightarrow\ket{\psi_{1t}},
\end{equation}
where $\ket{\psi_1}$ is a state in case 1 and $\ket{\psi_{1t}}$ its post-trace state.
In the second case, Z-supports with $01$ and $10$ on the first two qubits do not contribute to the traced state, but they introduce additional constraints on the stabilizers for this state.
Hence, we aim to find a method to transform the second case into the first case.

We can achieve this by manipulating the generators as follows:
\begin{itemize}
    \item Remove $S_{X1},S_{X2}$ from $\mathbf{S}_X$.
    \item Add $S_{X1}S_{X2}$ into $\mathbf{S}_X$.
    \item Add $XP_N(0,0,\mathbf{0}|1,N-1,\mathbf{0}|0)=P_1\otimes P^{N-1}_2$ into $\mathbf{S}_Z$.
    \item Add $Z_1\otimes Z_2$ into $\mathbf{R}_Z$.
\end{itemize}
By performing these manipulations, we create a new XP state with restricted Z-support in case 1. 
It can be easily verified that $Z_1\otimes Z_2$ was not a stabilizer in the initial state, as Z-supports with $01$ and $10$ on the first two qubits were present. 
The validity of Theorem~\ref{counting theorem} still holds for this modified state.
According to Theorem~\ref{SZ=RZ}, we know that adding $P_1\otimes P_2^{N-1}$ is sufficient to generate the new diagonal LID group.
The restricted state produces the same post-trace state as the original state.
Therefore, we have the tracing procedure
\begin{equation}
    \ket{\psi_2}\rightarrow\ket{\psi_{2r}}\rightarrow\ket{\psi_{2t}}
\end{equation}
where $\ket{\psi_2}$ is a state in case 2, $\ket{\psi_{2r}}$ is its restricted state, and $\ket{\psi_{2t}}$ is the post-trace state of  $\ket{\psi_{2r}}$.
Since $\ket{\psi_2}$ and  $\ket{\psi_{2r}}$ generate the same post-trace state, we can study  $\ket{\psi_{2r}}$, which is in case 1, instead of $\ket{\psi_{2}}$.
From this point forward, we will solely focus on states in case 1 since any state in case 2 can be transformed into case 1 using this method. 
This simplifies our analysis.

\par
\subsection{Proof of Theorem~\ref{Operator matching is enough}.}
\begin{proof}
Let's start with the diagonal LID operator on the post-traced state in the form $XP_N(\mathbf{0}|\mathbf{z}|p)$. We can extend this operator back to act on the pre-traced state as follows:
\begin{equation}
XP_N(\mathbf{0}|\mathbf{z}|p) \rightarrow XP_N(0,0,\mathbf{0}|a,N-a,\mathbf{z}|p),
\end{equation}
where $a\in \mathbb{Z}_N$ is arbitrarily chosen.
We can easily verify that the extended operator is a logical identity on the pre-traced state. 
For any $\mathbf{e}$ and $\mathbf{f}$ such that $\ket{00\mathbf{e}}$ and $\ket{11\mathbf{f}}$ belong to the Z-support $E$,
\begin{equation}\begin{aligned}
    &XP_N(0,0,\mathbf{0}|a,N-a,\mathbf{z}|p) \ket{00\mathbf{e}}\\ 
    = &XP_N(0,0|1,N-1|0)\ket{00} \otimes XP_N(\mathbf{0}|\mathbf{z}|p)\ket{\mathbf{e}}\\
    = &\ket{00\mathbf{e}}\\
    &XP_N(0,0,\mathbf{0}|a,N-a,\mathbf{z}|p) \ket{11\mathbf{f}}\\ 
    = &XP_N(0,0|1,N-1|0)\ket{11} \otimes XP_N(\mathbf{0}|\mathbf{z}|p)\ket{\mathbf{f}}\\
    = &\ket{11\mathbf{f}}.
\end{aligned}\end{equation}
Furthermore, note that $XP_N(0,0,\mathbf{0}|a,N-a,\mathbf{z}|p)$ is a matched operator.
As a result, all diagonal LID operators on the post-traced state can be derived from operator matching.

Next, we explore the behavior of non-diagonal LIDs on Z-supports. 
The non-diagonal LID changes the Z-support and introduces an additional phase. 
To track the change in Z-support, we focus on the X-part of the operator. 
We consider a specific non-diagonal LID denoted as $XP_N(x_1, x_2, \mathbf{x}|z_1, z_2, \mathbf{z}|p)$. 
Its action on the Z-supports can be expressed as follows:
\begin{equation}\begin{aligned}
&XP_N(x_1, x_2, \mathbf{x}|z_1, z_2, \mathbf{z}|p)\ket{e_1, e_2, \mathbf{e}}\\
= &\omega^{p + 2(z_1e_1 + z_2e_2) + 2\mathbf{z}^T\mathbf{e}} \ket{e_1\oplus x_1, e_2\oplus x_2, \mathbf{e}\oplus \mathbf{x}}.
\end{aligned}\end{equation}
Here, $e_1 = e_2$, $x_1 = x_2$, and $(e_1,e_2,\mathbf{e})\in E$.
Now, if the post-trace state is an XP state, it implies the existence of a new non-diagonal LID with the form $XP_N(\mathbf{x}|\mathbf{z}'|p')$ so that the new Z-support can be fully generated. 
The action of this new operator on codewords is given by:
\begin{equation}\begin{aligned}
&XP_N(\mathbf{x}|\mathbf{z}'|p') \ket{\mathbf{e}}
= \omega^{p'+2\mathbf{z'}^{T}\mathbf{e}} \ket{\mathbf{e}\oplus \mathbf{x}}\\
= &\omega^{p + 2(z_1e_1 + z_2e_2) + 2\mathbf{z}^T\mathbf{e}} \ket{\mathbf{e}\oplus \mathbf{x}}.
\end{aligned}\end{equation}
The invariance of the relative phase between Z-supports before and after tracing allows us to extend the operator as follows:
\begin{equation}
XP_N(\mathbf{x}|\mathbf{z}'|p') \rightarrow XP_N(x_1, x_2, \mathbf{x}|a, N-a, \mathbf{z}'|p')
\end{equation}
where $a\in \mathbb{Z}_N$. 
Consequently, the extended operator has the following action on the old Z-supports:
\begin{equation}\begin{aligned}
&XP_N(x_1, x_2, \mathbf{x}|a, N-a, \mathbf{z}'|p') \ket{e_1, e_2, \mathbf{e}}\\
&= \omega^{p'+2(ae_1+(N-a)e_2)+2\mathbf{z}^{'T}\mathbf{e}} \ket{e_1\oplus x_1, e_2\oplus x_2, \mathbf{e}\oplus \mathbf{x}}\\
&= \omega^{p + 2(z_1e_1 + z_2e_2) + 2\mathbf{z}^T\mathbf{e}} \ket{e_1\oplus x_1, e_2\oplus x_2, \mathbf{e}\oplus \mathbf{x}}.
\end{aligned}\end{equation}
This result confirms that the extended operator acts as an old generator on the old Z-supports. Therefore, we arrive at the theorem based on these findings.
\end{proof}

Tracing XP states constitutes a universal process However, it is important to note that XP states are not dense within the Hilbert space (Lemman \ref{lemma:notdense}). 
While tracing some XP states does yield another XP state, this cannot be always true in all the cases. 
Our objective is to discern the specific conditions under which tracing an XP state produces another XP state.
The operator matching rule, as well as the counting theorem, tells us that the number of matched generators must decrease by $2$.

\begin{lemma}
    The projector onto an XP state is equal to the average of the elements in the LID of this state.
\end{lemma}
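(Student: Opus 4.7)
The plan is to obtain this result as a direct corollary of Lemma~\ref{lem:codeprojector} (Projector to the code subspace), together with the group-projector identity Lemma~\ref{lem:projector}. The only observation needed beyond those two lemmas is that for a single XP state, the code subspace $\mathcal{C}$ is one-dimensional, so the code projector $\Pi_{\mathcal{C}}$ literally equals $|\psi\rangle\langle\psi|$.

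First, I would set notation: let $|\psi\rangle$ be the XP state in question, regarded as an XP stabilizer code $\mathcal{C}$ with $\dim\mathcal{C}=1$, and let $\mathcal{S}_{\text{LID}}$ denote its logical identity group. Because $\mathcal{S}_{\text{LID}}$ is a subgroup of the finite XP group $\mathcal{G}_N^n$, it is a finite group carrying a unitary representation on the $n$-qubit Hilbert space, and hence the group projector
\begin{equation}
\Pi_{\mathcal{S}_{\text{LID}}}\;=\;\frac{1}{|\mathcal{S}_{\text{LID}}|}\sum_{g\in\mathcal{S}_{\text{LID}}} g
\end{equation}
from Definition~\ref{def:groupprojector} is well defined.

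Next, by Lemma~\ref{lem:projector}, this group projector coincides with the projector onto the common $+1$ eigenspace $\mathcal{H}_{\mathcal{S}_{\text{LID}}}$, and by Lemma~\ref{lem:codeprojector} this equals the code projector $\Pi_{\mathcal{C}}$. The only remaining step is to argue that for an XP \emph{state}, $\Pi_{\mathcal{C}}=|\psi\rangle\langle\psi|$. This follows because by hypothesis $\mathcal{C}$ is stabilized by some stabilizer group $\mathcal{S}$ whose fixed subspace is exactly the one-dimensional span of $|\psi\rangle$; since $\mathcal{S}\subseteq\mathcal{S}_{\text{LID}}$, we have $\mathcal{H}_{\mathcal{S}_{\text{LID}}}\subseteq\mathcal{H}_{\mathcal{S}}=\mathrm{span}\{|\psi\rangle\}$, while $|\psi\rangle\in\mathcal{H}_{\mathcal{S}_{\text{LID}}}$ by definition of LID, giving equality. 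Combining the three identifications yields
\begin{equation}
|\psi\rangle\langle\psi|\;=\;\Pi_{\mathcal{C}}\;=\;\Pi_{\mathcal{H}_{\mathcal{S}_{\text{LID}}}}\;=\;\Pi_{\mathcal{S}_{\text{LID}}}\;=\;\frac{1}{|\mathcal{S}_{\text{LID}}|}\sum_{g\in\mathcal{S}_{\text{LID}}} g,
\end{equation}
which is exactly the claim.

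The potential subtlety---and the only point that requires a brief check rather than pure citation---is confirming that the LID of an XP state is actually finite, so that the average on the right-hand side makes sense as written. This is automatic here because $\mathcal{S}_{\text{LID}}\subseteq \mathcal{G}_N^n$ and the XP group at fixed precision $N$ is itself finite. With that noted, no further calculation is needed, and the statement is essentially a restatement of Lemma~\ref{lem:codeprojector} specialized to the one-dimensional case.
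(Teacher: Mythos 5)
Your proof is correct and follows essentially the same route as the paper, whose entire proof is the one-line remark that the group projector equals the code projector (i.e., Lemmas~\ref{lem:projector} and~\ref{lem:codeprojector}). You simply spell out the additional (trivial but worth noting) observation that for a state the code subspace is one-dimensional, so $\Pi_{\mathcal{C}}=|\psi\rangle\langle\psi|$.
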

\begin{proof}
    The group projector is equal to the code projector.
\end{proof}
\begin{remark}
    The preceding lemma might appear self-evident. 
    However, it's important to stress a crucial insight that follows.
    Let's consider the scenario where we trace an XP state to yield another XP state. 
    In this context, it's possible to derive the projector onto the post-trace state, denoted as $\Pi'$, from the projector onto the pre-traced state, denoted as $\Pi$. 
    Mathematically, this relationship is expressed as follows:
    \begin{equation}
    \Pi' = \Tr_{12}(\Pi) = \sum_{h\,\mathrm{match}} c_h h + \sum_{h'\,\mathrm{unmatch}} c_{h'} h',
    \end{equation}
Since $\Pi$ is a group projector of the logical identity group, it can be written as the sum of its group elements.
All its group elements can be classified into two categories: those that satisfy the operator matching rule on the first two qubits, and those not.
For those who satisfy the operator matching rule, denote their substring as $h$, and tracing produces a coefficient of exactly $c_h=1$.
Those who don't satisfy the operator-matching rule, denote their substring as $h'$, and they have coefficients $c_{h'}$ as complex numbers in general.
This contrasts with PSF, where $c_{h'}$ are bound to be zero.
It's worth noting that in general we cannot eliminate the second term from the unmatched operators which do not vanish. 
Nonetheless, the lemma provides a key insight: for XP states, the second summand must vanish such that the unmatched operators appear to mutually interfere and cancel with each other.
\end{remark}

When the second term does not vanish, however, tracing will produce non-XP states. For example, consider the check matrix for the 6-qubit state
\begin{equation}\left(\begin{array}{cccccc|cccccc|c}
1& 0& 0& 1& 1& 1& 2& 0& 0& 2& 6& 6& 0\\
0& 1& 0& 1& 0& 1& 0& 0& 0& 2& 0& 6& 0\\
0& 0& 1& 1& 1& 0& 0& 0& 3& 6& 6& 4& 1\\
0& 0& 0& 0& 0& 0& 4& 0& 0& 4& 4& 4& 0\\
0& 0& 0& 0& 0& 0& 0& 4& 0& 4& 4& 0& 0\\
0& 0& 0& 0& 0& 0& 0& 0& 4& 4& 0& 4& 0
\end{array}\right).\end{equation}
which we used to generate the $[[7,1,3]]$ code.
The state it stabilizes is
\begin{align}
        &\ket{000000}+\ket{100111}+\ket{010101}+\ket{110010}\\\nonumber
        +&\omega \ket{001110} + \omega \ket{101001} + \omega^5 \ket{011011} \\\nonumber
        &+ \omega^{13} \ket{111100}.
\end{align}
After tracing the first two qubits, the remaining state is 
\begin{equation}
        \ket{0000}+\ket{0010}+(\omega+\omega^{13})\ket{1110}.
\end{equation}
It is obvious that this is not an XP state, because the coefficients have different norms.
By operator matching, the resulting check matrix is 
\begin{equation}
    \left(\begin{array}{cccc|cccc|c}
1& 1& 1& 0& 3& 6& 6& 0& 1\\
0& 0& 0& 0& 4& 4& 0& 0& 0\\
0& 0& 0& 0& 0& 0& 0& 4& 0
    \end{array}\right).
\end{equation}
Only three generators remain for the LID group in the canonical form, which is fewer than the number of qubits. This indicates that the post-trace state is not an XP state.

\section{Decoder}\label{app:xpdecoder}

First, we prove some properties within the Z-supports of the codespace.
Let $\mathcal{S}_{\rm LID}$ be the logical identity (LID) group.
We define the projector $\Pi_Z=\Pi_{\mathbf{S_Z}}=\Pi_{\mathbf{R}_Z}$ to be the projector onto the Z-support.
\begin{lemma}
    $\forall r\in\braket{\mathbf{R}_Z}, \forall g\in \mathcal{S}_{\rm LID}$, $[r,g]=0$.
\end{lemma}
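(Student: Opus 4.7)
The plan is to reduce the commutator to an elementary qubit-wise calculation and then exploit the relation between $\braket{\mathbf{R}_Z}$ and the Z-support $E$ of the code. Since every $r\in\braket{\mathbf{R}_Z}$ is a tensor product of $Z$-operators, write $r=\bigotimes_i Z^{r_i}$ for some $\mathbf{r}\in\mathbb{Z}_2^n$, and write a general element $g\in\mathcal{S}_{\rm LID}$ in the canonical form $g=XP_N(\mathbf{x}|\mathbf{z}|p)$. On each qubit, the identities $ZP=PZ$ and $ZX=-XZ$ give $Z^{r_i}(X^{x_i}P^{z_i})=(-1)^{r_ix_i}(X^{x_i}P^{z_i})Z^{r_i}$, so overall $rg=(-1)^{\mathbf{r}\cdot\mathbf{x}}gr$. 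Hence the claim is equivalent to $\mathbf{r}\cdot\mathbf{x}\equiv 0\pmod 2$ for every $r\in\braket{\mathbf{R}_Z}$ and every X-part $\mathbf{x}$ arising from $\mathcal{S}_{\rm LID}$.

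Next I would use the defining property of $\mathbf{R}_Z$: by Eq.~(\ref{Z-support Projector}), $\braket{\mathbf{R}_Z}$ has the same group projector as $\braket{\mathbf{S}_Z}$, namely the projector $\Pi_Z$ onto the Z-support $E$ of the code. Thus for every $\mathbf{e}\in E$, $r\ket{\mathbf{e}}=(-1)^{\mathbf{r}\cdot\mathbf{e}}\ket{\mathbf{e}}=\ket{\mathbf{e}}$, which gives $\mathbf{r}\cdot\mathbf{e}\equiv 0\pmod 2$ for all $\mathbf{e}\in E$. This is the orthogonality condition that links $\mathbf{r}$ to $E$.

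Now I would translate the X-part of any element of $\mathcal{S}_{\rm LID}$ into a statement about $E$. By the coset decomposition $E=E_m+\braket{S_X}$ and the fact that $\mathbf{S}_X$ permutes Z-supports within $E$, applying $g$ to $\ket{\mathbf{e}}\in\Pi_Z\CH$ produces a nonzero component only on $\ket{\mathbf{e}\oplus\mathbf{x}}$, which again must lie in $E$. Hence $\mathbf{x}=\mathbf{e}\oplus(\mathbf{e}\oplus\mathbf{x})$ is the mod-2 difference of two elements of $E$, so $\mathbf{r}\cdot\mathbf{x}\equiv \mathbf{r}\cdot\mathbf{e}+\mathbf{r}\cdot(\mathbf{e}\oplus\mathbf{x})\equiv 0\pmod 2$. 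Combining with the qubit-wise calculation gives $rg=gr$.

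The main potential obstacle is justifying the step that every $\mathbf{x}$ coming from $\mathcal{S}_{\rm LID}$ actually maps $E$ into itself; for diagonal generators this is trivial, and for non-diagonal generators it follows from the coset decomposition of $E$ and the fact that $\mathcal{S}_{\rm LID}$ stabilizes the code subspace $\mathcal{C}\subseteq \Pi_Z\CH$. Once this is in place, the commutator vanishes uniformly on all of $\Pi_Z\CH$, and since $\braket{\mathbf{R}_Z}$ acts as identity outside $\Pi_Z\CH$ as well (being a projector-like sum), the commutation $[r,g]=0$ holds globally.
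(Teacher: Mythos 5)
Your proof is correct, but it takes a genuinely different route from the paper's. The paper works entirely at the level of the XP group algebra: it evaluates the group commutator $rgr^{-1}g^{-1}=D_N(2x_rz_g-2x_gz_r+4x_rx_gz_r-4x_rx_gz_g)$, observes that $x_r=\mathbf{0}$ and $z_r\in\{0,N/2\}^n$ force this to be $\pm I$, and then excludes $-I$ on the grounds that $-I$ cannot belong to a valid stabilizer group (the commutator lies in $\mathcal{S}_{\rm LID}$ because $\langle\mathbf{R}_Z\rangle\subset\langle\mathbf{S}_Z\rangle\subset\mathcal{S}_{\rm LID}$). You instead compute the sign $(-1)^{\mathbf{r}\cdot\mathbf{x}}$ qubit-by-qubit and then show $\mathbf{r}\cdot\mathbf{x}\equiv 0\pmod 2$ directly: $\mathbf{r}$ annihilates every $\mathbf{e}\in E$ because $\Pi_{\langle\mathbf{R}_Z\rangle}=\Pi_Z$, and $\mathbf{x}$ is a mod-2 difference of elements of $E$ because any logical identity must preserve each codeword's Z-support. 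Both arguments ultimately hinge on the nonemptiness of the code space, but yours is more concrete and makes visible \emph{why} the sign is $+1$ (orthogonality of $\mathbf{R}_Z$ to the coset structure $E=E_m+\langle S_X\rangle$), whereas the paper's is shorter and reuses the general conjugation identity it needs elsewhere. One small quibble: your closing remark that $\langle\mathbf{R}_Z\rangle$ ``acts as identity outside $\Pi_Z\mathcal{H}$'' is not true and is also unnecessary --- the relation $rg=(-1)^{\mathbf{r}\cdot\mathbf{x}}gr$ is a global operator identity, so once the exponent is shown to be even the commutation holds on all of $\mathcal{H}$ with no further argument.
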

\begin{proof}
    The commutator can be simplified as follows: $rgr^{-1}g^{-1} = D_N(2 x_r z_g - 2 x_g z_r + 4x_r x_g z_r - 4 x_r x_g z_g)$.
    Considering that $x_r$ equals zero and each entry in $z_r$ is $N/2$, the anti-symmetric portion can only be $\pm I$. 
    Furthermore, since $\mathcal{S}_{\rm LID}$ forms a valid stabilizer group, it is essential to exclude $-I$.
    Hence, the commutator is constrained to solely yield $I$, indicating that the elements $r$ and $g$ indeed commute.
\end{proof}
\begin{corollary}
    $[\Pi_Z,g]=0, \forall g\in \mathcal{S}_{\rm LID}$.
\end{corollary}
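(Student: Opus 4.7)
The plan is to deduce the corollary directly from the preceding lemma by writing $\Pi_Z$ as an average over the diagonal Pauli subgroup $\langle \mathbf{R}_Z\rangle$ and then invoking linearity. Since $\Pi_Z$ is defined as the group projector onto the Z-support and coincides with $\Pi_{\langle\mathbf{R}_Z\rangle}$ (this identification is exactly the one used in Eq.~(\ref{Z-support Projector})), we have the explicit representation
\begin{equation}
\Pi_Z \;=\; \frac{1}{|\langle \mathbf{R}_Z\rangle|}\sum_{r\in\langle \mathbf{R}_Z\rangle} r.
\end{equation}

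Given any $g\in \mathcal{S}_{\rm LID}$, I would compute $g\Pi_Z$ and $\Pi_Z g$ separately using the displayed formula, pull $g$ inside the sum, and then apply the previous lemma, which asserts $rg=gr$ for every $r\in\langle \mathbf{R}_Z\rangle$. Term-by-term commutation plus the fact that the normalization is a scalar immediately yields $g\Pi_Z = \Pi_Z g$, which is the claim.

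There is essentially no obstacle: the real content of the result lives in the lemma just proved (the cancellation of the antisymmetric phase $D_N$ term, where one uses $x_r=0$ and $z_r$ having entries $N/2$, together with the fact that $-I\notin \mathcal{S}_{\rm LID}$). Once element-wise commutativity is in hand, the corollary is a one-line consequence of linearity of the sum defining the group projector. The only thing worth flagging in the writeup is justifying why we may use the $\mathbf{R}_Z$-representation of $\Pi_Z$ rather than the $\mathbf{S}_Z$-representation; this is settled by the lemma in the appendix that shows $\Pi_{\langle\mathbf{S}_Z\rangle}=\Pi_{\langle\mathbf{R}_Z\rangle}$, so either form is available and the $\mathbf{R}_Z$ form is the convenient one here because the preceding commutator lemma was stated for $r\in\langle\mathbf{R}_Z\rangle$.
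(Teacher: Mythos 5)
Your proof is correct and follows exactly the paper's own argument: expand $\Pi_Z$ as the normalized sum over $\langle\mathbf{R}_Z\rangle$ and commute $g$ past each term using the preceding lemma. The extra remark about why the $\mathbf{R}_Z$-representation of $\Pi_Z$ is legitimate is a reasonable point of care but does not change the substance.
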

\begin{proof}
    $\Pi_Z=\frac{1}{2^{|\mathbf{R}_Z|}}\sum_{r\in\braket{\mathbf{R}_Z}} r$, and each $r$ commutes with $g$.
\end{proof}

\begin{lemma}
    $\forall s\in \braket{\mathbf{S}_Z}$, $\Pi_Z s=s\Pi_Z=\Pi_Z$.
\end{lemma}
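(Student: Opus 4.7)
The plan is to leverage the identification of $\Pi_Z$ with the group code projector of $\braket{\mathbf{S}_Z}$, as established in Lemma~\ref{lem:projector}. This identifies $\Pi_Z$ as the projector onto the subspace $\mathcal{H}_{\braket{\mathbf{S}_Z}}$ of vectors fixed by every element of $\braket{\mathbf{S}_Z}$. The left identity $s\Pi_Z = \Pi_Z$ then follows immediately: for any $\ket{\phi}$, the vector $\Pi_Z\ket{\phi}$ lies in $\mathcal{H}_{\braket{\mathbf{S}_Z}}$, so $s\Pi_Z\ket{\phi} = \Pi_Z\ket{\phi}$ for every $s\in\braket{\mathbf{S}_Z}$. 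Since $\ket{\phi}$ is arbitrary, this is an operator identity.

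For the right identity $\Pi_Z s = \Pi_Z$, I would take the Hermitian adjoint of $s\Pi_Z = \Pi_Z$. Using that $\Pi_Z$ is self-adjoint (it is a projector) and that $s$ is unitary, this gives $\Pi_Z s^{-1} = \Pi_Z$. Because $\braket{\mathbf{S}_Z}$ is a group, $s^{-1}$ also lies in $\braket{\mathbf{S}_Z}$, so replacing $s$ by $s^{-1}$ throughout yields $\Pi_Z s = \Pi_Z$ as claimed.

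A more computational alternative, which produces both identities in one stroke, starts directly from the group-average formula $\Pi_Z = \frac{1}{|\braket{\mathbf{S}_Z}|}\sum_{g\in\braket{\mathbf{S}_Z}} g$ recorded in Eq.~\eqref{Z-support Projector}. Left multiplication by $s$ gives $s\Pi_Z = \frac{1}{|\braket{\mathbf{S}_Z}|}\sum_{g} sg$; since $g\mapsto sg$ is a bijection of the finite group onto itself, the sum equals $\sum_{g'\in\braket{\mathbf{S}_Z}} g'$, which recovers $|\braket{\mathbf{S}_Z}|\,\Pi_Z$. The analogous rearrangement $g\mapsto gs$ gives the right-multiplication identity in the same way.

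I expect no serious obstacle here; the statement is the standard absorption property of a group projector. The only care needed is to ensure that the two descriptions of $\Pi_Z$ --- as a uniform average over $\braket{\mathbf{S}_Z}$ and as the orthogonal projector onto the common $+1$ eigenspace --- indeed agree, which is precisely the content of Lemma~\ref{lem:projector} invoked earlier in the appendix. With that equivalence in hand, either of the two routes above yields the result in a few lines.
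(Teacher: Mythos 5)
Your proposal is correct, and your second (``computational alternative'') route is exactly the paper's argument: $\Pi_Z$ is the uniform group average over $\braket{\mathbf{S}_Z}$, and multiplication by $s$ merely permutes the group elements, leaving the sum invariant. The first route via the common $+1$ eigenspace plus an adjoint is a valid minor variation, but adds nothing beyond what the rearrangement argument already gives.
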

\begin{proof}
    $\Pi_Z$ is the group projector of $\braket{\mathbf{S}_Z}$.
    $s$ permutes the elements in $\braket{\mathbf{S}_Z}$, hence the group projector remains invariant.
\end{proof}

\begin{lemma}
    $\forall g\in \mathcal{S}_{\rm LID}$, $\Pi_Z g$ is Hermitian.
\end{lemma}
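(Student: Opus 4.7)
The plan is to reduce the Hermiticity of $\Pi_Z g$ to the single identity $\Pi_Z g^2 = \Pi_Z$, and then to establish the latter by showing $g^2 \in \langle \mathbf{S}_Z\rangle$.

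First I would exploit the unitarity of $g$, the Hermiticity of $\Pi_Z$ (manifest from $\Pi_Z = \Pi_{\mathbf{R}_Z}$, an average of Hermitian diagonal Paulis), and the commutativity corollary $[\Pi_Z,g]=0$ already proved above. Combining these,
\begin{equation}
(\Pi_Z g)^\dagger = g^\dagger \Pi_Z = g^{-1}\Pi_Z = \Pi_Z g^{-1}.
\end{equation}
Hence $\Pi_Z g$ is Hermitian if and only if $\Pi_Z g = \Pi_Z g^{-1}$, which upon right-multiplication by $g$ is equivalent to $\Pi_Z g^2 = \Pi_Z$.

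Next I would show that $g^2$ is diagonal and lies in $\langle \mathbf{S}_Z\rangle$. Writing $g = XP_N(\mathbf{x}|\mathbf{z}|p)$ and applying the XP multiplication rule~(\ref{eqn:rowop}), we get $g^2 = XP_N(2\mathbf{x}|2\mathbf{z}|2p)\,D_N(2\mathbf{x}\mathbf{z})$. The $X$-part $2\mathbf{x}$ vanishes mod $2$, and $D_N(\cdot)$ is diagonal by construction, so $g^2$ carries no $X$-support. Since $\mathcal{S}_{\rm LID}$ is a group, $g^2 \in \mathcal{S}_{\rm LID}$, and every such element admits a unique canonical decomposition $\prod_i S_{X_i}^{a_i}\prod_j S_{Z_j}^{b_j}$ with the $X$-parts of $\{S_{X_i}\}$ in RREF; vanishing total $X$-support then forces $a_i=0$ for all $i$. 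Thus $g^2 \in \langle \mathbf{S}_Z\rangle$, and the preceding lemma $s\Pi_Z = \Pi_Z$ immediately yields $\Pi_Z g^2 = \Pi_Z$.

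The subtlest step is controlling phases: in principle the canonical decomposition could read $g^2 = \omega^k \prod_j S_{Z_j}^{b_j}$ with $\omega^k \neq 1$, which would spoil $\Pi_Z g^2 = \Pi_Z$. I expect to dispose of this by observing that $g \in \mathcal{S}_{\rm LID}$ forces $g^2$ to fix every $|\psi\rangle \in \mathcal{C}$ pointwise, while the diagonal part $\prod_j S_{Z_j}^{b_j}$ already fixes $\mathcal{C}$ pointwise; comparing the two actions on a codeword pins $\omega^k=1$. This phase bookkeeping, rather than any structural obstruction, is the main technical hurdle.
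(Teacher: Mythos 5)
Your proof is correct and follows essentially the same route as the paper's: both arguments reduce Hermiticity of $\Pi_Z g$ to the fact that $g^{\pm 2}$ lies in $\langle\mathbf{S}_Z\rangle$ and then invoke the preceding lemmas $s\Pi_Z=\Pi_Z$ and $[\Pi_Z,g]=0$. The only difference is that the paper simply asserts $g^{-1}g^{-1}=s\in\langle\mathbf{S}_Z\rangle$, whereas you supply the justification (vanishing $X$-part via the multiplication rule, RREF forcing $a_i=0$, and the phase check) — a worthwhile but inessential elaboration.
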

\begin{proof}
    \begin{align}
        (\Pi_Z g)^{\dagger} = &g^{\dagger} \Pi_Z = g^{-1}\Pi_Z\\
        = &g g^{-1} g^{-1} \Pi_Z = gs\Pi_Z = g\Pi_Z = \Pi_Z g
    \end{align}
    for some $s\in\braket{\mathbf{S}_Z}$.
\end{proof}

\begin{lemma}
    $\forall g\in \mathcal{S}_{\rm LID}$, $(\Pi_Z g)^2 = \Pi_Z$.
\end{lemma}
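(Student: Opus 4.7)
The plan is to leverage the previous two lemmas directly, so the proof reduces to a one-line computation. The idea is that $\Pi_Z g$ is both Hermitian (by the immediately preceding lemma) and, when read as an operator on the subspace $\Pi_Z \mathcal{H}$, the restriction of a unitary; a Hermitian unitary squares to the identity. I would carry this out without ever decomposing $g$ into canonical generators.

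Concretely, I would write
\begin{equation}
    (\Pi_Z g)^2 \;=\; (\Pi_Z g)(\Pi_Z g)^{\dagger} \;=\; \Pi_Z g g^{\dagger} \Pi_Z \;=\; \Pi_Z^2 \;=\; \Pi_Z,
\end{equation}
where the first equality uses that $\Pi_Z g$ is Hermitian, the second expands $(\Pi_Z g)^{\dagger} = g^{\dagger}\Pi_Z$, the third uses unitarity of $g \in \mathcal{S}_{\rm LID}$ (XP operators are unitary), and the last uses $\Pi_Z^2 = \Pi_Z$. No appeal to the commutation corollary is even needed, although one could equivalently present the computation as $(\Pi_Z g)^2 = \Pi_Z^2 g^2 = \Pi_Z g^2$ and then argue that $g$ restricted to $\Pi_Z \mathcal{H}$ is a Hermitian isometry, hence an involution, giving $\Pi_Z g^2 = \Pi_Z$.

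There is essentially no obstacle: every ingredient is already established. The only point to be careful about is to invoke the Hermiticity of $\Pi_Z g$ (previous lemma) before trying to rewrite $(\Pi_Z g)^2$, rather than attempting a direct group-theoretic manipulation via the canonical generators $\mathbf{S}_X, \mathbf{S}_Z$, which would be more cumbersome because the $S_{X_i}$ do not in general square to elements of $\langle \mathbf{S}_Z\rangle$ on the nose without some simplification inside $\Pi_Z$. The one-line argument bypasses this issue entirely by working at the level of projectors.
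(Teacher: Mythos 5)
Your proof is correct, but it takes a different route from the paper's. The paper argues group-theoretically: it first commutes $g$ past $\Pi_Z$ (using the corollary $[\Pi_Z,g]=0$) to get $(\Pi_Z g)^2=\Pi_Z g^2$, then observes that $g^2$ is a diagonal element of $\mathcal{S}_{\rm LID}$ (its $X$-part cancels mod 2) and hence lies in $\langle\mathbf{S}_Z\rangle$, so $\Pi_Z g^2=\Pi_Z$ by the absorption lemma $\Pi_Z s=\Pi_Z$. You instead combine the Hermiticity of $\Pi_Z g$ with the unitarity of $g$: $(\Pi_Z g)^2=(\Pi_Z g)(\Pi_Z g)^{\dagger}=\Pi_Z gg^{\dagger}\Pi_Z=\Pi_Z$. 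This is a clean and fully valid argument that makes transparent the abstract mechanism (a Hermitian operator of the form ``projector times unitary'' squares to the projector), and it would apply verbatim to any unitary $g$ for which $\Pi_Z g$ is Hermitian. Two caveats on your commentary rather than your proof: first, the group-theoretic route is not actually cumbersome --- $g^2$ \emph{is} an element of $\langle\mathbf{S}_Z\rangle$ on the nose (any diagonal element of the LID is, by the canonical form), and the paper's computation is just as short as yours; second, the Hermiticity lemma you lean on is itself proved in the paper using exactly the facts $g^{-2}\in\langle\mathbf{S}_Z\rangle$ and $\Pi_Z s=\Pi_Z$, so the group-theoretic input is relocated into the previous lemma rather than genuinely bypassed.
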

\begin{proof}
    \begin{align}
        (\Pi_Z g)^2 = &\Pi_Z g\Pi_Z g = \Pi_Z gg\\
        = & \Pi_Z s = \Pi_Z
    \end{align}
    for some $s\in\braket{\mathbf{S}_Z}$.
\end{proof}
\begin{corollary}
    $\forall g\in \mathcal{S}_{\rm LID}$, the eigenvalue of $\Pi_Z g$ can only be $\pm1,0$.
\end{corollary}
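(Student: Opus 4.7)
The plan is to exploit the two preceding lemmas in sequence: the fact that $\Pi_Z g$ is Hermitian, and the fact that $(\Pi_Z g)^2 = \Pi_Z$. Hermiticity guarantees that $\Pi_Z g$ is diagonalizable with real eigenvalues and an orthonormal eigenbasis, so the eigenvalue problem is well-posed and I may reason pointwise on eigenvectors rather than manipulating operator identities abstractly.

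First I would pick an arbitrary eigenvalue $\lambda \in \mathbb{R}$ of $\Pi_Z g$ with eigenvector $|v\rangle$. Squaring the eigenvalue equation gives $(\Pi_Z g)^2 |v\rangle = \lambda^2 |v\rangle$. Then applying the identity $(\Pi_Z g)^2 = \Pi_Z$ from the previous lemma, this becomes $\Pi_Z |v\rangle = \lambda^2 |v\rangle$. Since $\Pi_Z$ is an orthogonal projector, its spectrum is contained in $\{0,1\}$, so $\lambda^2 \in \{0,1\}$, whence $\lambda \in \{-1, 0, +1\}$.

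The only subtlety to check is that the previous two lemmas are genuinely applicable to a common eigenbasis — but because $\Pi_Z g$ is Hermitian, every eigenvector of $\Pi_Z g$ with nonzero eigenvalue automatically lies in the image of $\Pi_Z$ (as $|v\rangle = \lambda^{-1} \Pi_Z g |v\rangle$), while eigenvectors with eigenvalue $0$ impose no further constraint. There is no real obstacle here: the corollary is essentially a two-line consequence of the lemmas, and the main content of the work has already been done in establishing Hermiticity and the squaring identity. The proof will be short enough to state inline without requiring any additional structural machinery from the XPF.
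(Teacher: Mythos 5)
Your proof is correct and is exactly the argument the paper intends: the corollary is stated as an immediate consequence of the two preceding lemmas (Hermiticity of $\Pi_Z g$ and the identity $(\Pi_Z g)^2 = \Pi_Z$), and your squaring-the-eigenvalue-equation step, combined with the fact that a projector's spectrum lies in $\{0,1\}$, is the standard way to cash that out. No gaps; the observation that an eigenvector $|v\rangle$ of $\Pi_Z g$ satisfies $\Pi_Z|v\rangle = \lambda^2|v\rangle$ is all that is needed.
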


\begin{lemma}
    $\forall g,h \in \mathcal{S}_{\rm LID}$, $\Pi_Z g,\Pi_Z h$ commute.
\end{lemma}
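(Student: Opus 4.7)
The plan is to reduce the commutation of $\Pi_Z g$ and $\Pi_Z h$ to a statement about the group-theoretic commutator of $g$ and $h$ inside $\mathcal{S}_{\rm LID}$. First I would use the two preceding lemmas, namely $[\Pi_Z, g]=0$ and $\Pi_Z^2=\Pi_Z$, to compute
\begin{equation*}
(\Pi_Z g)(\Pi_Z h) = \Pi_Z^2 gh = \Pi_Z gh,
\end{equation*}
and symmetrically $(\Pi_Z h)(\Pi_Z g) = \Pi_Z hg$. So the lemma is equivalent to the identity $\Pi_Z gh = \Pi_Z hg$, or equivalently $\Pi_Z\bigl(ghg^{-1}h^{-1}\bigr)=\Pi_Z$.

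Next I would identify the structure of the commutator $c \equiv ghg^{-1}h^{-1}$. Applying the XP multiplication rule in (\ref{eqn:rowop}) twice, the commutator of two XP operators is an antisymmetric operator $D_N(\cdot)$, which is diagonal by construction. Because $g,h\in\mathcal{S}_{\rm LID}$ and $\mathcal{S}_{\rm LID}$ is a group, $c$ is also in $\mathcal{S}_{\rm LID}$.

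The final step is to argue that a diagonal element of $\mathcal{S}_{\rm LID}$ lives in $\braket{\mathbf{S}_Z}$. Using the canonical form, any element of $\mathcal{S}_{\rm LID}$ can be written as $\prod_i S_{X_i}^{a_i}\prod_j S_{Z_j}^{b_j}$ with $a_i\in\mathbb{Z}_2$; since the x-parts of the canonical generators $\{S_{X_i}\}$ are in RREF and therefore $\mathbb{Z}_2$-linearly independent, the x-part of such a product vanishes only when all $a_i=0$. Hence the diagonal part of $\mathcal{S}_{\rm LID}$ coincides with $\braket{\mathbf{S}_Z}$, so $c\in\braket{\mathbf{S}_Z}$. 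Applying the earlier lemma $\Pi_Z s=\Pi_Z$ for $s\in\braket{\mathbf{S}_Z}$ gives $\Pi_Z c = \Pi_Z$, and chaining the equalities closes the proof.

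The main obstacle I anticipate is the last step: one must be careful that the canonical form genuinely forces the non-diagonal canonical generators to be $\mathbb{Z}_2$-independent in their x-parts, so that a diagonal element of $\mathcal{S}_{\rm LID}$ can have no $\mathbf{S}_X$-content. This is where the RREF structure on $\mathbf{S}_X$ (rather than any Pauli-specific reasoning) is doing the real work, and it is important to invoke it explicitly so that the argument remains valid in the XP setting where $\mathcal{S}_{\rm LID}$ is non-Abelian.
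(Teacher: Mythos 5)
Your proof is correct and follows essentially the same route as the paper's: reduce $(\Pi_Z g)(\Pi_Z h)=(\Pi_Z h)(\Pi_Z g)$ to $\Pi_Z\bigl(ghg^{-1}h^{-1}\bigr)=\Pi_Z$ and then absorb the commutator via $\Pi_Z s=\Pi_Z$ for $s\in\braket{\mathbf{S}_Z}$. In fact you supply more detail than the paper does, which simply asserts $ghg^{-1}h^{-1}=s$ for some $s\in\braket{\mathbf{S}_Z}$, whereas you justify it by noting the commutator is diagonal and that the RREF structure of the x-part of $\mathbf{S}_X$ forces any diagonal element of $\mathcal{S}_{\rm LID}$ into $\braket{\mathbf{S}_Z}$.
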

\begin{proof}
    \begin{align}
        \Pi_Z g \Pi_Z h = & \Pi_Z gh = \Pi_Z gh (g^{-1}h^{-1}hg)\\
        = &\Pi_Z (gh g^{-1}h^{-1})hg = \Pi_Z s hg \\
        = &\Pi_Z hg = \Pi_Z h \Pi_Z g
    \end{align}
    for some $s\in\braket{\mathbf{S}_Z}$.
\end{proof}
The same arguments also apply to $\Pi_{s_z}= E_{s_z}\Pi_Z E_{s_z}^{\dagger}$ and $\mathbf{S}_{X'} = E_{s_z} \mathbf{S}_X E_{s_z}^{\dagger}$, since they are conjugated under the same error operator $E_{s_z}$. It is helpful to recall the fact that the conjugation of XP operators, denoted as $A_1A_2A_1^{-1} = A_2D_N(2\mathbf{x}_1\mathbf{z}_2 + 2\mathbf{x}_2\mathbf{z}_1 - 4\mathbf{x}_1\mathbf{x}_2\mathbf{z}_1)$, remains an XP operator~\cite{Webster2022}. 
This implies that under some XP error $E_s$, the error subspace is the code subspace defined by yet another XP code with conjugated generators $\langle E_s\mathbf{S}_X E_s^{\dagger},E_s\mathbf{S}_Z E_s^{\dagger}\rangle$.

As described in Sec.~\ref{Section5}, the decoding process comes in two steps. For the first step, we identify and measure $\mathbf{R}_Z$. This obtains syndromes $s_z$, which allows us to identify any error $E_{s_z}$ that gives the same syndrome. Because $\mathbf{R}_Z$ defines a Pauli stabilizer code, the above process is well-understood. For the second step of syndrome extraction, we either perform $E_{s_z}^{\dagger}$ to restore the code back to the code subspace or measure a set of modified checks $\mathbf{S}_X' = E_{s_z}\mathbf{S}_X E_{s_z}^{\dagger}.$ 

Note that both $\mathbf{S}_X$ and $\mathbf{S}_X'$ are manifestly non-Hermitian operators; however, the above lemmas show that they behave like Hermitian operators with eigenvalues $\pm 1$ in the subspaces supported on $\Pi_Z$ and $\Pi_{s_z}$ respectively. Therefore, one can use the usual syndrome extraction circuit to measure these non-diagonal checks. 
The measurement circuit for check $g=g_1\otimes g_2\otimes\dots\otimes g_n$ is 
\begin{widetext}
    \begin{equation}
    (H\otimes I) \left(\frac{1}{\sqrt{2}}\ket{0}\bra{0}\otimes I +  \frac{1}{\sqrt{2}}\ket{1}\bra{1}\otimes g\right) (H\otimes I) \ket{0}\otimes \ket{\psi}
    = \frac{1}{\sqrt{2}} \ket{+}\otimes\ket{\psi} + \frac{1}{\sqrt{2}}\ket{-}\otimes g\ket{\psi}
\end{equation}
\end{widetext}

\begin{figure}[htbp]
\centering
\scalebox{1.5}{
$\Qcircuit @C=1em @R=1em {
    \lstick{\ket{0}}     & \gate{H} & \ctrl{1} & \gate{H} & \meter & \qw \\
    \lstick{\ket{\psi}}  & \qw      & \gate{g} & \qw      & \qw    & \qw
}$}
\caption{Syndrome measurement circuit. $\ket{\psi}$ is the state to be measured, and $\ket{0}$ is an ancilla qubit. The measurement is performed in Z-basis.}
\label{MeasurementCircuit}
\end{figure}
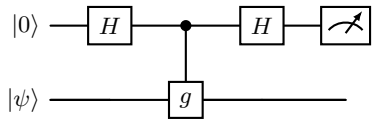
where $\ket{0}$ is an ancilla qubit and $\ket{\psi}$ is the state to be measured, as shown in Fig.~\ref{MeasurementCircuit}. 
Since we have already made the first round of measurement, then we know the Z-supports, so $\ket{\psi} = \Pi_Z\ket{\psi}$.
Since $g\Pi_Z$ is Hermitian, we can decompose $\ket{\psi}$ according to the eigenvalue of $g\Pi_Z$, say $\ket{\psi}=c_+\ket{\psi_+} + c_-\ket{\psi_-}$.
Then, the state after applying the circuit but before the measurement is
\begin{equation}
    c_+\ket{0}\ket{\psi_+} + c_-\ket{1}\ket{\psi_-}
\end{equation}
Therefore, measuring the first qubit in the Z-basis and getting $+1$ projects the physical qubits onto $\ket{\psi_+}$, and getting $-1$  projects onto $\ket{\psi_-}$.
This measurement procedure is valid because $g\Pi_Z$ is Hermitian, so its eigenvectors form a complete basis, and also $(g\Pi_Z)^2=\Pi_Z$ so that the eigenvalues of $g\Pi_Z$ can only be $\pm 1$ and $0$.

Let the syndromes we extract from the non-diagonal checks be $s_x$. 
To use the enumerator decoder, we now need to identify errors $E_{s_x}$ such that they have the same syndrome but consist only of diagonal operators so it commutes with $\Pi_Z$. We ask whether it is always possible. Let $\mathbf{p}$ denote the power of Pauli $Z$s in $E_{s_x}$ and the x half of the $i$th non-diagonal generator in the check matrix be $\mathbf{x}_i$. We then need 
$$ \mathbf{x}_i \cdot\mathbf{p}= \sigma_i,~~\mathrm{where}~\sigma_i\in\{0,1\}.$$ This is a system of $|\mathbf{S}_X|$ equations with $n$ unknowns. In the canonical form, we stack $\mathbf{x}_i$ to form a $|\mathbf{S}_X|\times n$ matrix, which has full row rank $r\leq n$ by definition. Therefore the system of equations should always admit a solution for any syndrome combination.

The most likely logical error $\bar{L}$ is then determined from the enumerator method by computing the logical error probabilities. For a straightforward application of the tensor network method introduced by~\cite{Cao2023}, we here only consider $\bar{L}$ that can be represented as a tensor product of single qubit operators in the physical Hilbert space. In Pauli stabilizer codes, all logical Pauli operators are transversal because Clifford unitaries map Pauli's to Pauli's. As logical Pauli operators form a nice unitary basis, it is sufficient to choose $\{\bar{L}\}$ in the Pauli basis. However, one needs to take extra care in the XP formalism because transversal logical gates supported by the code may not form a unitary basis in XP codes in general. For XP regular codes, one can obtain them from XP regular states. For $N=2^t$, we have shown that the rank-generator relation continues to hold. This allows us to identify transversal logical XP operators that form a unitary basis of logical operators.

\bibliographystyle{unsrtnat}
\bibliography{main_arxiv} 
\end{document}